\definecolor{tugreen}{HTML}{84b819}
\definecolor{pgrey}{rgb}{242, 242, 242}
\definecolor{ashgrey}{rgb}{0.7, 0.75, 0.71}
 	\definecolor{gainsboro}{rgb}{0.86, 0.86, 0.86}
 	\definecolor{lightyellow}{rgb}{1.0, 1.0, 0.88}
 	\definecolor{pastelgray}{rgb}{0.81, 0.81, 0.77}
 	\definecolor{timberwolf}{rgb}{0.86, 0.84, 0.82}
 	\definecolor{beige}{rgb}{0.96, 0.96, 0.86}
\titleformat{\chapter}{\fontfamily{qzc}\selectfont \Huge \color{tugreen}}{\thechapter}{1ex}{}
\titlespacing{\chapter}{0pt}{-2ex}{*1.5}
\newcommand{\R}{{\mathbb{R}}}         % \R f�r R (reelle Zahlen\right)
\newcommand{\E}{{\mathbb{E}}}
\newcommand{\bay}{\begin{array}}
\newcommand{\eay}{\end{array}}
\newcommand{\bqa}{\begin{eqnarray*}}
\newcommand{\eqa}{\end{eqnarray*}}
\newcommand{\bee}{\begin{eqnarray*}}
\newcommand{\eee}{\end{eqnarray*}}
\newcommand{\bea}{\begin{eqnarray*}}
\newcommand{\eea}{\end{eqnarray*}}
\newcommand{\bqan}{\begin{eqnarray}}
\newcommand{\eqan}{\end{eqnarray}}
\newcommand{\be}{\begin{eqnarray}}
\newcommand{\ee}{\end{eqnarray}}
\newcommand{\bit}{\begin{itemize}}
\newcommand{\eit}{\end{itemize}}
\newcommand{\ben}{\begin{enumerate}}
\newcommand{\een}{\end{enumerate}}
\newcommand{\beq}{\begin{equation}}
\newcommand{\eeq}{\end{equation}}
\newcommand{\bdes}{\begin{description}}
\newcommand{\edes}{\end{description}}
\newcommand{\btb}{\begin{tabular}}
\newcommand{\etb}{\end{tabular}}
\newcommand{\bcen}{\begin{center}}
\newcommand{\ecen}{\end{center}}
\newcommand{\bmp}{\begin{minipage}}
\newcommand{\emp}{\end{minipage}}
\newcommand{\vepsilon}{\vep}
\newcommand{\Cov}{\operatorname{{\it Cov}}}
\newcommand{\Var}{\operatorname{{\it Var}}}
\newcommand{\tr}{\operatorname{tr}}
\newcommand{\diag}{\operatorname{\it diag}}
\newcommand{\rank}{\operatorname{\it rank}}
\newcommand{\vc}{\boldsymbol{c}}
\newcommand{\vf}{\boldsymbol{f}}
\newcommand{\vx}{\boldsymbol{x}}
\newcommand{\vy}{\boldsymbol{y}}
\newcommand{\vA}{\boldsymbol{A}}
\newcommand{\vC}{\boldsymbol{C}}
\newcommand{\vI}{\boldsymbol{I}}
\newcommand{\vM}{\boldsymbol{M}}
\newcommand{\vP}{\boldsymbol{P}}
\newcommand{\vQ}{\boldsymbol{Q}}
\newcommand{\vR}{\boldsymbol{R}}
\newcommand{\vT}{\boldsymbol{T}}
\newcommand{\vX}{\boldsymbol{X}}
\newcommand{\vY}{\boldsymbol{Y}}
\newcommand{\vZ}{\boldsymbol{Z}}
\newcommand{\vbeta}{\boldsymbol{\beta}}
\newcommand{\vep}{\boldsymbol{\epsilon}}
\newcommand{\vmu}{\boldsymbol{\mu}}
\newcommand{\vSigma}{\boldsymbol{\Sigma}}
\newcommand{\veins}{{\bf 1}}
\newcommand{\vnull}{{\bf 0}}
\newcommand{\vUpsilon}{\boldsymbol{\Upsilon}}
\newcommand{\ind}{1\hspace{-0.7ex}1}
\newcommand{\Hypo}{\mathcal{H}_0}
\newtheoremstyle{Test1}% name of the style to be used
  {2 \baselineskip}% measure of space to leave above the theorem. E.g.: 3pt
  {1.5 \baselineskip}% measure of space to leave below the theorem. E.g.: 3pt
  {\itshape}% name of font to use in the body of the theorem
  {-0.0ex}% measure of space to indent
  {\fontfamily{ppl}\fontseries{l}\fontshape{n}}% name of head font
  {:}% punctuation between head and body
  {\newline}% space after theorem head; " " = normal interword space
   {}% Manually specify head
\theoremstyle{Test1}
\newtheorem{Sa}{Theorem}[section]
\newtheorem{theorem}{Theorem}[section]
\newtheorem{re}[Sa]{Remark}
\newtheorem{Le}[Sa]{Lemma}
\newcommand{\Lan}{\mathcal{O}}
\newcolumntype{x}[1]{!{\centering\arraybackslash\vrule width #1}}
\renewenvironment{proof}[1][\proofname]{\par
  \pushQED{\qed}%
  \fontfamily{ppl}\fontseries{m}\fontshape{it} \topsep6\p@\@plus6\p@\relax
  \trivlist
  \item[\hskip\labelsep
        \bfseries
    #1\@addpunct{:}]\ignorespaces
}{%
  \popQED\endtrivlist\@endpefalse
}
\begin{document}

%\setupwhitespace[medium]

\title{\Large \bf Quadratic Form based Multiple Contrast Tests for Comparison of Group Means}
\author{Paavo Sattler$^{1}$, Markus Pauly$^{1,2}$ and Merle Munko$^{3}$
\\[0.4ex] }

\vspace{-8ex}
  \date{}
\maketitle \vspace*{0.25ex}
\begin{center}\noindent${}^{1}$ {Department of Statistics, TU Dortmund University, Joseph-von-Fraunhofer-Straße 2-4, 44227 Dortmund, Germany\\\mbox{ }\hspace{1 ex}email: paavo.sattler@tu-dortmund.de }\\
 \noindent${}^{2}$ {{Research Center Trustworthy Data Science and Security, UA Ruhr, Joseph-von-Fraunhofer-Straße 25, 44227 Dortmund, Germany}}\\
  \noindent${}^{3}$ {Department of Mathematics, Otto-von-Guericke University Magdeburg, Germany}
\vspace*{8.25ex}
 \end{center}
\begin{abstract}
Comparing the mean vectors across different groups is a cornerstone in the realm of multivariate statistics, with quadratic forms commonly serving as test statistics. 
However, when the overall hypothesis is rejected, identifying specific vector components or determining the groups among which differences exist requires additional investigations.

Conversely, employing multiple contrast tests (MCT) allows conclusions about which components or groups contribute to these differences. However, they come with a trade-off, as MCT lose some benefits inherent to quadratic forms.
In this paper, we combine both approaches to get a quadratic form based multiple contrast test that leverages the advantages of both. To understand its theoretical properties, we investigate its asymptotic distribution in a semiparametric model. 
We thereby focus on two common quadratic forms — the Wald-type statistic and the ANOVA-type statistic — although our findings are applicable to any quadratic form. 

Furthermore, we employ Monte-Carlo and resampling techniques to enhance the test's performance in small sample scenarios.
Through an extensive simulation study, we assess
the finite sample properties of our proposed tests against existing alternatives, highlighting their advantages.
\end{abstract}

\noindent{\textbf{Keywords:}} 
Bootstrap, MANOVA, Monte Carlo techniques, Multiple Testing, Resampling.

\vfill
\vfill

\section{Motivation and introduction}\label{int}

In multivariate analysis, comparing the mean vectors of two or more groups is one of the most common 
null hypothesis, also known as one-way MANOVA. 
As traditional methods like Hotelling's $T^2$ or Wilk's Lambda tests \citep{johnson2007} rely on restrictive assumptions like normality and covariance homogeneity, we have seen the development of alternative approaches to address these limitations. For example, \cite{krishnamoorthy2010} or \cite{smaga2017} proposed methods for the two sample case while \cite{konietschke2015,roy2015}, \cite{smaga2017}, \cite{hu2017,friedrich2017a} or \cite{sattler2018} studied one-way and more
general hypotheses in complex MANOVA designs without postulating normality or equal covariance matrices. 

An appropriate way to investigate hypotheses regarding vector-valued parameters, like expectation vectors, are quadratic forms. This already holds for the common Hotelling's $T^2$ and Wilk's Lambda tests but also applies for most of the generalizations mentioned above as well as other multivariate hypotheses, e.g. in completely nonparametric settings \citep{brunner2016,Brunner2019,dobler2020,rubarth2022}. 
In these papers, only one or a few specific quadratic forms are examined. More recently, however, \cite{sattler2021} and \cite{baumeister2024} have expanded the analysis to encompass a broader range of general quadratic forms. We will also do this in the present paper.

One of the key strengths of quadratic forms as test statistics is that they yield a univariate value out of multivariate input. However, this also introduces a significant limitation: When the hypothesis of equal mean vectors is rejected,  the univariate test statistic does not reveal which specific components are responsible for the differences. As a result, this often necessitates further comparisons with multiplicity adjustments, a process that can be notably inefficient.

In contrast, so-called multiple contrast tests (MCT) offer a compelling approach by not only detecting differences in parameters but, at the same time, draw conclusions where the differences occur. MCT usually use a maximum of univariate t-type statistics as test statistics and equicoordinate quantiles from multivariate normal, t- or resampling distributions as critical values, see, e.g., \cite{hasler2008,konietschke2012}, \cite{konietschke2013,gunawardana2019,umlauft2019,noguchi2020} for some examples. However, for multivariate settings, the component-wise dependence is only taken into account through the estimated correlation of these t-test statistics, and not within the test statistics themselves.
Thus, it would be beneficial to integrate quadratic forms with MCT as this combination could yield a test procedure that leverages the advantages of both. 

The present paper aims to propose such a quadratic form based MCT. Thereby, the procedures will allow for general quadratic forms and specifically address the question of where differences occur if the global hypothesis is rejected. For ease of presentation, we will initially focus on the one-way hypothesis of equal group expectation vectors with subsequent identification of responsible components or groups. However, we later explain how the method can be extended to more general settings and hypotheses. 
We examine the large sample properties of all procedures by means of asymptotic theory. Here, it turns out that the commonly used equicoordinate quantiles for MCT with t-type statistics will not lead to appropriate critical values for MCT based on quadratic forms, making adoptions of other approaches necessary \citep{beran1988balanced, buehlmann1998,munko2024,munko2023}. The small sample performance of the resulting tests will be investigated in Monte Carlo simulations.

The paper is organized as follows: 
In the subsequent section, the statistical model will be introduced, together with the underlying assumptions. Afterwards, focusing on the equality of expectation vectors, the basic concept of multiple contrast tests and the test statistics are proposed, and the asymptotic distributions are derived (Section~\ref{Equality}). Subsequently, this is generalized for general partitions of global hypotheses without concentrating on the responsible components for rejection. In Section~\ref{Resampling}, a resampling strategy is used to generate quantiles with the aim to improve our tests' small sample behaviour. This is demonstrated in Section~\ref{Simulations}, which contains simulation results on type-I-error control and power for different alternatives, followed by an illustrative application for a real EEG-data set (Section~\ref{Example}). In \Cref{Conclussion} we finally draw conclusions and mention potential future applications. All proofs are deferred to a technical supplement, where additional details and simulations can be found.

\section{Model}\label{Model}
To ensure broad applicability, we consider a general semiparametric model given  by independent $d$-dimensional random vectors 
\[\vX_{ik}= \vmu_i + \vep_{ik}.\]
Here, the index $k=1, \dots,n_i$ represents the subject on which $d$-variate observations are measured while the index $i=1, \dots, a$ refers to the respective group. Moreover, $\vmu_i = (\mu_{ij})_{j=1}^d$ denotes the expectation vector in group $i$ while $\vep_{ik} = (\epsilon_{ikj})_{j=1}^d$ is the corresponding random error of subject $k$ in that group. 
We study situations with $a\geq 2$ groups and assume that for fixed $i\in\{1,...,a\}$ and $j\in\{1,...,d\}$ the random variables $\epsilon_{ikj}$ are identical distributed with $\E(\epsilon_{ikj})=0$. It is important to note that by splitting up the indices, this setting implies factorial designs inside a group as in \cite{konietschke2015}.

We allow unbalanced sample sizes $n_1,...,n_a$ and denote with $N=\sum_{i=1}^a n_i$ the total sample size. For our asymptotic investigations, we employ two additional assumptions:\vspace{0.1cm}
\begin{itemize}
    \item[(A1)] $\frac{n_i}{N}\to \kappa_i\in(0,1],\quad i=1,...,a$\vspace{0.1cm}
    \item[(A2)] $\vSigma_i=\Cov(\vepsilon_{ik})\geq 0$,\quad $i=1,...,a$.\vspace{0.1cm}
\end{itemize}
Here, (A1) 
ensures that no group is negligible, while assumption (A2) guarantees the existence of second moments. 
Within our test statistics, we use different kinds of pooled mean vectors, a group-wise mean vector $\overline \vX_{i\cdot}=\frac{1}{n_i}\sum_{k=1}^{n_i}\vX_{ik}$ with corresponding $\overline \vX=(\overline \vX_{1\cdot}^\top,...,\overline \vX_{a\cdot}^\top)^\top$ and a component-wise mean vector 
\[\overline \vX^{(j)}=\left(\frac{1}{n_1}\sum\limits_{k=1}^{n_1}X_{1kj},...,\frac{1}{n_a}\sum\limits_{k=1}^{n_a}X_{akj}\right)^\top.\]
{Here and in the following the superscript ${}^\top$ indicates a transposed matrix or vector.}

Assumptions (A1) and (A2) guarantee that the normalized version of the latter $\sqrt{N}\ \overline \vX^{(j)}$ possess the asymptotic block-diagonal covariance matrix $\vSigma^{(j)}:=\diag(\kappa_1^{-1}(\vSigma_{1})_{jj},...,\kappa_a^{-1}(\vSigma_{a})_{jj})=\lim\limits_{N\to \infty}\Var(\sqrt{N} \overline \vX^{(j)})\geq 0$ for each $j=1,\dots,d$.

In the following, we use $\stackrel{\mathcal{P}}{\to}$ to denote convergence in probability and $\stackrel{\mathcal{D}}{\to}$ for convergence in distribution, both as $N\to \infty$.

\section{Global and Multiple Testing in one-way MANOVA}\label{Equality}

We first consider the one-way MANOVA hypothesis 
\[\mathcal H_0:
(\mu_{1j})_{j=1}^{d}=\dots=(\mu_{aj})_{j=1}^{d}
\]
of equal expectation vectors 
$\vmu_{i}=(\mu_{ij})_{j=1}^{d}$. 
We refer to it as \emph{global hypothesis} as it is the intersection of the local (component-wise) hypotheses $\mathcal H_0^{(j)}:\mu_{1j}=\dots=\mu_{aj}$, $j=1,\dots, d$. 
With the Kronecker-product and the centering matrix $\vP_a= \vI_a-\veins_a\veins_a^\top/a$, the global hypotheses can be formulated using the hypothesis matrix $\vC=\vP_a\otimes\vI_d$ and $\vmu=(\vmu_1^\top,...,\vmu_a^\top)^\top$ as  $\mathcal H_0:\vC\vmu=\vnull_{ad}$. {Thereby, $\vI_a$ denotes the $a\times a$ identity matrix and $\veins_a$ is an $a$-dimensional vector of $1$s. }
We will later allow for general hypothesis matrices (Section~\ref{general}).

From our model assumptions, it follows that
$\vT:=\sqrt{N}\ (\overline \vX-\vmu)\stackrel{\mathcal{D}}{\to}\vZ\sim\mathcal{N}_{ad}(\vnull_{ad},\vSigma)$ 
by the multivariate central limit theorem 
with $\vSigma:=\bigoplus_{i=1}^a \kappa_i^{-1}\vSigma_i$, where $\bigoplus$ denotes the direct sum (i.e. it defines the block diagonal matrix consisting of the entries). Therefore, under the global null hypothesis $\mathcal H_0$, it holds that $\sqrt{N}\vC\overline \vX\stackrel{\mathcal{D}}{\to}\vC\vZ\sim\mathcal{N}_{ad}(\vnull_{ad},\vC\vSigma\vC^\top)$. This motivates to consider test statistics in quadratic forms of $\sqrt{N}\vC\overline \vX$. 
Common examples cover the Wald-Type-Statistic \[WTS(\vSigma):= N\cdot(\vC\overline\vX)^\top (\vC\vSigma\vC^\top)^+(\vC\overline\vX)\] using a Moore-Penrose inverse (denoted by the superscript ${}^+$) or the ANOVA-type statistic \[ATS(\vSigma):= N\cdot (\vC\overline\vX)^\top (\vC\overline\vX)/\tr(\vC\vSigma\vC^\top)\] 
as well as their empirical counterparts, where the unknown covariance matrix is substituted by a consistent estimator $\widetilde \vSigma$, see, e.g., \cite{brunner1997,brunner2001b,konietschke2015,friedrich2017,sattler2025}.
 In our setting, the empirical covariance matrices for the i-th group $\widehat \vSigma_i= \sum_{k=1}^{n_i} (\vX_{ik} - \overline \vX_{i\cdot})(\vX_{ik} - \overline \vX_{i\cdot})^\top/(n_i-1)$, are consistent estimators for $\vSigma_i$. Thus, $\widehat \vSigma=N\cdot\bigoplus_{i=1}^a n_i^{-1} \widehat \vSigma_i$ is a suitable choice for estimating $ \vSigma$. 
 Then, we obtain asymptotic correct level $\alpha$ tests for the global hypothesis by comparing these quadratic forms with adequate critical values which correspond to the asymptotic distribution of the quadratic forms under the null hypothesis.
For example, the WTS, asymptotically follows a $\chi_{\rank(\vC)}^2$ distribution under the null \citep{akritas1997} leading to $\chi_{\rank(\vC)}^2$-quantiles while also permutation approaches showed good results \citep{ditzhaus2022}. The ATS  asymptotically follows a weighted sum of independent $\chi_1^2$ distributed random variables under the null hypothesis. Here, quantiles can, e.g.,  be computed via Monte-Carlo  \citep{sattler2023}.
For general quadratic forms, both Monte-Carlo and bootstrap techniques are in widespread use, see, e.g. \cite{sattler2022} or \cite{baumeister2024}. As we focus on general quadratic forms, we will study Monte-Carlo and bootstrap approaches in the sequel. This will lead to a variety of procedures for inferring the global hypothesis.

 However, in case of a rejection, additional investigations would be necessary to identify the responsible components. That is, we would need to test $\mathcal H_0^{(j)}$ for $j=1,\dots,d$. To avoid type-I-error cumulation, we propose multiple contrast tests in quadratic forms that automatically guarantee adequate multiplicity adjustment.

To this end, we first need test statistics for the local hypotheses $\mathcal H_0^{(j)}:\{\mu_{1j}=\dots=\mu_{aj}\} = \{
 \mu_{ij} - \overline{\mu}_{\cdot j}
 =0 \text{ for all } i=1,\dots,a 
 \}$ for $j=1,\dots,d$, where 
$\overline{\mu}_{\cdot j} = \frac{1}{a}\sum_{r=1}^a\mu_{rj}$.
Here, quadratic forms in the projected $j$-th mean vector $\mathbf{P}_a \overline \vX^{(j)}$, the natural estimator for $(\mu_{1j} - \overline{\mu}_{\cdot j},\dots,\mu_{aj} - \overline{\mu}_{\cdot j})^\top$, are reasonable test statistics. 

To not limit us to a specific class (e.g., only Wald-type statistics), we follow \cite{sattler2022} and study general quadratic forms (with $0/0:=0$)

\[\widetilde Q_{Nj}={N}\cdot \frac{[\vP_a\overline \vX^{(j)} ]^\top \widetilde \vM(\vP_a,\widetilde\vSigma^{(j)})[\vP_a\overline\vX^{(j)}]}{\sqrt{2\tr([\vP_a^\top \widetilde\vM(\vP_a,\widetilde\vSigma^{(j)})\vP_a \widetilde\vSigma^{(j)}]^2)} } \cdot \ind\left( \tr\left([\vP_a^\top \widetilde\vM(\vP_a,\widetilde\vSigma^{(j)})\vP_a \widetilde\vSigma^{(j)}]^2\right)>0\right) \]
to infer $H_0^{(j)}$.
Here, $\widetilde\vM$ 
is a function taking values in $\R^{a\times a}$ and fulfilling $\widetilde\vM^\top=\widetilde\vM$, $\widetilde\vM\geq 0$ and $\widetilde\vM(\vP_a, \widetilde\vSigma^{(j)})\stackrel{\mathcal{P}}{\to} \widetilde\vM(\vP_a,\vSigma^{(j)})$ for an estimator $\widetilde\vSigma^{(j)}\stackrel{\mathcal{P}}{\to} \vSigma^{(j)}$, where a possible choice would be $\widehat \vSigma^{(j)}=N\cdot\diag(n_1^{-1}(\widehat \vSigma_1)_{jj},...,n_a^{-1}(\widehat \vSigma_a)_{jj})$.
Setting $\widetilde\vM(\vP_a,\vSigma^{(j)}):= (\vP_a\vSigma^{(j)}\vP_a^\top)^+$ we obtain a WTS for $H_0^{(j)}$, while $\widetilde\vM(\vP_a,\vSigma^{(j)}):= \vI_a/\tr(\vP_a\vSigma^{(j)}\vP_a^\top)$ leads to the ATS. Other choices could be the modified Anova-Type-statistic (MATS) from \cite{friedrich2017a} or different ATS versions as considered in \cite{Brunner2019}.

\begin{re}
 Since $\widetilde Q_{Nj}$ is already standardized, it is unnecessary,  and also computationally inefficient, to calculate a standardized version of the ATS. We can therefore work with the quadratic form version using  $\widetilde\vM(\vP_a,\widetilde \vSigma^{(j)})=\vI_{a}$, and similar simplifications are also possible for other quadratic forms.
\end{re}

The subsequent Lemma derives the marginal and joint limit distributions of $\widetilde Q_{N1},\dots,\widetilde Q_{Nd}$.

\begin{Le}\label{DistComponentsEqual}
  Let  $\widetilde v_{jj}=2\tr([\vP_a^\top \widetilde\vM(\vP_a,\vSigma^{(j)})\vP_a \vSigma^{(j)}]^2)>0$. Then, under the $j$th local hypothesis $H_0^{(j)}$, the following convergence in distribution holds
\[\widetilde Q_{Nj}\stackrel{\mathcal D}{\to}\widetilde Q_{j}\stackrel{\mathcal D}{=}\frac 1 {\sqrt{\widetilde v_{jj}}}\sum\limits_{i=1}^a {\lambda_i \Upsilon_i}\qquad (j=1,\dots,d)
\]
 with $\Upsilon_1,...,\Upsilon_a\stackrel{i.i.d.}{\sim}\chi_1^2$ and $\lambda_1,...,\lambda_a$ the eigenvalues of $\left(\vSigma^{(j)^{1/2}}\vP_a^\top\widetilde\vM(\vP_a,\vSigma^{(j)})\vP_a\vSigma^{(j)^{1/2}}\right)$.

In addition, we have for $\vZ = (Z_{11},Z_{12},...,Z_{1d},Z_{21},...,Z_{ad})^\top \sim \mathcal{N}_{ad}(\vnull_{ad}, \vSigma)$ under the global null hypothesis $\mathcal H_0$ convergence in distribution 
 $$\widetilde \vQ_N=(\widetilde Q_{N1},...,\widetilde Q_{Nd})^\top\stackrel{\mathcal{D}}{\to}\widetilde\vQ=(\widetilde Q_1,..., \widetilde Q_d)^\top := \left(\widetilde v_{j j}^{-1/2}\cdot  [\vP_a\vZ^{(j)} ]^\top \widetilde\vM(\vP_a,\vSigma^{(j)}) \vP_a\vZ^{(j)} \right)_{j \in \{1,\dots,d\}}$$ with $\vZ^{(j)} := (Z_{1j},...,Z_{aj})^\top$. 
Moreover, if $\mathcal T \subset \{1,\dots,d\}$ denotes the indices of the true local hypotheses, we have
    $(  \widetilde Q_{Nj})_{j\in \mathcal T} \stackrel{\mathcal D}{\to}
    (\widetilde Q_j)_{j \in \mathcal{T}}.$
\end{Le}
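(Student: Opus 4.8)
The plan is to reduce everything to the joint central limit theorem $\vT=\sqrt{N}(\overline\vX-\vmu)\stackrel{\mathcal{D}}{\to}\vZ\sim\mathcal N_{ad}(\vnull_{ad},\vSigma)$ recorded above, combined with Slutsky's lemma and the continuous mapping theorem, and finally to identify the Gaussian quadratic-form limit as a weighted $\chi^2$-sum via a spectral decomposition. First, for the marginal statement under the $j$th local hypothesis: since $\mathcal H_0^{(j)}$ says that $\vmu^{(j)}:=(\mu_{1j},\dots,\mu_{aj})^\top$ is constant across groups, it lies in the span of $\veins_a$, so $\vP_a\vmu^{(j)}=\vnull_a$ and hence $\sqrt{N}\,\vP_a\overline \vX^{(j)}=\vP_a\sqrt{N}(\overline \vX^{(j)}-\vmu^{(j)})\stackrel{\mathcal{D}}{\to}\vP_a\vZ^{(j)}$, where $\vZ^{(j)}\sim\mathcal N_a(\vnull_a,\vSigma^{(j)})$ is the relevant sub-vector of $\vZ$. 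Writing the numerator of $\widetilde Q_{Nj}$ as $[\sqrt{N}\vP_a\overline \vX^{(j)}]^\top\widetilde\vM(\vP_a,\widetilde\vSigma^{(j)})[\sqrt{N}\vP_a\overline \vX^{(j)}]$ and using $\widetilde\vM(\vP_a,\widetilde\vSigma^{(j)})\stackrel{\mathcal{P}}{\to}\widetilde\vM(\vP_a,\vSigma^{(j)})$ (from $\widetilde\vSigma^{(j)}\stackrel{\mathcal{P}}{\to}\vSigma^{(j)}$ and the assumed continuity of $\widetilde\vM$), a joint Slutsky argument with the map $(\vv,\vM)\mapsto\vv^\top\vM\vv$ yields numerator convergence to $[\vP_a\vZ^{(j)}]^\top\widetilde\vM(\vP_a,\vSigma^{(j)})[\vP_a\vZ^{(j)}]$.

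Next I would handle the random normalization and the indicator. By the same consistency together with continuity of trace, product and squaring, the quantity inside the square root converges in probability to $\widetilde v_{jj}>0$; in particular the event inside $\ind(\cdot)$ has probability tending to one, so the $0/0$ convention is irrelevant asymptotically and the indicator tends to $1$ in probability. A final application of Slutsky's lemma then gives $\widetilde Q_{Nj}\stackrel{\mathcal{D}}{\to}\widetilde v_{jj}^{-1/2}[\vP_a\vZ^{(j)}]^\top\widetilde\vM(\vP_a,\vSigma^{(j)})[\vP_a\vZ^{(j)}]$, which is the limit random variable $\widetilde Q_j$.

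To identify this limit with the claimed weighted $\chi^2$-sum, I would substitute $\vZ^{(j)}\stackrel{\mathcal{D}}{=}\vSigma^{(j)^{1/2}}\vW$ with $\vW\sim\mathcal N_a(\vnull_a,\vI_a)$, so that the quadratic form becomes $\vW^\top\vB\vW$ with the symmetric positive semidefinite matrix $\vB=\vSigma^{(j)^{1/2}}\vP_a^\top\widetilde\vM(\vP_a,\vSigma^{(j)})\vP_a\vSigma^{(j)^{1/2}}$ (positivity from $\widetilde\vM\geq 0$). Diagonalizing $\vB=\vU\vLambda\vU^\top$ with $\vU$ orthogonal and using the rotational invariance of the law of $\vW$ gives $\vW^\top\vB\vW\stackrel{\mathcal{D}}{=}\sum_{i=1}^a\lambda_i\Upsilon_i$ with independent $\Upsilon_i\sim\chi_1^2$ and $\lambda_1,\dots,\lambda_a$ the eigenvalues of $\vB$, exactly the stated representation.

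Finally, the joint convergence under $\mathcal H_0$ and the convergence on $\mathcal T$ follow by running the same argument componentwise but starting from the full joint limit $\vT\stackrel{\mathcal{D}}{\to}\vZ$: under $\mathcal H_0$ every $\vP_a\vmu^{(j)}$ vanishes, so the stacked vector $(\sqrt{N}\vP_a\overline \vX^{(j)})_{j=1}^d$ converges jointly to $(\vP_a\vZ^{(j)})_{j=1}^d$, and applying the $\R^{ad}\to\R^d$ map of the previous steps together with the (deterministic) limits of the normalizations and indicators yields $\widetilde\vQ_N\stackrel{\mathcal{D}}{\to}\widetilde\vQ$ by continuous mapping and Slutsky. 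The statement for $\mathcal T$ is the identical computation restricted to the coordinates $j\in\mathcal T$, for which $\vP_a\vmu^{(j)}=\vnull_a$ still holds, so no global assumption is needed. The main technical point to watch is the joint Slutsky/continuous-mapping bookkeeping: the limiting map must be continuous at the limit point, which is guaranteed precisely by the hypothesis $\widetilde v_{jj}>0$ that keeps each denominator bounded away from zero and forces each indicator to one.
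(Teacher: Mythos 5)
Your proposal is correct and takes essentially the same route as the paper: the paper proves the general result (\Cref{DistComponentsGeneral}) via the multivariate CLT, consistency of the plug-in normalizers, a joint Slutsky/continuous-mapping argument applied to stacked pairs of vectors through a bilinear map, and then the representation theorem for quadratic forms, and it obtains \Cref{DistComponentsEqual} as the special case $\vC_j=\vP_a\vA_j$, $\vbeta_j=\vnull_a$ with $\vA_j$ the matrix satisfying $\overline{\vX}^{(j)}=\vA_j\overline{\vX}$. Your direct argument for the equality hypothesis mirrors that proof step for step, merely making explicit the spectral decomposition that the paper delegates to the cited representation theorem and the (correct) observation that the indicator tends to one when $\widetilde v_{jj}>0$.
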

The condition $\widetilde v_{jj}>0$ ensures that $\Var(\widetilde Q_{j})=1$, see  \cite{Mathai1992}. This is, e.g., satisfied if $\vP_a^\top\widetilde\vM(\vP_a,\vSigma^{(j)})\vP_a\neq\vnull_{a\times a}$ and $\Var(X_{ij1})>0$ for each $i=1,...,a$ resulting in $\vSigma^{(j)}>0$. In what follows we assume that $\widetilde v_{jj}>0$
for all $j=1,...,d$. 
Moreover, it is important to note that the covariance matrix of the joint limit vector $\widetilde \vQ$, denoted as $\vR$, is not necessarily diagonal due to potential dependencies between different components from the same group. The concrete form of 
$\vR$ is given in equation \eqref{eq:CovarianceMatrix} of the supplement \eqref{eq:CovarianceMatrix}.

 In the case of a WTS, the limit distribution of the components 
 $\widetilde Q_{Nj}$ simplifies to $\chi_{a-1}^2$. However, as demonstrated by \cite{dickhaus2015} using straightforward and intuitive examples, vastly different distributions can result in the same type of marginal distribution, even when sharing the same dependence structure $\vR$. In this case, \cite{dickhaus2015} derive the distribution of the limit vector $\widetilde\vQ$, called multivariate chi-square distribution of generalized Wishart-type, and show that it is in general not possible to calculate the corresponding quantiles. 
Nevertheless, it is reasonable to work with the standardized vector $\widetilde\vQ_N$, and it is important to mention that its marginal limit distributions are known, but usually the distribution of $\widetilde\vQ$ is not. However, this leads to major differences compared to most existing multiple contrast testing procedures as in \cite{rubarth2022a}, \cite{rubarth2022} or \cite{gunawardana2019}. In particular, 
 existing multiple contrast testing procedures for the one-way (M)ANOVA are 
 simultaneous testing procedures that are
 based on the vector of t-test statistics $\widetilde \vT = (\widetilde T_1,\dots,\widetilde T_{ad})$ with { $\widetilde T_j=\vc_{j\bullet}\vT/\sqrt{\vc_{j\bullet}^\top \vSigma \vc_{j\bullet}}=\sqrt{N}\vc_{j\bullet}(\overline \vX-\vmu)/\sqrt{\vc_{j\bullet}^\top \vSigma \vc_{j\bullet}}$} and $\vC = (\vc_{1\bullet}^\top,...,\vc_{ad\bullet}^\top)^\top$ 
 
 Under the global null hypothesis $\vC\vmu=\vnull$, $\widetilde \vT$ follows an asymptotic multivariate normal distribution and its equicoordinate quantile  (see for example \cite{konietschke2012} for more details) is used as critical value for the local (using the test statistic 
 $|\widetilde T_j|$)
 and global (using the test statistic $\max_j |\widetilde T_j|$) test decision.
 {In contrast to other procedures, such as multiple testing with step-wise adjustments, this allows to obtain simultaneous confidence intervals for the effects $\vc_{j\bullet}^\top \vmu$ by inverting the local tests.} Moreover, this leads to powerful test decisions (see for example \cite{konietschke2013} or \cite{harrar2022}) especially as  
 all marginals have the same standard normal limit distribution.
 In our case with quadratic forms, this is no longer true. Although the limit $\widetilde \vQ$ is standardized, and therefore all components have the same variance, the marginals are in general not equal. In fact, Lemma~\ref{DistComponentsEqual} shows that the distribution depends on the eigenvalues of a complex matrix. 
  Using equicoordinate quantiles from an approximated limit distribution (e.g. through Monte Carlo approximations of the limit) 
   would affect the ability to detect derivations from the null hypothesis, and therefore, its power. Hence, we use a different Monte-Carlo approach that is based on ideas for the bootstrap from \cite{munko2024} to get adequate (empirical) quantiles and, thus, critical values.  
{The concrete algorithm } is described in detail in the supplement. Here, we sketch it shortly:

We generate $\vZ_1^{MC}\sim\mathcal{N}_{ad}(\vnull_{ad},\widehat\vSigma)$. Since $\widetilde\vQ$ can be seen as a function of such random vectors (see the proofs in the supplement for more details) we calculate  $\widetilde\vQ^{MC,1}$ based on $\vZ_1^{MC}/\sqrt{N}$. Repeating this $B$ times (e.g. B=$10,000$) we obtain  $\widetilde\vQ^{MC}=(\widetilde\vQ^{MC,1},...,\widetilde\vQ^{MC,B})$. 
{Denote by $\widetilde q_{j,\beta}^{MC}$ the empirical $(1-\beta$) quantile  for {$\widetilde\vQ_j$}, $j=1,...,d$, based on the above Monte-Carlo realization. Then, we determine an appropriate level $\beta$ to control the family-wise type-I error rate, through
\[\widetilde \beta= \max\left(\beta \in \left\lbrace 0,\frac{1}{B},..., \frac{B-1}{B}\right\rbrace\Big\lvert \frac{1}{B} \sum\limits_{b=1}^B \max\limits_{j=1,...,d}\left(\ind\left(\vQ_j^{MC,b}>q_{j,\beta}^{MC}\right)\right)\leq \alpha\right),\]
i.e., the maximum local level, which results in a global level of $\alpha$. Now, an asymptotic correct test for the global hypotheses can be defined, using these quantiles.}

\begin{Le}\label{LemmaMCEq}
{F}or $\alpha \in(0,1)$ and the empirical quantiles
$\widetilde q_{1,B}^{MC},...,\widetilde q_{d,B}^{MC}$ of $\widetilde\vQ^{MC}$, we have that  
\[\varphi_{\widetilde \vQ}^{MC}(N,B):=\max\limits_{j=1,...,d} \ind\left\{\left(\frac{\widetilde Q_{Nj}}{\widetilde q_{j,B}^{MC}}\right)>1\right\}\]
is an asymptotic level $\alpha$ test for the global hypothesis $\mathcal H_0:\vmu_{1}=...=\vmu_{a}$, i.e., for all sequences $B_N\to\infty$ as $N\to\infty$, we have
$$ \lim_{N\to\infty} \E\left[\varphi_{\widetilde\vQ}^{MC}(N,B_N)\right] = \alpha $$
under $\mathcal H_0$. Moreover, the test controls the family-wise error rate asymptotically in the strong sense, i.e. $$ \lim_{N\to\infty} \E\left[\max\limits_{j\in\mathcal T} \ind\left\{\left(\frac{\widetilde Q_{Nj}}{\widetilde{q}_{j,B_N}^{MC}}\right)>1\right\}\right] \leq \alpha $$
for all index subsets $\mathcal T\subset\{1,\dots,d\}$ of true local hypotheses. 
\end{Le}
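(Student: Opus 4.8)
The plan is to build the argument from three ingredients: the joint weak convergence of the test-statistic vector established in \Cref{DistComponentsEqual}, the consistency of the Monte-Carlo quantiles towards the corresponding quantiles of the limit distribution, and a Slutsky-type combination of the two via the continuous mapping theorem. Throughout, let $\widetilde{q}_1,\dots,\widetilde{q}_d$ denote the deterministic quantiles obtained by applying the construction of \cite{munko2023b} to the limiting vector $\widetilde{\vQ}$, so that by their defining property $\PP(\max_{j=1,\dots,d}\widetilde{Q}_j/\widetilde{q}_j>1)=\alpha$; each $\widetilde{Q}_j/\widetilde{q}_j$ has a continuous distribution, being a $\chi^2_{a-1}$ variable for the WTS and a weighted sum of independent $\chi^2_1$ variables in general by \Cref{DistComponentsEqual}.

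First I would establish that for every sequence $B_N\to\infty$ one has $\widetilde{q}_{j,B_N}^{MC}\stackrel{\mathcal{P}}{\to}\widetilde{q}_j$ for each $j$; this is the step I expect to be the main obstacle, since it couples the two limits $N\to\infty$ and $B_N\to\infty$. I would decompose $\widetilde{q}_{j,B_N}^{MC}$ into the population quantile $\widetilde{q}_j(\widehat{\vSigma})$ associated with the exact law $\mathcal N_{ad}(\vnull_{ad},\widehat{\vSigma})$ (the $B=\infty$ idealisation) plus a Monte-Carlo sampling error. The sampling error vanishes as $B_N\to\infty$ by a Glivenko-Cantelli argument for the empirical quantile map of \cite{munko2023b}, while $\widetilde{q}_j(\widehat{\vSigma})\stackrel{\mathcal{P}}{\to}\widetilde{q}_j$ follows from $\widehat{\vSigma}\stackrel{\mathcal{P}}{\to}\vSigma$ — valid since the within-group empirical covariances are consistent — together with the continuity of that quantile map in the covariance parameter and the continuity of the limiting distribution functions.

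With quantile consistency in hand, the level statement follows quickly. Since $\widetilde{\vQ}_N\stackrel{\mathcal{D}}{\to}\widetilde{\vQ}$ under $\mathcal H_0$ and $\widetilde{q}_{j,B_N}^{MC}\stackrel{\mathcal{P}}{\to}\widetilde{q}_j$ with constant limit, Slutsky's theorem yields $(\widetilde{Q}_{Nj}/\widetilde{q}_{j,B_N}^{MC})_{j=1}^d\stackrel{\mathcal{D}}{\to}(\widetilde{Q}_j/\widetilde{q}_j)_{j=1}^d$. The map $\vx\mapsto\ind\{\max_j x_j>1\}$ is continuous almost everywhere with respect to the limit law because $\PP(\max_j\widetilde{Q}_j/\widetilde{q}_j=1)=0$, so the Portmanteau theorem gives $\E[\varphi_{\widetilde{\vQ}}^{MC}(N,B_N)]=\PP(\max_j\widetilde{Q}_{Nj}/\widetilde{q}_{j,B_N}^{MC}>1)\to\PP(\max_j\widetilde{Q}_j/\widetilde{q}_j>1)=\alpha$.

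Finally, for strong control I would fix an index set $\mathcal T\subset\{1,\dots,d\}$ of true local hypotheses. Under this partial configuration the within-group covariance estimators remain consistent, so the quantile consistency of the previous paragraph is unaffected, and \Cref{DistComponentsEqual} supplies $(\widetilde{Q}_{Nj})_{j\in\mathcal T}\stackrel{\mathcal{D}}{\to}(\widetilde{Q}_j)_{j\in\mathcal T}$ with the same joint law as under $\mathcal H_0$. Repeating the Slutsky-and-continuous-mapping step over the restricted maximum gives $\E[\max_{j\in\mathcal T}\ind\{\widetilde{Q}_{Nj}/\widetilde{q}_{j,B_N}^{MC}>1\}]\to\PP(\max_{j\in\mathcal T}\widetilde{Q}_j/\widetilde{q}_j>1)$, and the proof closes by the elementary monotonicity $\max_{j\in\mathcal T}\widetilde{Q}_j/\widetilde{q}_j\le\max_{j=1,\dots,d}\widetilde{Q}_j/\widetilde{q}_j$, whereby this limit is at most $\PP(\max_{j=1,\dots,d}\widetilde{Q}_j/\widetilde{q}_j>1)=\alpha$.
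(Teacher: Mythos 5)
Your proposal is correct, and its skeleton matches the paper's: consistency of the calibrated quantiles, then Slutsky plus a Portmanteau/continuous-mapping step for the level statement, then convergence of the restricted maximum and monotonicity for strong FWER control (the paper's closing display is literally your final inequality chain). The differences lie in how the key middle step is obtained. First, the paper never proves \Cref{LemmaMCEq} directly: it reduces it to the general-hypotheses result \Cref{LemmaMC} by writing $\overline\vX^{(j)}=\vA_j\overline\vX$ and setting $\vC_j=\vP_a\vA_j$, $\vbeta_j=\vnull_a$, so that one argument covers the Monte-Carlo, parametric bootstrap and wild bootstrap tests simultaneously. Second, and more substantively, the quantile consistency $\widetilde q_{j,B_N}^{MC}\stackrel{\mathcal P}{\to}\widetilde q_j$ with $\PP\bigl(\max_j \widetilde Q_j/\widetilde q_j>1\bigr)=\alpha$ is not argued from scratch in the paper; it is imported from Lemma~S8 (via Lemma~S7 and Remark~1) of \cite{munko2024multipletestsrestrictedmean}, applied with $\varepsilon_N=1/B_N$, $F$ the distribution function of the limit vector and $F_N$ the empirical distribution function of the Monte-Carlo replicates, the applicability resting exactly on the continuity and strictly increasing marginals of $F$ on $[0,\infty)$ that you also invoke. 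Your replacement for that citation --- splitting $\widetilde q_{j,B_N}^{MC}$ into the calibrated quantile of the $\mathcal{N}_{ad}(\vnull_{ad},\widehat\vSigma)$ law plus a Monte-Carlo sampling error handled by Glivenko--Cantelli, together with continuity of the quantile map of \cite{munko2023b} in the underlying distribution --- is a legitimate self-contained route, but note that these two assertions are precisely the nontrivial content of the cited lemma and would need full proofs: the Glivenko--Cantelli step must be uniform in $N$ (the replicates are drawn from a law that changes with $N$), and the map whose continuity you need is the \emph{jointly} calibrated quantile functional, not a marginal quantile. Your version buys a proof readable without the external supplement; the paper's buys a single argument reused verbatim for all three resampling schemes.
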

The global test $\varphi_{\widetilde \vQ}^{MC}$ implies a multiple contrast test for our multiple testing problem by rejecting the local hypotheses 
$\mathcal H_0^{(j)}$ for which ${\widetilde Q_{Nj}}/{\widetilde q_{j,B}^{MC}}>1$ holds.{This again allows to develop simulataneous confidence regions, which, in this case, are based on confidence ellipsoids.}
\\
Instead of the proposed Monte Carlo approach, resampling techniques can be used to derive quantiles and thus critical values. This will be explained in a more general framework in \Cref{Resampling}. This more general framework will be explained below.

\section{Inferring General MANOVA Hypotheses}\label{general}

In the previous section, we addressed the question of identifying the component(s) where differences occur
in one-way MANOVA. Here, we explain, how to adapt our approach to other testing problems, e.g. to identify which groups differ rather than which components.

Thereto, for each of the $\binom{a}{2}$ pairwise comparisons between two groups, a quadratic form can be defined investigating their mean vectors. In this section we generalize these two testing multiple testing problems (difference in components resp. in groups). To this end, we consider the general linear hypothesis
\begin{align*}
    \mathcal H_0: \mathbf{C}\vmu = \vbeta
\end{align*}
for a hypothesis matrix $\mathbf{C}\in\R^{r\times ad}$ and corresponding vector $\vbeta\in\R^r$ for $r\in\mathbb N$. 
Local hypotheses are obtained by partitioning the hypothesis matrix $\vC = \left(\vC_1^\top,\dots,\vC_L^\top\right)^\top$ into $L\in\mathbb N$ block matrices $\vC_1\in\R^{r_1\times ad},\dots,\vC_L\in\R^{r_L\times ad}$ and the vector $\vbeta = (\vbeta_1^\top,\dots,\vbeta_L^\top)^\top$ into $L$ vectors $\vbeta_1,\dots,\vbeta_L$ with corresponding number of rows $r_\ell$ as
\begin{align*}
    \mathcal H_{0}^{(\ell)}: \mathbf{C}_\ell\vmu = \vbeta_\ell, \quad \ell\in\{1,\dots,L\}.
\end{align*}

This contains the equality of expectation vectors treated in the previous section as special case. Other partitions allow, e.g., to determine the responsible groups for a rejection. Now, we want to formulate an  analogous multiple testing procedure for this more general multiple testing problem. Similar to Section 3, this leads to quadratic forms for the $\ell$-th local hypothesis

\[ Q_{N\ell}={N}\cdot \frac{[\vC_\ell\overline \vX - \vbeta_\ell ]^\top\vM(\vC_\ell,\widetilde\vSigma)[\vC_\ell\overline \vX - \vbeta_\ell ]}{\sqrt{2\tr([\vC_\ell^\top \vM(\vC_\ell,\widetilde\vSigma)\vC_\ell \widetilde\vSigma]^2)}}\cdot\ind\left( \tr\left([\vC_\ell^\top \vM(\vC_\ell,\widetilde\vSigma)\vC_\ell \widetilde\vSigma]^2\right)>0\right).\]

In contrast to $\widetilde \vM$ considered in Section~3, the dimension of the second argument of $\vM$ is slightly different, while all other requirements on the function are still needed. The corresponding asymptotic limits are derived below.

\begin{Le}\label{DistComponentsGeneral}
  Let  $v_{\ell\ell} =2\tr([\vC_\ell^\top \vM(\vC_\ell,\vSigma)\vC_\ell \vSigma]^2)>0$. Then
under the $\ell$-th local hypothesis,  we get
\[ Q_{N\ell}\stackrel{\mathcal D}{\to} Q_\ell\stackrel{\mathcal D}{=}  \frac{1}{\sqrt{v_{\ell\ell}}}\sum\limits_{i=1}^{r_\ell} \lambda_i \Upsilon_i\] with $\Upsilon_1,..., \Upsilon_{r_\ell}\stackrel{i.i.d.}{\sim}\chi_1^2$ and $\lambda_1,...,\lambda_{r_\ell}$ the eigenvalues of $\left(\vSigma^{1/2}\vC_\ell^\top \vM(\vP_a,\vSigma)\vC_\ell\vSigma^{1/2}\right)$.
In addition, with $\vZ \sim \mathcal{N}_{ad}(\vnull_{ad}, \vSigma)$ under the global null hypothesis $\mathcal H_0$ also
$$\vQ_N=(Q_{N1},..., Q_{NL})^\top\stackrel{\mathcal{D}}{\to}\vQ=( Q_1,...,  Q_L)^\top := \left(v_{\ell \ell}^{-1/2}\cdot  [\vC_\ell\vZ ]^\top \vM(\vC_\ell,\vSigma) \vC_\ell\vZ \right)_{\ell \in \{1,\dots,L\}}.$$
Moreover, if $\mathcal T \subset \{1,\dots,L\}$ denotes the indices of true local hypotheses, we have
    $(  Q_{N\ell})_{\ell \in \mathcal T} \stackrel{\mathcal D}{\to}
    ( Q_\ell)_{\ell \in \mathcal{T}}.$
\end{Le}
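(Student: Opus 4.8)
The plan is to reduce the entire statement to the joint weak convergence of $(\vT,\widetilde\vSigma)$ followed by an application of the continuous mapping theorem. By the multivariate central limit theorem together with (A1)--(A2) we already have $\vT=\sqrt N(\overline\vX-\vmu)\stackrel{\mathcal D}{\to}\vZ\sim\mathcal N_{ad}(\vnull_{ad},\vSigma)$, and by consistency of the group-wise empirical covariances $\widetilde\vSigma\stackrel{\mathcal P}{\to}\vSigma$. Since $\vSigma$ is deterministic, Slutsky's theorem gives the joint convergence $(\vT,\widetilde\vSigma)\stackrel{\mathcal D}{\to}(\vZ,\vSigma)$. The crucial algebraic observation is that whenever the $\ell$-th local null $\mathcal H_0^{(\ell)}:\vC_\ell\vmu=\vbeta_\ell$ holds, the centering cancels, i.e. $\sqrt N(\vC_\ell\overline\vX-\vbeta_\ell)=\vC_\ell\sqrt N(\overline\vX-\vmu)=\vC_\ell\vT$, so that $Q_{N\ell}=g_\ell(\vT,\widetilde\vSigma)$ for the fixed measurable map
\[ g_\ell(\vt,\vS)=\frac{(\vC_\ell\vt)^\top\vM(\vC_\ell,\vS)(\vC_\ell\vt)}{\sqrt{2\tr([\vC_\ell^\top\vM(\vC_\ell,\vS)\vC_\ell\vS]^2)}}\,\ind\!\left(\tr([\vC_\ell^\top\vM(\vC_\ell,\vS)\vC_\ell\vS]^2)>0\right). \]

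Next I would verify that $g_\ell$ is continuous at $(\vz,\vSigma)$ for $\PP_{\vZ}$-almost every $\vz$, so that the continuous mapping theorem applies. Here the hypothesis $v_{\ell\ell}=2\tr([\vC_\ell^\top\vM(\vC_\ell,\vSigma)\vC_\ell\vSigma]^2)>0$ is essential: since $\vM(\vC_\ell,\cdot)$ is continuous at $\vSigma$, the trace argument converges to $v_{\ell\ell}/2>0$, hence the denominator stays bounded away from zero and the indicator is locally constant (equal to $1$) near the limit. Thus the discontinuities of $g_\ell$ lie off a $\PP_{\vZ}$-null set, and applying the continuous mapping theorem to the vector map $(g_\ell)_{\ell}$ yields $\vQ_N\stackrel{\mathcal D}{\to}\vQ$ under $\mathcal H_0$; the marginal statement under $\mathcal H_0^{(\ell)}$ is then the $\ell$-th coordinate. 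Equivalently, one may split the numerator as $(\vC_\ell\vT)^\top\vM(\vC_\ell,\vSigma)(\vC_\ell\vT)+o_{\mathcal P}(1)$, using $\vC_\ell\vT=O_{\mathcal P}(1)$ and $\vM(\vC_\ell,\widetilde\vSigma)-\vM(\vC_\ell,\vSigma)=o_{\mathcal P}(1)$, and conclude by Slutsky.

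It remains to identify the limit law. Writing $\vZ=\vSigma^{1/2}\vG$ with $\vG\sim\mathcal N_{ad}(\vnull_{ad},\vI_{ad})$, the limiting numerator equals $\vG^\top\vA_\ell\vG$ with $\vA_\ell=\vSigma^{1/2}\vC_\ell^\top\vM(\vC_\ell,\vSigma)\vC_\ell\vSigma^{1/2}=\vB^\top\vB\geq0$, where $\vB=\vM(\vC_\ell,\vSigma)^{1/2}\vC_\ell\vSigma^{1/2}\in\R^{r_\ell\times ad}$ (using $\vM\geq0$). Diagonalizing $\vA_\ell=\vO\vLambda\vO^\top$ with $\vO$ orthogonal and exploiting that the law of $\vG$ is invariant under $\vO^\top$ turns the quadratic form into $\sum_i\lambda_i\Upsilon_i$ with $\Upsilon_i\stackrel{i.i.d.}{\sim}\chi_1^2$. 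Since $\vB^\top\vB$ and $\vB\vB^\top\in\R^{r_\ell\times r_\ell}$ share their nonzero eigenvalues, at most $r_\ell$ of the $\lambda_i$ are nonzero, so the sum reduces to $i=1,\dots,r_\ell$; dividing by $\sqrt{v_{\ell\ell}}$ yields $Q_\ell$.

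For the last assertion, note that $\vT\stackrel{\mathcal D}{\to}\vZ$ holds under the full model irrespective of whether the global null is true, as it rests only on (A1)--(A2) and centering at $\vmu$. Hence for every index $\ell\in\mathcal T$ of a true local hypothesis the identity $\sqrt N(\vC_\ell\overline\vX-\vbeta_\ell)=\vC_\ell\vT$ remains valid, and restricting the continuous map above to the coordinates in $\mathcal T$ gives $(Q_{N\ell})_{\ell\in\mathcal T}\stackrel{\mathcal D}{\to}(Q_\ell)_{\ell\in\mathcal T}$ by the same continuous mapping argument. The main thing to be careful about is precisely the continuity at the limit point, i.e. controlling the indicator and the non-vanishing of the denominator; this is the only place where $v_{\ell\ell}>0$ enters, and everything else is a routine transcription of the proof of \Cref{DistComponentsEqual} with $\vP_a,\widetilde\vM,\vSigma^{(j)}$ replaced by $\vC_\ell,\vM,\vSigma$.
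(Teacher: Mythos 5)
Your proposal is correct and follows essentially the same route as the paper's proof: joint convergence of $\sqrt{N}(\overline\vX-\vmu)$ and the plug-in quantities via the CLT, consistency and Slutsky, followed by the continuous mapping theorem applied to the bilinear form, with the component distribution identified via the representation theorem for quadratic forms in Gaussian vectors. The only difference is one of detail, not of method: you spell out explicitly what the paper leaves implicit, namely the treatment of the indicator and the non-vanishing denominator (where $v_{\ell\ell}>0$ enters), and the diagonalization/rank argument showing at most $r_\ell$ nonzero eigenvalues, which the paper delegates to the cited result of Mathai and Provost.
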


To fulfill the condition of this Lemma, an additional assumption is required:\vspace{0.1cm}
\begin{itemize}
    \item[(A3)] for each $\ell\in\{1,\dots,L\}$, there is a nonzero eigenvalue of $\vC_\ell^\top \vM(\vC_\ell,\vSigma)\vC_\ell \vSigma$.\vspace{0.1cm}
\end{itemize}

A sufficient condition for (A3) is that $\vSigma$ is positive definite and that $\vC_\ell^\top \vM(\vC_\ell,\vSigma)\vC_\ell \neq \vnull_{ad\times ad}$ holds for all $\ell$. However, this condition is not necessary, as we have seen in Section~\ref{Equality}, and the restrictiveness of this condition strongly depends on the chosen $\vM$. For example, in case of an ATS this results in $\vC_\ell^\top\vC_\ell\neq \vnull_{ad\times ad}$, which is fullfilled for all non-trivial matrices $\vC_\ell$ while for the WTS, among other things,  $\vC_\ell^\top\vSigma\vC_\ell\neq \vnull_{ad\times ad}$ is necessary.

Similar to Section~3, the previous Lemma allows to construct Monte-Carlo realizations $\vQ^{MC}=(\vQ^{MC,1},...,\vQ^{MC,B})$ of $\vQ$ and critical values 
$q_{1,B}^{MC},...,q_{L,B}^{MC}$ based on them. This results in asymptotic correct multiple level $\alpha$ tests.
\begin{Le}\label{LemmaMC}
For $\alpha \in(0,1)$, let $q_{1,B}^{MC},...,q_{L,B}^{MC}$ be the corresponding quantiles of $\vQ^{MC}$. Then 
\[\varphi_{\vQ}^{MC}(N,B):=\max\limits_{\ell=1,...,L} \ind\left\{\left(\frac{ Q_{N\ell}}{q_{\ell,B}^{MC}}\right)>1\right\}\]
is an asymptotic level $\alpha$ test for the global hypothesis $\mathcal H_0:\vC\vmu=\vbeta$.
Moreover, the test controls the family-wise error rate asymptotically in the strong sense. 
\end{Le}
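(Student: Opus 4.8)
The plan is to follow the proof of \Cref{LemmaMCEq} essentially verbatim, replacing the component-wise convergence of \Cref{DistComponentsEqual} by the general-partition convergence of \Cref{DistComponentsGeneral}. Two ingredients are needed: (i) the joint convergence $\vQ_N\stackrel{\mathcal{D}}{\to}\vQ$ under $\mathcal H_0$, already supplied by \Cref{DistComponentsGeneral}; and (ii) consistency of the Monte-Carlo quantiles, namely $q_{\ell,B_N}^{MC}\stackrel{\mathcal{P}}{\to}q_\ell$ for each $\ell$, where $q_1,\dots,q_L$ are the population quantiles of the limit vector $\vQ$ built according to the construction of \cite{munko2023b}.

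First I would establish (ii). Conditionally on the data, $\vQ^{MC,1},\dots,\vQ^{MC,B}$ are i.i.d.\ copies of the quadratic-form map applied to $\vZ^{MC}\sim\mathcal N_{ad}(\vnull_{ad},\widehat\vSigma)$. Since $\widehat\vSigma\stackrel{\mathcal{P}}{\to}\vSigma$ holds whether or not the null is in force, the conditional law of $\vQ^{MC,1}$ converges weakly to that of $\vQ$, in probability. The \cite{munko2023b} quantiles are obtained by fixing a common marginal level $u^\ast\in(0,1)$ and setting $q_\ell=F_\ell^{-1}(u^\ast)$, where $F_\ell$ is the marginal law of $Q_\ell$ and $u^\ast$ is chosen so that the joint-coverage identity
\[
\PP\Big(\textstyle\bigcap_{\ell=1}^{L}\{Q_\ell\le q_\ell\}\Big)=1-\alpha
\]
holds. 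Because each $Q_\ell=v_{\ell\ell}^{-1/2}\sum_i\lambda_i\Upsilon_i$ is a nonnegative combination of independent $\chi_1^2$ variables with at least one positive weight (by (A3) and $\vM\ge 0$), every $F_\ell$ is continuous and strictly increasing in a neighbourhood of $q_\ell$, and the coverage map $u\mapsto\PP(\bigcap_\ell\{Q_\ell\le F_\ell^{-1}(u)\})$ is continuous and strictly increasing at $u^\ast$. Combining this regularity with $B_N\to\infty$ and a Glivenko--Cantelli argument for the empirical marginal quantiles and the empirical coverage then yields $q_{\ell,B_N}^{MC}\stackrel{\mathcal{P}}{\to}q_\ell$.

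Next I would combine (i) and (ii) through a Slutsky-type argument. Convergence in probability of the quantiles to constants together with $\vQ_N\stackrel{\mathcal{D}}{\to}\vQ$ yields the joint convergence $(\vQ_N,(q_{\ell,B_N}^{MC})_\ell)\stackrel{\mathcal{D}}{\to}(\vQ,(q_\ell)_\ell)$, and the continuous mapping theorem applied to $(\vx,(c_\ell))\mapsto\max_\ell x_\ell/c_\ell$ gives
\[
\E\big[\varphi_{\vQ}^{MC}(N,B_N)\big]=\PP\Big(\max_{\ell}\tfrac{Q_{N\ell}}{q_{\ell,B_N}^{MC}}>1\Big)\longrightarrow\PP\Big(\max_{\ell}\tfrac{Q_\ell}{q_\ell}>1\Big)=1-\PP\Big(\textstyle\bigcap_{\ell=1}^{L}\{Q_\ell\le q_\ell\}\Big)=\alpha,
\]
where continuity of the marginals $F_\ell$ ensures that $\vQ$ places no mass on the boundary $\{\max_\ell Q_\ell/q_\ell=1\}$, so the convergence of probabilities is valid. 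This gives the asymptotic level. For strong FWER control, fix any configuration with true-hypothesis index set $\mathcal T\subset\{1,\dots,L\}$. Since (ii) uses only $\widehat\vSigma\stackrel{\mathcal{P}}{\to}\vSigma$, it is unaffected by the alternatives on $\{1,\dots,L\}\setminus\mathcal T$, and \Cref{DistComponentsGeneral} yields $(Q_{N\ell})_{\ell\in\mathcal T}\stackrel{\mathcal{D}}{\to}(Q_\ell)_{\ell\in\mathcal T}$. Hence
\[
\limsup_{N\to\infty}\PP\Big(\max_{\ell\in\mathcal T}\tfrac{Q_{N\ell}}{q_{\ell,B_N}^{MC}}>1\Big)=\PP\Big(\max_{\ell\in\mathcal T}\tfrac{Q_\ell}{q_\ell}>1\Big)\le\PP\Big(\max_{\ell=1}^{L}\tfrac{Q_\ell}{q_\ell}>1\Big)=\alpha,
\]
since enlarging the index set can only increase the maximum; as this bound is uniform over $\mathcal T$, strong control follows.

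The step I expect to be the main obstacle is the quantile consistency (ii). The quantiles are coupled through the single marginal level $u^\ast$, and the limit law of $\vQ$ is the intractable generalized-Wishart-type multivariate chi-square of \Cref{remarkWTS}, for which no closed form is available. The argument therefore rests entirely on the conditional weak convergence of the Monte-Carlo law together with continuity and strict monotonicity of the coverage functional at $u^\ast$; making the double limit $N,B_N\to\infty$ rigorous---interchanging the conditional Monte-Carlo randomness with the sampling randomness---is the delicate part, which I would handle by conditioning on the data and transferring the convergence through the continuity of the quantile functional.
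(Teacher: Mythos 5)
Your proposal is correct and takes essentially the same route as the paper: establish consistency of the Monte-Carlo quantiles $q_{\ell,B_N}^{MC}\stackrel{\mathcal{P}}{\to}q_\ell$ with $F(q_1,\dots,q_L)=1-\alpha$ (valid under alternatives since it only uses $\widehat\vSigma\stackrel{\mathcal{P}}{\to}\vSigma$), then conclude the asymptotic level via Slutsky and the continuous mapping theorem, and strong FWER control by restricting the maximum to $\mathcal T$ and enlarging the index set. The only difference is that the paper does not prove the quantile-consistency step from scratch but obtains it by applying Lemmas~S7--S8 and Remark~1 of \cite{munko2024multipletestsrestrictedmean} to the distribution function of $\vQ$ (continuous with strictly increasing marginals under (A3)); the conditional weak convergence, coverage-map regularity, and double limit $N,B_N\to\infty$ that you flag as the delicate points are exactly what those cited results formalize.
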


\section{Resampling Procedures }\label{Resampling}
Bootstrap techniques often lead to preferable small sample behaviour regarding type-I error and power, see, for example, \cite{sattler2021}, \cite{segbehoe2022} or \cite{baumeister2024}. For this reason, we study two bootstrap approaches: a parametric and a wild bootstrap. For ease of presentation, we focus on the (in simulations better performing) parametric bootstrap in the paper and discuss the wild bootstrap only in the supplement.

To this end, let $\vX_{i 1}^\star := \widehat\vSigma_i^{1/2}\vY_{i1},...,\vX_{i n_i}^\star:=\widehat\vSigma_i^{1/2}\vY_{in_i}$,   $i=1,...,a$, be the parametric bootstrap random vectors with $\vY_{11},...,\vY_{an_a}\stackrel{i.i.d.}{\sim} \mathcal N_{d}(\vnull_{d},\vI_d)$ independent {of} the realizations.
Note that we still do not postulate a parametric model on our original data. Drawing the bootstrap observations for each group from a multivariate normal distribution is motivated by the asymptotic behaviour of $\mathbf T$, cf. Section~\ref{general}. 

Furthermore, let ${\overline \vX}^\star=({\overline\vX_1^\star}^\top,...,{\overline\vX_a^\star}^\top)^\top$ with ${\overline\vX_i^\star}=n_i^{-1}\sum_{k=1}^{n_i}\vX_{ik}^\star$ denote the parametric bootstrap counterpart of the mean vector and
$\widehat \vSigma^\star=N\cdot\bigoplus_{i=1}^a n_i^{-1} \widehat \vSigma_i^\star$,
where $\widehat \vSigma_i^\star$ denotes the empirical covariance matrix of the $i$th parametric bootstrap sample. 
Then, the parametric bootstrap quadratic forms are given by

\[ Q_{N\ell}^\star={N}\cdot \frac{[\vC_\ell{\overline \vX}^\star ]^\top\vM(\vC_\ell,\widehat\vSigma^\star)[\vC_\ell{\overline \vX}^\star ]}{\sqrt{2\tr([\vC_\ell^\top \vM(\vC_\ell,\widehat\vSigma^\star)\vC_\ell \widehat\vSigma^\star]^2)}}\cdot \ind\left(\tr\left([\vC_\ell^\top \vM(\vC_\ell,\widehat\vSigma^\star)\vC_\ell \widehat\vSigma^\star]^2\right)>0\right)\]
for $\ell\in\{1,\dots,L\}$.

In the special setting of  Section~\ref{Equality},
the parametric bootstrap quadratic forms are

\[\widetilde Q_{Nj}^\star={N}\cdot \frac{[\vP_a{\overline \vX^{(j)}}^\star ]^\top\widetilde\vM(\vP_a,{\widehat \vSigma^{(j)\star}})[\vP_a{\overline \vX^{(j)}}^\star]}{\sqrt{2\tr([\vP_a^\top \widetilde\vM(\vP_a, \widehat\vSigma^{(j)\star})\vP_a \widehat\vSigma^{(j)\star}]^2)} }
\cdot \ind\left( \tr\left([\vP_a^\top \widetilde\vM(\vP_a, \widehat\vSigma^{(j)\star})\vP_a \widehat\vSigma^{(j)\star}]^2\right)>0\right)\]
with component-wise mean vectors 
${\overline \vX^{(j)}}^\star=(\frac{1}{n_1}\sum\limits_{k=1}^{n_1}X_{1kj}^\star,...,\frac{1}{n_a}\sum\limits_{k=1}^{n_a}X_{akj}^\star)$
and sample covariances   ${\widehat\vSigma^{(j)\star}}$. The general asymptotic validity of this resampling approach is stated below.

\begin{theorem}\label{PBTheorem1} If  Assumptions (A1) and (A2) are fulfilled, it holds:
Given the data, the conditional  distribution of
\begin{itemize}

\item[(a)]{ $\sqrt{N}\ {\overline \vX}^\star$ converges weakly to $ \mathcal{N}_{ad}\left(\vnull_{ad},\vSigma\right)$ in probability}.
\item[(b)]  {$\widetilde\vQ_N^\star=(\widetilde Q_{N1}^\star,...,\widetilde Q_{Nd}^\star)^\top$ converges weakly to $\widetilde\vQ$ in probability.}
\item[(c)]  {$\vQ_N^\star=(Q_{N1}^\star,...,Q_{NL}^\star)^\top$ converges weakly to $\vQ$ in probability, if also (A3) is fullfilled.}
\end{itemize}
{Moreover, we have ${\widehat \vSigma^{\star}} \to  \vSigma$ in probability, where 
$\vSigma$ is as in Lemma~\ref{DistComponentsGeneral}.
}
\end{theorem}

Similar to the Monte-Carlo approach from the previous sections, a large number of repetitions, say $B$, leads to realizations $\vQ_N^{\star,1},...,\vQ_N^{\star,B}$ of $\vQ_N^\star$ with corresponding bootstrap quantiles which can be used to define asymptotically valid testing procedures.

\begin{Le}\label{LemmaBS}
For $\alpha \in(0,1)$, let $q_{1,B}^{\star},...,q_{L,B}^{\star}$ be the corresponding quantiles of $\vQ_N^{\star,1},...,\vQ_N^{\star,B}$. Then
\[\varphi_{\vQ}^{\star}(N,B):=\ind\left\{\max\limits_{\ell=1,...,L} \left(\frac{ Q_{N\ell}}{q_{\ell,B}^{\star}}\right)>1\right\}\]
is an asymptotic correct level $\alpha$ test for the global null hypothesis $\mathcal H_0:\vC\vmu=\vbeta$ that controls the family-wise error rate asymptotically in the strong sense.
\end{Le}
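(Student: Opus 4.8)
The plan is to mirror the proof of \Cref{LemmaMC}, since the bootstrap test differs from the Monte-Carlo test only in the source of the reference distribution from which the quantiles are extracted. In \Cref{LemmaMC} the key ingredient is that the Monte-Carlo realizations are consistent for $\vQ$, which follows from $\widehat\vSigma\stackrel{\mathcal{P}}{\to}\vSigma$; here the analogous consistency is supplied directly by \Cref{PBTheorem1}(c), which states that, conditionally on the data, $\vQ_N^\star$ converges weakly to $\vQ$ in probability (under (A1)--(A3)). Once this is available, the remainder of the argument is identical in structure.

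First I would establish consistency of the bootstrap quantiles. Writing $q_\ell$ for the quantiles of $\vQ$ obtained from the construction of \cite{munko2023b}, the conditional weak convergence $\vQ_N^\star\weak\vQ$ (in probability), together with continuity of this quantile functional at the limit law, yields $q_{\ell,B}^\star\stackrel{\mathcal{P}}{\to}q_\ell$ as $N\to\infty$ along any sequence $B=B_N\to\infty$. Continuity of the functional holds because each marginal of $\vQ$ is a normalized finite linear combination of independent $\chi^2_1$ variables with $v_{\ell\ell}>0$, hence absolutely continuous with a strictly increasing distribution function near $q_\ell$; this excludes flat spots and makes the quantile map continuous with respect to weak convergence. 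Since the convergence in \Cref{PBTheorem1}(c) is only in probability, this step is carried out along subsequences, using the L\'evy--Prokhorov metric on the conditional laws.

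Next I would combine this with the unconditional convergence $\vQ_N\convD\vQ$ from \Cref{DistComponentsGeneral}. A Slutsky argument applied to the pair $\big(\vQ_N,(q_{\ell,B}^\star)_{\ell}\big)$, in which the random quantiles converge in probability to the deterministic limits $q_\ell$, gives under the global null
\[\PP\!\left(\max_{\ell=1,\dots,L}\frac{Q_{N\ell}}{q_{\ell,B}^\star}>1\right)\longrightarrow \PP\!\left(\max_{\ell=1,\dots,L}\frac{Q_\ell}{q_\ell}>1\right)=\alpha,\]
the final equality being the defining property of the \cite{munko2023b} quantiles. The boundary event $\{\max_\ell Q_\ell/q_\ell=1\}$ has probability zero by the absolute continuity noted above, so the continuous-mapping step is valid.

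For strong control of the family-wise error rate I would fix any index set $\mathcal T\subset\{1,\dots,L\}$ of true local hypotheses. \Cref{DistComponentsGeneral} also provides $(Q_{N\ell})_{\ell\in\mathcal T}\convD(Q_\ell)_{\ell\in\mathcal T}$, where the right-hand side is the corresponding marginal of the global-null limit $\vQ$. The same Slutsky/continuous-mapping argument then gives
\[\PP\!\left(\max_{\ell\in\mathcal T}\frac{Q_{N\ell}}{q_{\ell,B}^\star}>1\right)\longrightarrow \PP\!\left(\max_{\ell\in\mathcal T}\frac{Q_\ell}{q_\ell}>1\right)\le\PP\!\left(\max_{\ell=1,\dots,L}\frac{Q_\ell}{q_\ell}>1\right)=\alpha,\]
the inequality holding because a maximum over the subset $\mathcal T$ cannot exceed the maximum over all $L$ indices evaluated at the common vector $\vQ$. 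The main obstacle is the first step: transferring the \emph{in-probability} conditional weak convergence of \Cref{PBTheorem1}(c) into convergence of the data-dependent quantiles, which hinges on the continuity of the \cite{munko2023b} quantile functional and on careful bookkeeping of the double limit in $N$ and $B_N$; everything afterwards is routine.
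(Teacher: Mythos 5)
Your proposal is correct and follows essentially the same route as the paper: establish convergence in probability of the data-dependent quantiles $q_{\ell,B_N}^{\star}$ to limits $q_\ell$ satisfying $F(q_1,\dots,q_L)=1-\alpha$ (using \Cref{PBTheorem1} and the continuity/strict monotonicity of the marginals of $\vQ$ under (A3)), then conclude asymptotic correctness by Slutsky plus continuous mapping, and strong FWER control by the same subset-maximum inequality the paper displays. The only difference is presentational: the paper delegates the quantile-consistency step (including the double limit in $N$ and $B_N$ and the in-probability conditional convergence) to Lemmas~S7, S8 and Remark~1 of \cite{munko2024multipletestsrestrictedmean}, whereas you sketch that argument directly.
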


To check the hypothesis $\mathcal{H}_0:\vmu_1=...=\vmu_a$  based on $\widetilde \vQ_N$ from Section~\ref{Equality}, 
  the appropriate test would be given through
\[\varphi_{\widetilde\vQ}^{\star}(N,B):=\ind\left\{\max\limits_{j=1,...,d} \left(\frac{ \widetilde Q_{Nj}}{\widetilde q_{j,B}^{\star}}\right)>1\right\},\]
where $\widetilde q_{1,B}^{\star},...,\widetilde q_{d,B}^{\star}$ are the quantiles calculated using realisations $\widetilde \vQ_N^{\star,1},...,\widetilde\vQ_N^{\star,B}$
of $\widetilde \vQ_N^{\star}$.

  Especially in the context of resampling approaches, large dimensions, or large number of groups the computing effort of an approach becomes more relevant. Here, the quadratic form based multiple contrast test (QFMCT) is often preferable over classical MCT or quadratic forms. However, the concrete efficiency gain depends on the chosen quadratic form.
  For the simplest quadratic form, the ATS without standardization 
  using $\widetilde\vM(\vP_a,\vSigma^{(j)})=\vI_a$, it is easy to calculate and compare the computational complexity. In the case of classical multiple contrast tests with a Tukey-type matrix, it is $\Lan(a^3d^3)$ and for this  ATS, even $\Lan(a^3d^3+a^2d^2)$. For the quadratic forms based multiple contrast test, of course, $d$ quadratic forms are calculated, but each of them with a substantially smaller vector. This leads to a computational complexity of $\Lan(da^3+da^2+d^2)$, which is considerably smaller.
 Applying recent results from \cite{sattler2025}, this can even be reduced.\\
  For the standardized ATS or the WTS, the quadratic forms based multiple contrast test is even more preferable, since for the required covariance matrix estimation and matrix multiplications, the complexity is growing exponentially with the dimension. This makes our approach attractive from a computational perspective.

\section{Simulations}\label{Simulations}

The previous sections established the asymptotic validity of various QFMCTs. In this section, we assess their performance in finite samples through simulations. 
From \cite{sattler2023a}, it follows in the case of only two groups that for the ATS and WTS, using a hypothesis matrix with only one row would lead to the same values of the test statistic. Following this,  the value of the classical MCT and the QFMCT
coincides. Also, this does not influence the previous results, it makes it reasonable to focus here on $a>2$, and we choose $a=4$. Moreover, we consider unbalanced group sizes $n_1=n_2= N/3$ and $n_3=n_4= N/6$ with $N \in\{30, 60, 120 \}$ as we are especially interested in the performance for small to moderate sample sizes. Throughout, we consider $4$-dimensional observation vectors generated independently according to the model $\vX_{ik} = \vmu_i + \vSigma^{1/2} \vZ_{ik}, i=1,\dots, a, k=1,\dots,n_i$ and error terms $\vZ_{ik} = (Z_{ik1},...,Z_{ik5})^\top$ based on independently generated $Z_{ikj}, i=1,\dots, a, k=1,\dots,n_i, j=1,\dots,4,$ with the following distributions:\vspace{0.2cm}

\begin{itemize}

\item a standard normal distribution, i.e. $Z_{ikj} \sim  \mathcal {N}(0,1).$\vspace{0.1cm}
\item a student-t distribution with 9 degrees of freedom, i.e. $\sqrt{7/9}\cdot Z_{ikj}\sim t_9$.\vspace{0.2cm}
\end{itemize} 
We choose autoregressive matrices $(\vSigma_1)_{j_1 j_2}=(\vSigma_2)_{j_1 j_2}=0.65^{|j_1-j_2|}$ and heterogenous compound symmetry 
 matrices $\vSigma_3=\vSigma_4=\diag(3,4,5,6)+\veins_4\veins_4^\top$ to have not only heterogeneous $\vSigma_i$ but also $\vSigma^{(j)}$ and under the respective null hypothesis we know for the ATS, $Q_\ell\sim \sum_{s=1}^{r_\ell} \lambda_s B_s$, with $B_s\stackrel{i.i.d.}{\sim}\chi_1^2$ and $\lambda_1,...,\lambda_{r_\ell}$ the eigenvalues of $(\vC_\ell\vSigma\vC_\ell^\top)$. Therefore, despite their standardization, we have different distributions of the quadratic forms, as mentioned in Section~\ref{Equality}. For this reason, usage of the same quantile for all components of $\widetilde \vQ$ makes less sense, and the approach of \cite{munko2024} is required.

Since we are particularly interested in the power of the tests, we consider a shift-alternative, given through $\vmu_1=\vmu_2=\vmu_3=\vnull_4$ and $\vmu_4=\delta\cdot\veins_4$. We report the rejection rate in percentages, and print the type-I error rate in bold, if it is within the $95\%$ binomial interval $[0.0458, 0.0543]$.
We use 1,000 bootstrap steps for our parametric bootstrap, 10,000 simulation steps for the Monte-Carlo approach and 10,000 runs for all tests to get reliable results.\\

\subsection{Global Hypothesis}
Here, we focus on the one-way-MANOVA with global hypothesis $\mathcal H_0:\vmu_{1}=...=\vmu_{a}$, which has two natural decompositions in local hypotheses.
On the one hand, the focus on the components, which was the basic idea of chapter \ref{Equality}, which we here denote as $\mathcal H_0^A$. On the other hand, we could focus on identifying the groups between which differences occur. Here we have 6 local hypotheses defined through
\[\vC_1=(1,-1,0,0)\otimes \vI_4\quad \quad\vC_2=(1,0,-1,0)\otimes \vI_4\quad\quad \vC_3=(1,0,0,-1)\otimes \vI_4\]
\[\vC_4=(0,1,-1,0)\otimes \vI_4\quad \quad\vC_5=(0,1,0,-1)\otimes \vI_4\quad\quad \vC_6=(0,0,1,-1)\otimes \vI_4\] 

with $\vbeta_1=...=\vbeta_6=\vnull_4$. We denote this global hypothesis as $\mathcal H_0^B$.
It is clear that this difference is only important for the QFMCT, while other ones are unaffected.

For testing the global hypotheses, besides the QFMCT with parametric bootstrap and Monte-Carlo approach, we simulated the classical ATS with parametric bootstrap ($\psi_{ATS}^\star$).
Moreover, we considered the classical MCT using the Tukey-type matrix, with equicoordinate quantiles and based on the parametric bootstrap ($\varphi$ and $\varphi^{\star}$).\\

\captionsetup[table]{aboveskip = 5pt}
\captionsetup[table]{belowskip = 1pt}

  \begin{table}[htbp]
\begin{center}
\begin{scriptsize}
\begin{tabular}{ |l|r|r|r|r|r|r|r|r|r|r|} \hline
\rowcolor{ashgrey}  & {\hspace{-0.1cm} N\hspace{0.01cm}} & {\hspace{-0.1cm} $\delta=0$\hspace{-0.1cm}}&{\hspace{-0.1cm}  $\delta=0.25$\hspace{-0.1cm}}&{\hspace{-0.1cm}  $\delta=0.5$\hspace{-0.1cm}}&{ \hspace{-0.1cm}$\delta=0.75$\hspace{-0.1cm}}&{\hspace{-0.1cm}  $\delta=1.00$\hspace{-0.1cm}}& {\hspace{-0.1cm} $\delta=1.25$\hspace{-0.1cm}}& {\hspace{-0.1cm} $\delta=1.5$\hspace{-0.1cm}}&{ \hspace{-0.1cm} $\delta=1.75$\hspace{-0.1cm}}& {\hspace{-0.1cm} $\delta=2.00$\hspace{-0.1cm}}\\\hline 
 \rowcolor{lightyellow}{$\varphi$} & 30 & 30.38 & 31.37 & 35.15 & 40.75 & 47.90 & 56.84 & 66.03 & 74.12 & 81.89 \\ [0.03cm] 
\rowcolor{gainsboro}  {$\varphi^{\star}$} & 30 & 6.01 & 6.13 & 7.86 & 9.68 & 13.49 & 18.50 & 24.63 & 31.67 & 39.27 \\ [0.03cm] 
 \rowcolor{lightyellow} {$\varphi_{\vQ}^{MC}(ATS)$} & 30 & 15.89 & 16.90 & 19.79 & 25.17 & 32.42 & 41.48 & 50.68 & 61.58 & 71.19 \\ [0.03cm] 
\rowcolor{gainsboro}  $\varphi_{\vQ}^\star(ATS)$ & 30 & 6.46 & 6.71 & 8.41 & 10.74 & 14.90 & 21.12 & 27.77 & 35.77 & 44.55 \\ [0.03cm] 
 \rowcolor{lightyellow} $\psi_{ATS}^\star$ & 30 & 2.66 & 2.87 & 4.36 & 6.71 & 11.54 & 17.73 & 26.17 & 36.92 & 49.16 \\[0.03cm]   \hline
\rowcolor{gainsboro}   {$\varphi$} & 60 & 14.78 & 17.16 & 23.94 & 33.74 & 47.46 & 63.68 & 77.58 & 88.02 & 94.46 \\ [0.03cm] 
 \rowcolor{lightyellow} {$\varphi^{\star}$} & 60 & 5.66 & 6.25 & 9.78 & 15.67 & 26.24 & 39.01 & 55.24 & 69.99 & 82.14 \\ [0.03cm] 
\rowcolor{gainsboro}  {$\varphi_{\vQ}^{MC}(ATS)$} & 60 & 9.85 & 11.19 & 16.93 & 25.90 & 40.61 & 55.80 & 71.69 & 84.33 & 92.68 \\ [0.03cm] 
 \rowcolor{lightyellow}$\varphi_{\vQ}^\star(ATS)$ & 60 & 5.55 & 6.31 & 10.21 & 16.36 & 28.25 & 42.22 & 58.41 & 73.74 & 85.80 \\ [0.03cm] 
\rowcolor{gainsboro}  $\psi_{ATS}^\star$ & 60 & 3.85 & 4.81 & 9.54 & 17.76 & 33.13 & 50.33 & 68.81 & 83.92 & 93.02 \\ [0.03cm]  \hline
 \rowcolor{lightyellow}  {$\varphi$} & 120 & 9.43 & 12.75 & 24.26 & 43.86 & 67.74 & 86.05 & 95.87 & 99.31 & 99.87 \\ [0.03cm] 
\rowcolor{gainsboro}  {$\varphi^{\star}$} & 120 & \bf{5.41} & 7.32 & 15.99 & 32.24 & 55.76 & 77.69 & 91.78 & 98.05 & 99.64 \\ [0.03cm]  
 \rowcolor{lightyellow} {$\varphi_{\vQ}^{MC}(ATS)$} & 120 & 6.95 & 10.39 & 20.73 & 40.35 & 64.67 & 83.84 & 94.90 & 99.00 & 99.87 \\ [0.03cm] 
\rowcolor{gainsboro}  $\varphi_{\vQ}^\star(ATS)$ & 120 & \bf{4.95} & 7.90 & 16.32 & 34.06 & 57.67 & 79.04 & 92.55 & 98.50 & 99.73 \\ [0.03cm] 
 \rowcolor{lightyellow} $\psi_{ATS}^\star$ & 120 & 4.35 & 7.77 & 19.28 & 40.93 & 69.42 & 87.94 & 97.37 & 99.57 & 99.95 \\ [0.03cm]  
 \hline
\end{tabular}
\end{scriptsize}
\end{center}
\caption{Power of global tests for $\Hypo^A$ under a shift-alternative for 4 unbalanced groups
with  $n_1=n_2= N/3$ and $n_3=n_4= N/6$.
The 4-dimensional error terms are based on the standard normal distribution and covariance matrices $(\vSigma_1)_{j_1 j_2}=(\vSigma_2)_{j_1 j_2}=0.65^{|j_1-j_2|}$ respective $\vSigma_3=\vSigma_4=\diag(3,4,5,6)+\veins_4\veins_4^\top$.}
\label{tab:CSA1}
\end{table}

In \Cref{tab:CSA1} and \Cref{tab:CSA2}, it can be readily seen that the type-I error rate for the bootstrap version of the MCT and QFMCT is comparably good while the bootstrap ATS has a worse small sample approximation, but overall still good values. The power under the shift-alternative is consistently higher for the QFMCT, which, especially for smaller sample sizes, results in differences up to 0.05.
  \begin{table}[htbp]
\begin{center}
\begin{scriptsize}
\begin{tabular}{ |l|r|r|r|r|r|r|r|r|r|r|} \hline
\rowcolor{ashgrey} & {\hspace{-0.1cm} N\hspace{0.01cm}} & {\hspace{-0.1cm} $\delta=0$\hspace{-0.1cm}}&{\hspace{-0.1cm}  $\delta=0.25$\hspace{-0.1cm}}&{\hspace{-0.1cm}  $\delta=0.5$\hspace{-0.1cm}}&{ \hspace{-0.1cm}$\delta=0.75$\hspace{-0.1cm}}&{\hspace{-0.1cm}  $\delta=1.00$\hspace{-0.1cm}}& {\hspace{-0.1cm} $\delta=1.25$\hspace{-0.1cm}}& {\hspace{-0.1cm} $\delta=1.5$\hspace{-0.1cm}}&{ \hspace{-0.1cm} $\delta=1.75$\hspace{-0.1cm}}& {\hspace{-0.1cm} $\delta=2.00$\hspace{-0.1cm}}\\\hline 
 \rowcolor{lightyellow} {$\varphi$} & 30 & 28.14 & 29.26 & 33.44 & 39.88 & 49.15 & 58.19 & 67.34 & 76.52 & 83.91 \\ [0.03cm] 
 \rowcolor{gainsboro} {$\varphi^{\star}$} & 30 & \bf{4.83} & 5.25 & 6.76 & 9.71 & 13.39 & 18.94 & 24.80 & 33.19 & 42.53 \\ [0.03cm] 
  \rowcolor{lightyellow} {$\varphi_{\vQ}^{MC}(ATS)$} & 30 & 14.53 & 15.64 & 19.02 & 24.64 & 33.37 & 42.77 & 52.40 & 63.78 & 73.38 \\ [0.03cm] 
\rowcolor{gainsboro}  $\varphi_{\vQ}^\star(ATS)$ & 30 & 5.73 & 5.94 & 7.59 & 10.70 & 15.62 & 21.46 & 28.33 & 38.29 & 48.17 \\ [0.03cm] 
  \rowcolor{lightyellow} $\psi_{ATS}^\star$ & 30 & 2.07 & 2.57 & 4.05 & 7.00 & 11.98 & 18.00 & 26.62 & 38.22 & 50.42 \\ [0.03cm] \hline 
 \rowcolor{gainsboro}  {$\varphi$} & 60 & 13.78 & 15.92 & 22.92 & 33.88 & 49.88 & 65.06 & 79.42 & 89.69 & 95.48 \\ [0.03cm] 
   \rowcolor{lightyellow}{$\varphi^{\star}$} & 60 & \bf{4.77} & 5.77 & 9.29 & 16.38 & 27.60 & 41.31 & 57.12 & 72.84 & 84.62 \\ [0.03cm] 
\rowcolor{gainsboro}  {$\varphi_{\vQ}^{MC}(ATS)$} & 60 & 8.58 & 10.72 & 16.48 & 26.78 & 41.10 & 56.50 & 72.43 & 85.23 & 93.29 \\ [0.03cm] 
 \rowcolor{lightyellow}  $\varphi_{\vQ}^\star(ATS)$ & 60 & \bf{4.61} & 6.04 & 9.71 & 16.79 & 28.51 & 42.31 & 59.01 & 74.87 & 86.02 \\ [0.03cm] 
\rowcolor{gainsboro}  $\psi_{ATS}^\star$ & 60 & 3.30 & 4.66 & 8.84 & 18.13 & 32.43 & 49.82 & 67.79 & 82.88 & 92.18 \\ [0.03cm]  \hline
 \rowcolor{lightyellow}   {$\varphi$} & 120 & 8.90 & 12.67 & 24.45 & 45.92 & 68.71 & 86.98 & 96.21 & 99.30 & 99.89 \\ [0.03cm] 
\rowcolor{gainsboro}  {$\varphi^{\star}$} & 120 & \bf{4.94} & 7.49 & 15.89 & 33.69 & 56.98 & 79.63 & 92.66 & 98.38 & 99.77 \\ [0.03cm] 
 \rowcolor{lightyellow}  {$\varphi_{\vQ}^{MC}(ATS)$} & 120 & 7.25 & 10.01 & 21.17 & 41.49 & 64.19 & 84.56 & 95.34 & 98.99 & 99.83 \\ [0.03cm] 
\rowcolor{gainsboro}  $\varphi_{\vQ}^\star(ATS)$ & 120 & \bf{5.04} & 7.41 & 16.43 & 34.58 & 57.05 & 80.10 & 92.95 & 98.30 & 99.70 \\ [0.03cm] 
 \rowcolor{lightyellow}  $\psi_{ATS}^\star$ & 120 & 4.19 & 7.42 & 18.87 & 41.14 & 66.78 & 87.73 & 96.88 & 99.49 & 99.98 \\ [0.03cm] 
 \hline
\end{tabular}
\end{scriptsize}
\end{center}
\caption{Power of global tests for $\Hypo^A$ under a shift-alternative for 4 unbalanced groups
with  $n_1=n_2= N/3$ and $n_3=n_4= N/6$.
The 4-dimensional error terms are based on the $t_9$ distribution and  covariance matrices $(\vSigma_1)_{j_1 j_2}=(\vSigma_2)_{j_1 j_2}=0.65^{|j_1-j_2|}$ respective $\vSigma_3=\vSigma_4=\diag(3,4,5,6)+\veins_4\veins_4^\top$.}
\label{tab:CSA2}
\end{table}
\newpage
In comparison, the QFMCTs in \Cref{tab:GSA1} and \Cref{tab:GSA2} are a bit more conservative, but overall their type-I error rate is similar. Thus, referring to that new decomposition in local hypotheses is favourable.

 \begin{table}[htbp]
\begin{center}
\begin{scriptsize}

\begin{tabular}{ |l|r|r|r|r|r|r|r|r|r|r|} \hline
\rowcolor{ashgrey} & {\hspace{-0.1cm} N\hspace{0.01cm}} & {\hspace{-0.1cm} $\delta=0$\hspace{-0.1cm}}&{\hspace{-0.1cm}  $\delta=0.25$\hspace{-0.1cm}}&{\hspace{-0.1cm}  $\delta=0.5$\hspace{-0.1cm}}&{ \hspace{-0.1cm}$\delta=0.75$\hspace{-0.1cm}}&{\hspace{-0.1cm}  $\delta=1.00$\hspace{-0.1cm}}& {\hspace{-0.1cm} $\delta=1.25$\hspace{-0.1cm}}& {\hspace{-0.1cm} $\delta=1.5$\hspace{-0.1cm}}&{ \hspace{-0.1cm} $\delta=1.75$\hspace{-0.1cm}}& {\hspace{-0.1cm} $\delta=2.00$\hspace{-0.1cm}}\\\hline 
 \rowcolor{lightyellow} {$\varphi$} & 30 & 30.38 & 31.37 & 35.15 & 40.75 & 47.90 & 56.84 & 66.03 & 74.12 & 81.89 \\ [0.03cm] 
\rowcolor{gainsboro}  {$\varphi^{\star}$} & 30 & 6.01 & 6.13 & 7.86 & 9.68 & 13.49 & 18.50 & 24.63 & 31.67 & 39.27 \\ [0.03cm] 
 \rowcolor{lightyellow}  {$\varphi_{\vQ}^{MC}(ATS)$} & 30 & 7.97 & 8.58 & 11.24 & 15.39 & 22.02 & 30.64 & 41.32 & 53.79 & 64.97 \\ [0.03cm] 
 \rowcolor{gainsboro} $\varphi_{\vQ}^\star(ATS)$ & 30 & 4.06 & 4.14 & 5.73 & 7.16 & 11.80 & 17.75 & 24.23 & 33.97 & 45.23 \\ [0.03cm] 
 \rowcolor{lightyellow}  $\psi_{ATS}^\star$ & 30 & 2.66 & 2.87 & 4.36 & 6.71 & 11.54 & 17.73 & 26.17 & 36.92 & 49.16 \\ [0.03cm]  \hline
\rowcolor{gainsboro}    {$\varphi$} & 60 & 14.78 & 17.16 & 23.94 & 33.74 & 47.46 & 63.68 & 77.58 & 88.02 & 94.46 \\ [0.03cm] 
 \rowcolor{lightyellow}  {$\varphi^{\star}$} & 60 & 5.66 & 6.25 & 9.78 & 15.67 & 26.24 & 39.01 & 55.24 & 69.99 & 82.14 \\ [0.03cm] 
 \rowcolor{gainsboro} {$\varphi_{\vQ}^{MC}(ATS)$} & 60 & 6.49 & 8.13 & 13.92 & 23.88 & 39.74 & 57.49 & 74.84 & 87.54 & 94.69 \\ [0.03cm] 
 \rowcolor{lightyellow}  $\varphi_{\vQ}^\star(ATS)$ & 60 & \bf{4.79} & 6.29 & 10.86 & 19.38 & 33.64 & 50.47 & 68.40 & 82.73 & 91.97 \\ [0.03cm] 
  \rowcolor{gainsboro}$\psi_{ATS}^\star$ & 60 & 3.85 & 4.81 & 9.54 & 17.76 & 33.13 & 50.33 & 68.81 & 83.92 & 93.02 \\ [0.03cm] \hline 
 \rowcolor{lightyellow}  {$\varphi$} & 120 & 8.90 & 12.67 & 24.45 & 45.92 & 68.71 & 86.98 & 96.21 & 99.30 & 99.89 \\ [0.03cm] 
  \rowcolor{gainsboro}{$\varphi^{\star}$} & 120 & \bf{4.94} & 7.49 & 15.89 & 33.69 & 56.98 & 79.63 & 92.66 & 98.38 & 99.77 \\ [0.03cm] 
 \rowcolor{lightyellow}  {$\varphi_{\vQ}^{MC}(ATS)$} & 120 & 5.81 & 9.26 & 21.34 & 44.41 & 72.49 & 90.01 & 97.79 & 99.64 & 99.96 \\ [0.03cm] 
 \rowcolor{gainsboro} $\varphi_{\vQ}^\star(ATS)$ & 120 & \bf{5.38} & 8.28 & 20.07 & 41.99 & 70.35 & 88.69 & 97.30 & 99.56 & 99.93 \\ [0.03cm] 
  \rowcolor{lightyellow} $\psi_{ATS}^\star$ & 120 & 4.35 & 7.77 & 19.28 & 40.93 & 69.42 & 87.94 & 97.37 & 99.57 & 99.95 \\ [0.03cm] 
 \hline
\end{tabular}
\end{scriptsize}
\end{center}
\caption{Power of global tests for $\Hypo^B$ under a shift-alternative for 4 unbalanced groups
with  $n_1=n_2= N/3$ and $n_3=n_4= N/6$.
The 4-dimensional error terms are based on the standard normal distribution and covariance matrices $(\vSigma_1)_{j_1 j_2}=(\vSigma_2)_{j_1 j_2}=0.65^{|j_1-j_2|}$ respective  $\vSigma_3=\vSigma_4=\diag(3,4,5,6)+\veins_4\veins_4^\top$.}
\label{tab:GSA1}
\end{table}

For both distributions, it is evident that while the classical ATS, using the same bootstrap approach, is too conservative, it sometimes achieves higher power compared to the multiple contrast tests. In general, although all multiple contrast tests seem to perform a bit better for the $t_9$ distribution, the effect is small, and the comparison of the results suggests that the performance does not depend on the chosen distribution.

  \begin{table}[htbp]
\begin{center}
\begin{scriptsize}
\begin{tabular}{ |l|r|r|r|r|r|r|r|r|r|r|} \hline
\rowcolor{ashgrey} & {\hspace{-0.1cm} N\hspace{0.01cm}} & {\hspace{-0.1cm} $\delta=0$\hspace{-0.1cm}}&{\hspace{-0.1cm}  $\delta=0.25$\hspace{-0.1cm}}&{\hspace{-0.1cm}  $\delta=0.5$\hspace{-0.1cm}}&{ \hspace{-0.1cm}$\delta=0.75$\hspace{-0.1cm}}&{\hspace{-0.1cm}  $\delta=1.00$\hspace{-0.1cm}}& {\hspace{-0.1cm} $\delta=1.25$\hspace{-0.1cm}}& {\hspace{-0.1cm} $\delta=1.5$\hspace{-0.1cm}}&{ \hspace{-0.1cm} $\delta=1.75$\hspace{-0.1cm}}& {\hspace{-0.1cm} $\delta=2.00$\hspace{-0.1cm}}\\\hline 
 \rowcolor{lightyellow} {$\varphi$} & 30 & 28.14 & 29.26 & 33.44 & 39.88 & 49.15 & 58.19 & 67.34 & 76.52 & 83.91 \\ [0.03cm] 
\rowcolor{gainsboro}  {$\varphi^{\star}$} & 30 & \bf{4.83} & 5.25 & 6.76 & 9.71 & 13.39 & 18.94 & 24.80 & 33.19 & 42.53 \\ [0.03cm] 
  \rowcolor{lightyellow} {$\varphi_{\vQ}^{MC}(ATS)$} & 30 & 7.09 & 7.92 & 10.56 & 15.32 & 22.57 & 31.91 & 42.01 & 54.50 & 65.82 \\ [0.03cm] 
 \rowcolor{gainsboro} $\varphi_{\vQ}^\star(ATS)$ & 30 & 3.40 & 3.91 & 5.00 & 7.90 & 12.63 & 18.25 & 25.49 & 35.08 & 45.84 \\ [0.03cm] 
  \rowcolor{lightyellow} $\psi_{ATS}^\star$ & 30 & 2.07 & 2.57 & 4.05 & 7.00 & 11.98 & 18.00 & 26.62 & 38.22 & 50.42 \\ [0.03cm]  \hline
 \rowcolor{gainsboro}  {$\varphi$} & 60 & 13.78 & 15.92 & 22.92 & 33.88 & 49.88 & 65.06 & 79.42 & 89.69 & 95.48 \\ [0.03cm] 
 \rowcolor{lightyellow}  {$\varphi^{\star}$} & 60 & \bf{4.77} & 5.77 & 9.29 & 16.38 & 27.60 & 41.31 & 57.12 & 72.84 & 84.62 \\ [0.03cm] 
\rowcolor{gainsboro}  {$\varphi_{\vQ}^{MC}(ATS)$} & 60 & 6.24 & 7.41 & 13.05 & 24.08 & 40.02 & 57.00 & 74.51 & 87.14 & 93.91 \\ [0.03cm] 
  \rowcolor{lightyellow} $\varphi_{\vQ}^\star(ATS)$ & 60 & 4.52 & 5.67 & 10.25 & 19.39 & 33.30 & 49.84 & 67.74 & 81.51 & 90.86 \\ [0.03cm] 
 \rowcolor{gainsboro} $\psi_{ATS}^\star$ & 60 & 3.30 & \bf{4.66} & 8.84 & 18.13 & 32.43 & 49.82 & 67.79 & 82.88 & 92.18 \\ [0.03cm]  \hline
 \rowcolor{lightyellow}   {$\varphi$} & 120 & 8.90 & 12.67 & 24.45 & 45.92 & 68.71 & 86.98 & 96.21 & 99.30 & 99.89 \\ [0.03cm] 
 \rowcolor{gainsboro} {$\varphi^{\star}$} & 120 & \bf{4.94} & 7.49 & 15.89 & 33.69 & 56.98 & 79.63 & 92.66 & 98.38 & 99.77 \\ [0.03cm] 
 \rowcolor{lightyellow}  {$\varphi_{\vQ}^{MC}(ATS)$} & 120 & 5.53 & 9.00 & 21.40 & 44.69 & 71.04 & 89.41 & 97.45 & 99.61 & 99.98 \\ [0.03cm] 
 \rowcolor{gainsboro} $\varphi_{\vQ}^\star(ATS)$ & 120 & \bf{4.98} & 8.05 & 19.78 & 42.22 & 68.65 & 88.29 & 96.94 & 99.50 & 99.97 \\ [0.03cm] 
 \rowcolor{lightyellow}  $\psi_{ATS}^\star$ & 120 & 4.19 & 7.42 & 18.87 & 41.14 & 66.78 & 87.73 & 96.88 & 99.49 & 99.98 \\ [0.03cm] 
 \hline
\end{tabular}
\end{scriptsize}
\end{center}
\caption{Power of global tests for $\Hypo^B$ under a shift-alternative for 4 unbalanced groups
with  $n_1=n_2= N/3$ and $n_3=n_4= N/6$.
The 4-dimensional error terms are based on the $t_9$ distribution and covariance matrices $(\vSigma_1)_{j_1 j_2}=(\vSigma_2)_{j_1 j_2}=0.65^{|j_1-j_2|}$ respective $\vSigma_3=\vSigma_4=\diag(3,4,5,6)+\veins_4\veins_4^\top$.}
\label{tab:GSA2}
\end{table}

When comparing the results for $\Hypo^A$ with $\Hypo^B$, for the parametric bootstrap, the decomposition in 6 local hypotheses clearly leads to a better power, with differences up to 0.11. For the Monte-Carlo approach this is vice versa, although the results are too liberal for both decompositions.\\   This shows that although it is the same global hypothesis, an adequate choice of local alternatives increases the probability of rejecting the hypothesis of interest. Moreover, this demonstrates that the ability of the QFMCT to detect deviation from the null hypothesis does not require that all local hypotheses are false. While for $\Hypo^A$ all four local hypotheses are violated under the alternative, for $\Hypo^B$ this is only the case for half of them.\\

Finally, the Monte Carlo and asymptotic approaches in this simulation need larger sample sizes to exhibit good results. Again, the QFMCT is favourable through it good results for moderate sample sizes. Moreover, its computation (using 10,000 runs) is even faster compared to the corresponding approach based upon equicoordinate quantiles, which is computationally quite demanding.

\subsection{Local Test}
Since the QFMCT, as a side effect, also allows us to test local hypotheses, we investigate its power in this regard in this section. To this aim, we consider ANOVA-type statistics based on a parametric bootstrap for each of our local hypotheses, adjusted with the approach of \cite{holm1979} (denoted with $BH$), as it is less restrictive than the classical Bonferroni adjustment. For the same two distributions as earlier, we assess two power properties: we measure the percentage of all violated local hypotheses are correctly rejected (simultaneous power), as well as the proportion in which at least one of the incorrect ones is rejected (any power).

The results for $\Hypo^A$ are less informative, since all local hypotheses are violated, and therefore, they partially coincide with values from the earlier section. Hence, we focus on $\Hypo^B$, while the results for $\Hypo^A$ can be found in the supplement.

 \begin{table}[htbp]
\begin{center}
\begin{scriptsize}
\begin{tabular}{ |l|r|r|r|r|r|r|r|r|r|} \hline
\rowcolor{ashgrey} & {\hspace{-0.1cm} N\hspace{0.01cm}} &{\hspace{-0.1cm}  $\delta=0.25$\hspace{-0.1cm}}&{\hspace{-0.1cm}  $\delta=0.5$\hspace{-0.1cm}}&{ \hspace{-0.1cm}$\delta=0.75$\hspace{-0.1cm}}&{\hspace{-0.1cm}  $\delta=1.00$\hspace{-0.1cm}}& {\hspace{-0.1cm} $\delta=1.25$\hspace{-0.1cm}}& {\hspace{-0.1cm} $\delta=1.5$\hspace{-0.1cm}}&{ \hspace{-0.1cm} $\delta=1.75$\hspace{-0.1cm}}& {\hspace{-0.1cm} $\delta=2.00$\hspace{-0.1cm}}\\\hline

\rowcolor{gainsboro}   {$\varphi_{\vQ}^{MC}(ATS)$}-Any & 30  & 4.83 & 7.55 & 11.94 & 18.90 & 28.41 & 39.36 & 52.34 & 64.04 \\ [0.03cm] 
 \rowcolor{lightyellow}  {$\varphi_{\vQ}^\star(ATS)$}-Any & 30  & 1.90 & 3.52 & 5.45 & 9.81 & 16.28 & 22.99 & 32.85 & 44.37 \\ [0.03cm] 
\rowcolor{gainsboro}   $\psi_{ATS}^\star(BH)$-Any & 30 & 3.07 & 5.63 & 8.93 & 14.45 & 22.90 & 32.74 & 43.87 & 55.91 \\ 

 \rowcolor{lightyellow}  {$\varphi_{\vQ}^{MC}(ATS)$}-Sim & 30 & 0.29 & 0.78 & 1.28 & 3.09 & 5.91 & 9.55 & 16.11 & 24.24 \\ [0.03cm] 
 \rowcolor{gainsboro}  {$\varphi_{\vQ}^\star(ATS)$}-Sim & 30 & 0.08 & 0.24 & 0.54 & 1.31 & 2.45 & 4.53 & 8.07 & 12.82 \\[0.03cm]
   \rowcolor{lightyellow}  $\psi_{ATS}^\star(BH)$-Sim & 30  & 0.06 & 0.14 & 0.37 & 0.95 & 1.79 & 3.63 & 6.49 & 10.39 \\ [0.03cm] \hline
 
 \rowcolor{gainsboro}  {$\varphi_{\vQ}^{MC}(ATS)$}-Any & 60 & 4.95 & 11.06 & 21.52 & 38.18 & 56.50 & 74.29 & 87.28 & 94.61 \\ [0.03cm] 
 \rowcolor{lightyellow}  {$\varphi_{\vQ}^\star(ATS)$}-Any & 60 & 3.81 & 8.62 & 17.48 & 32.34 & 49.46 & 67.90 & 82.49 & 91.88 \\ [0.03cm] 
 \rowcolor{gainsboro}  $\psi_{ATS}^\star(BH)$-Any & 60  & 5.49 & 11.95 & 23.27 & 39.92 & 58.63 & 75.96 & 88.59 & 95.04 \\ [0.03cm] 

\rowcolor{lightyellow}   {$\varphi_{\vQ}^{MC}(ATS)$}-Sim & 60 & 0.41 & 1.19 & 3.44 & 9.22 & 17.74 & 32.22 & 48.35 & 65.48 \\ [0.03cm] 
\rowcolor{gainsboro}   {$\varphi_{\vQ}^\star(ATS)$}-Sim & 60 & 0.31 & 0.98 & 2.62 & 7.62 & 14.41 & 27.15 & 42.49 & 59.74 \\ [0.03cm] 
 \rowcolor{lightyellow}  $\psi_{ATS}^\star(BH)$-Sim & 60  & 0.24 & 0.76 & 2.21 & 6.50 & 12.97 & 24.84 & 39.60 & 56.66 \\ [0.03cm] \hline

 \rowcolor{gainsboro}  {$\varphi_{\vQ}^{MC}(ATS)$}-Any & 120 & 6.62 & 19.24 & 43.21 & 71.92 & 89.84 & 97.76 & 99.64 & 99.96 \\ [0.03cm] 
 \rowcolor{lightyellow}  {$\varphi_{\vQ}^\star(ATS)$}-Any & 120 & 5.92 & 18.20 & 40.89 & 69.83 & 88.51 & 97.28 & 99.56 & 99.93 \\ [0.03cm] 
 \rowcolor{gainsboro}   $\psi_{ATS}^\star(BH)$-Any & 120 & 8.98 & 23.71 & 49.43 & 77.36 & 92.50 & 98.55 & 99.75 & 100.00 \\ [0.03cm] 

 \rowcolor{lightyellow}  {$\varphi_{\vQ}^{MC}(ATS)$}-Sim & 120 & 0.59 & 3.36 & 10.92 & 27.70 & 51.65 & 73.78 & 89.00 & 96.30 \\ [0.03cm] 
 \rowcolor{gainsboro}  {$\varphi_{\vQ}^\star(ATS)$}-Sim & 120 & 0.48 & 2.94 & 9.99 & 26.26 & 49.73 & 72.42 & 88.18 & 95.90 \\ [0.03cm] 
 \rowcolor{lightyellow}  $\psi_{ATS}^\star(BH)$-Sim & 120 & 0.46 & 2.75 & 9.31 & 24.85 & 48.24 & 70.37 & 86.92 & 95.26 \\ [0.03cm] 
 \hline
\end{tabular}
\end{scriptsize}
\end{center}
\caption{Simultaneous power and any power of local tests for $\Hypo^B$ under a shift-alternative for 4 unbalanced groups
with  $n_1=n_2= N/3$ and $n_3=n_4= N/6$.
The 4-dimensional error terms are based on the standard normal distribution and covariance matrices $(\vSigma_1)_{j_1 j_2}=(\vSigma_2)_{j_1 j_2}=0.65^{|j_1-j_2|}$ respective   $\vSigma_3=\vSigma_4=\diag(3,4,5,6)+\veins_4\veins_4^\top$.}
\label{tab:GSAl1}
\end{table}

  \begin{table}[htbp]
\begin{center}
\begin{scriptsize}
\begin{tabular}{ |l|r|r|r|r|r|r|r|r|r|} \hline
\rowcolor{ashgrey} & {\hspace{-0.1cm} N\hspace{0.01cm}} & {\hspace{-0.1cm}  $\delta=0.25$\hspace{-0.1cm}}&{\hspace{-0.1cm}  $\delta=0.5$\hspace{-0.1cm}}&{ \hspace{-0.1cm}$\delta=0.75$\hspace{-0.1cm}}&{\hspace{-0.1cm}  $\delta=1.00$\hspace{-0.1cm}}& {\hspace{-0.1cm} $\delta=1.25$\hspace{-0.1cm}}& {\hspace{-0.1cm} $\delta=1.5$\hspace{-0.1cm}}&{ \hspace{-0.1cm} $\delta=1.75$\hspace{-0.1cm}}& {\hspace{-0.1cm} $\delta=2.00$\hspace{-0.1cm}}\\\hline 

 \rowcolor{gainsboro}  {$\varphi_{\vQ}^{MC}(ATS)$}-Any & 30 &  4.43 & 7.18 & 12.01 & 19.83 & 29.50 & 40.35 & 53.12 & 64.87 \\ [0.03cm] 
\rowcolor{lightyellow}   {$\varphi_{\vQ}^\star(ATS)$}-Any & 30 &  2.00 & 3.16 & 5.89 & 11.04 & 16.67 & 24.30 & 33.94 & 44.95 \\ [0.03cm] 
\rowcolor{gainsboro}$\psi_{ATS}^\star(BH)$-Any & 30  & 3.20 & 5.02 & 8.72 & 15.24 & 23.78 & 32.85 & 44.33 & 56.36 \\ [0.03cm] 

\rowcolor{lightyellow}{$\varphi_{\vQ}^{MC}(ATS)$}-Sim& 30 & 0.39 &  0.68 &   1.43 &   3.19 &   5.71 &  10.03 &  16.15 &  25.23\\ [0.03cm]
  \rowcolor{gainsboro}  {$\varphi_{\vQ}^\star(ATS)$}-Sim & 30  & 0.01 & 0.20 & 0.45 & 1.23 & 2.38 & 4.41 & 7.77 & 13.50 \\ [0.03cm] 
    \rowcolor{lightyellow} $\psi_{ATS}^\star(BH)$-Sim & 30  & 0.01 & 0.14 & 0.32 & 0.85 & 1.84 & 3.59 & 6.22 & 11.53 \\ [0.03cm]\hline

\rowcolor{gainsboro}   {$\varphi_{\vQ}^{MC}(ATS)$}-Any & 60  & 4.45 & 10.39 & 21.75 & 38.53 & 56.01 & 73.97 & 86.86 & 93.86 \\ [0.03cm] 
\rowcolor{lightyellow}   {$\varphi_{\vQ}^\star(ATS)$}-Any & 60  & 3.36 & 8.13 & 17.59 & 32.11 & 48.93 & 67.19 & 81.28 & 90.77 \\ [0.03cm] 
\rowcolor{gainsboro}  $\psi_{ATS}^\star(BH)$-Any & 60 &  5.11 & 11.66 & 23.77 & 40.86 & 58.06 & 75.62 & 87.58 & 94.45 \\ [0.03cm] 

\rowcolor{lightyellow}   {$\varphi_{\vQ}^{MC}(ATS)$}-Sim & 60  & 0.30 & 1.15 & 3.64 & 9.47 & 18.24 & 31.09 & 48.14 & 63.77 \\ [0.03cm] 
 \rowcolor{gainsboro}  {$\varphi_{\vQ}^\star(ATS)$}-Sim & 60  & 0.28 & 0.79 & 2.66 & 7.34 & 14.89 & 26.24 & 42.34 & 58.18 \\ [0.03cm]
   \rowcolor{lightyellow} $\psi_{ATS}^\star(BH)$-Sim & 60  & 0.18 & 0.68 & 2.30 & 6.21 & 13.19 & 24.38 & 39.58 & 55.39 \\ [0.03cm] \hline

\rowcolor{gainsboro}   {$\varphi_{\vQ}^{MC}(ATS)$}-Any & 120  & 6.51 & 19.35 & 43.65 & 70.59 & 89.32 & 97.44 & 99.59 & 99.98 \\ [0.03cm] 
\rowcolor{lightyellow}   {$\varphi_{\vQ}^\star(ATS)$}-Any & 120  & 5.77 & 17.84 & 41.28 & 68.12 & 88.22 & 96.93 & 99.48 & 99.97 \\ [0.03cm] 
\rowcolor{gainsboro}  $\psi_{ATS}^\star(BH)$-Any & 120 & 8.91 & 23.76 & 49.69 & 75.59 & 92.16 & 98.27 & 99.71 & 99.99 \\ [0.03cm] 

 \rowcolor{lightyellow} {$\varphi_{\vQ}^{MC}(ATS)$}-Sim & 120  & 0.58 & 2.83 & 10.44 & 27.72 & 51.97 & 72.94 & 88.23 & 95.76 \\ [0.03cm] 
  \rowcolor{gainsboro}{$\varphi_{\vQ}^\star(ATS)$}-Sim & 120  & 0.46 & 2.57 & 9.83 & 26.03 & 50.04 & 71.17 & 87.47 & 95.14 \\ [0.03cm] 
     \rowcolor{lightyellow}$\psi_{ATS}^\star(BH)$-Sim & 120  & 0.37 & 2.30 & 9.05 & 24.84 & 48.15 & 69.39 & 85.94 & 94.42 \\ [0.03cm] 
 \hline
\end{tabular}
\end{scriptsize}
\end{center}
\caption{Simultaneous power and any power of local tests for $\Hypo^B$ under a shift-alternative for 4 unbalanced groups
with  $n_1=n_2= N/3$ and $n_3=n_4= N/6$.
The 4-dimensional error terms are based on the $t_9$ distribution and covariance matrices $(\vSigma_1)_{j_1 j_2}=(\vSigma_2)_{j_1 j_2}=0.65^{|j_1-j_2|}$ respective   $\vSigma_3=\vSigma_4=\diag(3,4,5,6)+\veins_4\veins_4^\top$.}
\label{tab:GSAl2}
\end{table}

For both considered distributions, the ATS with a bootstrap detects more frequently if at least one of the local hypotheses is violated compared to the QFMCT with parametric bootstrap. Here, the largest difference occurs for $N=30$. However, {$\varphi_{\vQ}^{MC}(ATS)$} is consistently better in simultaneous power. Nevertheless, it depends on the situation which ability is more important and, therefore, which procedure is preferable. Due to its liberal behaviour, the QFMTC using a Monte-Carlo approach should only be used for larger sample sizes. %, but then it has good values in detecting all violated local hypotheses as well as at least one.
In such cases, however, it performs well in detecting both all violated local hypotheses and at least one.

\section{Illustrative Data Analysis}\label{Example}
To illustrate the application of the method, the EEG data set from the \textsc{R}-package \textit{manova.rm} by \cite{Friedrich2019} is considered closer. This study from \cite{staffen2014} was conducted at the University Clinic of Salzburg (Department of Neurology), where electroencephalography (EEG) data were measured from 160 patients. All participants are diagnosed with different diagnoses of impairments, namely  Alzheimer's disease (AD), mild cognitive impairment (MCI), and subjective cognitive complaints (SCC). 
In \Cref{tab:EEG1}, the number of patients divided by sex and diagnosis can be found.

\begin{table}[h]
\centering
\begin{small}
\caption{Number of observations for the different factor level combinations of sex and diagnosis.}
\label{tab:EEG1}
\begin{tabular}{l|c|c|c|}
&AD&MCI&SCC\\
\hline
male&12&27&20\\
\hline
female&24&30&47\\
\hline
\end{tabular}
\end{small}
\end{table}

MANOVA-based comparisons were already made in \cite{bathke2018}, and we now want to complete this with our new approach.
The original study also further differentiated between subjective cognitive complaints with minimal cognitive dysfunction (SCC+) and without \mbox{(SCC-)}. However, due to sample sizes, this was not done in the data set provided through the package.

For each participant,  three different electrode positions (frontal, temporal, and central) were used together with two kinds of measurements (z-score for brain rate and Hjorth complexity), resulting in observation vector's dimension $d=6$. In \cite{sattler2022}, homogeneity of covariance matrices between different diagnoses as well as different sexes were investigated for this data set. Therefore, we have six groups with heterogeneous covariance matrices and unbalanced sample sizes, making many test procedures incapable.

In \cite{bathke2018}, equality of expectation vectors was rejected, and therefore an influence of the diagnosis was proven. However,
it was never identified where the differences occur. Since all previous analyses of this data set could neither verify an influence of the location nor measurement, identifying the responsible component is, here, not the main issue. Therefore, we will focus on identifying the responsible groups
and determine three local hypotheses
\[\Hypo^{(1)}:\vmu_{AD}=\vmu_{MCI}\quad\quad \Hypo^{(2)}:\vmu_{AD}=\vmu_{SCC}\quad\quad \Hypo^{(3)}:\vmu_{MCI}=\vmu_{SCC}\]
which together build the global hypothesis $\Hypo: \vmu_{AD}=\vmu_{MCI}=\vmu_{SCC}$.
Since we have vectors of dimension 6, each of these local hypotheses consists of 6 sub-hypotheses, one for each component of the vector. So, for the classical MCT we have 18 single hypotheses, in partitions of 6 for each local hypothesis. The test is conducted with 5.000 bootstrap runs and for both genders separately.

\captionsetup[table]{position=below,skip=.4cm}
\begin{table}[htp]
\centering
\begin{small}
\begin{tabular}{x{1.3pt}lx{1.3pt}r|r|r|r|r|rx{1.3pt}cx{1.3pt}}\specialrule{1.3pt}{0pt}{0pt}
&\multicolumn{6}{cx{1.3pt}}{p-values of $\varphi^\star$ for 6 sub-hypotheses}&p-value of $\varphi_{\vQ}^\star(ATS)$\\
\specialrule{1.3pt}{0pt}{0pt}
$\Hypo^{(1)}$-Male &   74.14 & 74.08 & 68.60 & 67.38 & 87.96 & 61.28& 31.04 \\ \hline
  $\Hypo^{(2)}$-Male & 1.46 & 3.32 & 0.84 & 5.88 & 27.02 & 20.10 & 0.98 \\   \hline 
  $\Hypo^{(3)}$-Male & 2.10 & 2.98 & 0.40 & 2.04 & 32.16 & 0.58& 0.04 \\  \specialrule{1.3pt}{0pt}{0pt}
  $\Hypo^{(1)}$-Female &  97.92 & 100.00 & 99.46 & 78.22 & 99.96 & 87.58& 75.54 \\  \hline 
  $\Hypo^{(2)}$-Female & 1.14 & 8.86 & 3.16 & 2.80 & 38.40 & 7.08& 0.38 \\  \hline 
  $\Hypo^{(3)}$-Female & 2.74 & 3.08 & 2.04 & 9.52 & 59.10 & 7.66& 0.38 \\ 
\specialrule{1.3pt}{0pt}{0pt}\end{tabular}
\caption{p-values in percent for QFMTC and classical MCT, both based on parametric bootstrap and investigating local hypotheses  $\Hypo^{(1)}:\vmu_{AD}=\vmu_{MCI}$, $\Hypo^{(2)}:\vmu_{AD}=\vmu_{SCC}$ and $\Hypo^{(3)}:\vmu_{MCI}=\vmu_{SCC}$. }\label{EEGResultate}
\end{small}
\end{table}

In \Cref{EEGResultate}, the results can be found, where it contains for each local hypothesis 6  p-values when considering $\varphi^\star$ and only one for $\varphi_{\vQ}^\star(ATS)$, while all of them are calculated according to \cite{munko2024a}. 
It is not surprising that for both genders, the global hypotheses can be rejected for the global level $\alpha=5\%$ since this was also a result of \cite{bathke2018}. Furthermore, all multiple contrast tests determine the differences between AD and SCC as well as MCI and SCC. The classic MCT can also determine the components responsible for this and reject $\Hypo^{(2)}$ and $\Hypo^{(3)}$ for men also to the smaller global level $\alpha=1\%$. In contrast, all 4 rejections of QFMCT hold for this level, again showing QFMCT's advantage by identifying even relatively small deviations from the null hypothesis. 

\section{Conclusion}\label{Conclussion}
In the present work, we introduce a novel approach that combines two well-established concepts, multiple contrast tests and quadratic forms, to leverage the strengths of both. This leads to quadratic form based multiple contrast tests (QFMCT), that can be defined for quite general quadratic forms and contrasts. Employing an appropriate bootstrap approach, this results in a consistent multiple contrast test for general MANOVA models. Thereby, we first introduce the approach for equality of expectations with a focus on the components. Later, we generalize it to larger classes of hypotheses. Our simulations indicate that while the type-I error rate is comparable to that of the bootstrap version of existing multiple contrast tests, our approach offers improved power. Additionally, it generally requires fewer computations, making it preferable for inferring general linear hypotheses regarding expectation vectors in multiple groups, in particular for large numbers of groups.\\ 

The current paper is focused on expectation vectors as parameters of interest since they are the most common effect size. However, the technique of a quadratic form based multiple contrast test can also be applied in other contexts. For example, we also envision QFMCTs for a variety of parameter vectors like quantiles, relative effects, vectorized covariance matrices and many more. In future work, we will develop corresponding QFMCTs and 
investigate their large and small sample properties.

\section*{Acknowledgements}
Merle Munko gratefully acknowledges the support of the Deutsche Forschungsgemeinschaft (grant no. GRK 2297 MathCoRe).

\bibliographystyle{apalike}
\bibliographystyle{unsrtnat}

%\phantomsection
\addcontentsline{toc}{chapter}{Bibliography}
\bibliography{Literaturnew}

\begin{thebibliography}{}

\bibitem[Akritas and Brunner, 1997]{akritas1997}
Akritas, M.~G. and Brunner, E. (1997).
\newblock A unified approach to rank tests for mixed models.
\newblock {\em Journal of Statistical Planning and Inference}, 61(2):249--277.

\bibitem[Bathke et~al., 2018]{bathke2018}
Bathke, A.~C., Friedrich, S., Pauly, M., Konietschke, F., Staffen, W., Strobl,
  N., and Höller, Y. (2018).
\newblock Testing mean differences among groups: Multivariate and repeated
  measures analysis with minimal assumptions.
\newblock {\em Multivariate Behavioral Research}, 53:348--359.

\bibitem[Baumeister et~al., 2024]{baumeister2024}
Baumeister, M., Ditzhaus, M., and Pauly, M. (2024).
\newblock Quantile-based manova: A new tool for inferring multivariate data in
  factorial designs.
\newblock {\em Journal of Multivariate Analysis}, 199:105246.

\bibitem[Beran, 1988]{beran1988balanced}
Beran, R. (1988).
\newblock Balanced simultaneous confidence sets.
\newblock {\em Journal of the American Statistical Association},
  83(403):679--686.

\bibitem[Brunner, 2001]{brunner2001b}
Brunner, E. (2001).
\newblock Asymptotic and approximate analysis of repeated measures designs
  under heteroscedasticity.
\newblock {\em Mathematical Statistics with Applications in Biometry}, page
  381.

\bibitem[Brunner et~al., 2019]{Brunner2019}
Brunner, E., Bathke, A.~C., and Konietschke, F. (2019).
\newblock {\em Rank and Pseudo-Rank Procedures for Independent Observations in
  Factorial Designs}.
\newblock Springer.

\bibitem[Brunner et~al., 1997]{brunner1997}
Brunner, E., Dette, H., and Munk, A. (1997).
\newblock Box-type approximations in nonparametric factorial designs.
\newblock {\em Journal of the American Statistical Association},
  92(440):1494--1502.

\bibitem[Brunner et~al., 2016]{brunner2016}
Brunner, E., Konietschke, F., Pauly, M., and Puri, M. (2016).
\newblock Rank-based procedures in factorial designs: Hypotheses about
  nonparametric treatment effects.
\newblock {\em Journal of the Royal Statistical Society: Series B (Statistical
  Methodology)}, 79(5):1463--1485.

\bibitem[B{\"u}hlmann, 1998]{buehlmann1998}
B{\"u}hlmann, P. (1998).
\newblock Sieve bootstrap for smoothing in nonstationary time series.
\newblock {\em The Annals of Statistics}, 26(1):48--83.

\bibitem[Dickhaus and Royen, 2015]{dickhaus2015}
Dickhaus, T. and Royen, T. (2015).
\newblock A survey on multivariate chi-square distributions and their
  applications in testing multiple hypotheses.
\newblock {\em Statistics}, 49.

\bibitem[Ditzhaus and Smaga, 2022]{ditzhaus2022}
Ditzhaus, M. and Smaga, L. (2022).
\newblock Permutation test for the multivariate coefficient of variation in
  factorial designs.
\newblock {\em Journal of Multivariate Analysis}, 187:104848.

\bibitem[Dobler and Pauly, 2020]{dobler2020}
Dobler, D. and Pauly, M. (2020).
\newblock Factorial analyses of treatment effects under independent
  right-censoring.
\newblock {\em Statistical Methods in Medical Research}, 29(2):325--343.
\newblock PMID: 30834811.

\bibitem[Friedrich et~al., 2017]{friedrich2017}
Friedrich, S., Brunner, E., and Pauly, M. (2017).
\newblock Permuting longitudinal data in spite of the dependencies.
\newblock {\em Journal of Multivariate Analysis}, 153:255--265.

\bibitem[Friedrich et~al., 2019]{Friedrich2019}
Friedrich, S., Konietschke, F., and Pauly, M. (2019).
\newblock {\em MANOVA.RM: Analysis of Multivariate Data and Repeated Measures
  Designs}.
\newblock R package version 0.3.2.

\bibitem[Friedrich and Pauly, 2017]{friedrich2017a}
Friedrich, S. and Pauly, M. (2017).
\newblock Mats: Inference for potentially singular and heteroscedastic manova.
\newblock {\em Journal of Multivariate Analysis}, 165:166--179.

\bibitem[Gunawardana and Konietschke, 2019]{gunawardana2019}
Gunawardana, A. and Konietschke, F. (2019).
\newblock Nonparametric multiple contrast tests for general multivariate
  factorial designs.
\newblock {\em Journal of Multivariate Analysis}, 173:165--180.

\bibitem[Harrar and Xu, 2022]{harrar2022}
Harrar, S.~W. and Xu, Y. (2022).
\newblock Nonparametric tests for multivariate association.
\newblock {\em Symmetry}, 14(6).

\bibitem[Hasler and Hothorn, 2008]{hasler2008}
Hasler, M. and Hothorn, L.~A. (2008).
\newblock Multiple contrast tests in the presence of heteroscedasticity.
\newblock {\em Biometrical Journal}, 50(5):793--800.

\bibitem[Holm, 1979]{holm1979}
Holm, S. (1979).
\newblock A simple sequentially rejective multiple test procedure.
\newblock {\em Scandinavian Journal of Statistics}, 6(2):65--70.

\bibitem[Hu et~al., 2017]{hu2017}
Hu, J., Bai, Z., Wang, C., and Wang, W. (2017).
\newblock On testing the equality of high dimensional mean vectors with unequal
  covariance matrices.
\newblock {\em Annals of the Institute of Statistical Mathematics},
  69:1572--9052.

\bibitem[Johnson and Wichern, 2007]{johnson2007}
Johnson, R.~A. and Wichern, D.~W. (2007).
\newblock {\em Applied multivariate statistical analysis}.
\newblock 6th edition, Prentice Hall.

\bibitem[Konietschke et~al., 2015]{konietschke2015}
Konietschke, F., Bathke, A.~C., Harrar, S., and Pauly, M. (2015).
\newblock Parametric and nonparametric bootstrap methods for general {MANOVA}.
\newblock {\em Journal of Multivariate Analysis}, 140:291--301.

\bibitem[Konietschke et~al., 2013]{konietschke2013}
Konietschke, F., Bösiger, S., Brunner, E., and Hothorn, L.~A. (2013).
\newblock Are multiple contrast tests superior to the anova?
\newblock {\em The International Journal of Biostatistics}, 9(1):63--73.

\bibitem[Konietschke et~al., 2012]{konietschke2012}
Konietschke, F., Hothorn, L.~A., and Brunner, E. (2012).
\newblock Rank-based multiple test procedures and simultaneous confidence
  intervals.
\newblock {\em Electronic Journal of Statistics}, 6:738--759.

\bibitem[Krishnamoorthy and Lu, 2010]{krishnamoorthy2010}
Krishnamoorthy, K. and Lu, F. (2010).
\newblock A parametric bootstrap solution to the {MANOVA} under
  heteroscedasticity.
\newblock {\em Journal of Statistical Computation and Simulation},
  80(8):873--887.

\bibitem[Mammen, 1993]{Mammen1993}
Mammen, E. (1993).
\newblock {\em When does bootstrap work? {A}symptotic results and simulations}.
\newblock Springer Science \& Business Media.

\bibitem[Mathai and Provost, 1992]{Mathai1992}
Mathai, A.~M. and Provost, S.~B. (1992).
\newblock Quadratic forms in random variables, vol. 126 of statistics:
  Textbooks and monographs.

\bibitem[Munko et~al., 2024a]{munko2024a}
Munko, M., Ditzhaus, M., Dobler, D., and Genuneit, J. (2024a).
\newblock Rmst-based multiple contrast tests in general factorial designs.
\newblock {\em Statistics in Medicine}, 43(10):1849--1866.

\bibitem[Munko et~al., 2023]{munko2023}
Munko, M., Ditzhaus, M., Pauly, M., Smaga, {\L}., and Zhang, J.-T. (2023).
\newblock General multiple tests for functional data.
\newblock arXiv preprint (arXiv:2306.15259).

\bibitem[Munko et~al., 2024b]{munko2024}
Munko, M., Dobler, D., and Ditzhaus, M. (2024b).
\newblock Multiple tests for restricted mean time lost with competing risks
  data.
\newblock arXiv preprint (arXiv:2409.07917).

\bibitem[Noguchi et~al., 2020]{noguchi2020}
Noguchi, K., Abel, R.~S., Marmolejo-Ramos, F., and Konietschke, F. (2020).
\newblock Nonparametric multiple comparisons.
\newblock {\em Behavior Research Methods}, 52:489--502.

\bibitem[Roy et~al., 2015]{roy2015}
Roy, A., Leiva, R., Žežula, I., and Klein, D. (2015).
\newblock Testing the equality of mean vectors for paired doubly multivariate
  observations in blocked compound symmetric covariance matrix setup.
\newblock {\em Journal of Multivariate Analysis}, 137:50--60.

\bibitem[Rubarth et~al., 2022a]{rubarth2022}
Rubarth, K., Pauly, M., and Konietschke, F. (2022a).
\newblock Ranking procedures for repeated measures designs with missing data:
  Estimation, testing and asymptotic theory.
\newblock {\em Statistical Methods in Medical Research}, 31(1):105--118.
\newblock PMID: 34841991.

\bibitem[Rubarth et~al., 2022b]{rubarth2022a}
Rubarth, K., Sattler, P., Zimmermann, H.~G., and Konietschke, F. (2022b).
\newblock Estimation and testing of wilcoxon-mann-whitney effects in factorial
  clustered data designs.
\newblock {\em Symmetry}, 14(2).

\bibitem[Sattler, 2021]{sattler2021}
Sattler, P. (2021).
\newblock {A comprehensive treatment of quadratic-form-based inference in
  repeated measures designs under diverse asymptotics}.
\newblock {\em Electronic Journal of Statistics}, 15(1):3611 -- 3634.

\bibitem[Sattler et~al., 2022]{sattler2022}
Sattler, P., Bathke, A.~C., and Pauly, M. (2022).
\newblock Testing hypotheses about covariance matrices in general manova
  designs.
\newblock {\em Journal of Statistical Planning and Inference}, 219:134--146.

\bibitem[Sattler and Pauly, 2018]{sattler2018}
Sattler, P. and Pauly, M. (2018).
\newblock Inference for high-dimensional split-plot-designs: A unified approach
  for small to large numbers of factor levels.
\newblock {\em Electron. J. Statist.}, 12(2):2743--2805.

\bibitem[Sattler and Pauly, 2023]{sattler2023}
Sattler, P. and Pauly, M. (2023).
\newblock Testing hypotheses about correlation matrices in general manova
  designs.
\newblock {\em TEST}, 33:496--516.

\bibitem[Sattler and Rosenbaum, 2025]{sattler2025}
Sattler, P. and Rosenbaum, M. (2025).
\newblock Choice of the hypothesis matrix for using the anova-type-statistic.
\newblock {\em Statistics \& Probability Letters}, 219:110356.

\bibitem[Sattler and Zimmermann, 2023]{sattler2023a}
Sattler, P. and Zimmermann, G. (2023).
\newblock Choice of the hypothesis matrix for using the wald-type-statistic.
\newblock {\em Statistics \& Probability Letters}, 208:110038.

\bibitem[Segbehoe et~al., 2022]{segbehoe2022}
Segbehoe, L.~S., Schaarschmidt, F., and Djira, G.~D. (2022).
\newblock Simultaneous confidence intervals for contrasts of quantiles.
\newblock {\em Biometrical Journal}, 64(1):7--19.

\bibitem[Smaga, 2017]{smaga2017}
Smaga, {\L}. (2017).
\newblock A two-sample test based on cluster subspaces for equality of mean
  vectors in high dimension.
\newblock {\em Discussiones Mathematicae: Probability \& Statistics}, 37.

\bibitem[Staffen et~al., 2014]{staffen2014}
Staffen, W., Strobl, N., Zauner, H., Höller, Y., Dobesberger, J., and Trinka,
  E. (2014).
\newblock Combining spect and eeg analysis for assessment of disorders with
  amnestic symptoms to enhance accuracy in early diagnostics.
\newblock {\em Poster A19 Presented at the 11th Annual Meeting of the Austrian
  Society of Neurology, 26–29 March, Salzburg, Austria}.

\bibitem[Umlauft et~al., 2019]{umlauft2019}
Umlauft, M., Placzek, M., Konietschke, F., and Pauly, M. (2019).
\newblock Wild bootstrapping rank-based procedures: Multiple testing in
  nonparametric factorial repeated measures designs.
\newblock {\em Journal of Multivariate Analysis}, 171:176--192.

\end{thebibliography}
\newpage
\appendix
\section{Proofs}
We will here only conduct the proofs for the results for \Cref{general} and \Cref{Resampling}, since the results from \Cref{Equality} can be seen as a special case thereof. This follows since there exist a matrix $\vA_j\in\R^{a\times ad}$ fulfilling $\overline \vX^{(j)}=\vA_j \overline \vX$.
Hence, by setting $\vC_j=\vP_a\vA_j$ and $\vbeta_j=\vnull_{a}$, all statements from \Cref{Equality} are direct consequences from \Cref{general}.

\begin{proof}[Proof of Lemma~\ref{DistComponentsGeneral}]\label{ProofQn}
With the central limit theorem it holds $\sqrt{N}(\overline \vX-\vmu)\stackrel{\mathcal{D}}{\to}\vZ\sim\mathcal{N}_{ad}(\vnull_{ad},\vSigma)$.
Let $\mathcal T \subset \{1,\dots, L\}$ denote the index subset of true local null hypotheses. From the consistency of $\widetilde \vSigma$ it follows $\widehat v_{\ell\ell}=2\tr([\vC_\ell^\top \vM(\vC_\ell,\widetilde\vSigma)\vC_\ell \widetilde\vSigma]^2)\stackrel{\mathcal P}{\to} v_{\ell\ell}$.
With continuous mapping  and Slutzky's theorem, therefore 

\[\sqrt{N}\cdot\left(
  \widehat v_{\ell\ell}^{-1/2}\cdot  \vC_\ell\overline \vX - \vbeta_\ell,
    \vM(\vC_\ell,\widehat\vSigma)(\vC_\ell\overline \vX - \vbeta_\ell)
\right)_{\ell\in\mathcal T} \stackrel{\mathcal{D}}{\to}
    \left(v_{\ell\ell}^{-1/2}\cdot\vC_\ell\vZ, \vM(\vC_\ell,\vSigma)\vC_\ell\vZ\right)_{\ell\in\mathcal T}=:\vUpsilon.\]
Now consider the continuous function
\[\vf:\R^{2\sum_{\ell\in\mathcal T}r_\ell}\to\R^{|\mathcal T|}, (\vx_1,\vy_1,...,\vx_d,\vy_d)\mapsto(\vy_1^\top\vx_1,...,\vy_d^\top\vx_d)^\top.\] 
Together with the continuous mapping theorem, it holds
\begin{align*}
    (  Q_{N\ell})_{\ell \in \mathcal T}&=N\cdot\left(\widehat v_{\ell\ell}^{-1/2}\cdot(\vC_\ell\overline \vX -\vbeta_\ell)^\top \vM(\vC_\ell,\widehat\vSigma)(\vC_\ell \overline \vX - \vbeta_\ell)\right)_{\ell\in\mathcal T}
    \\&= \vf\left(\sqrt{N}\cdot\left(\widehat v_{\ell\ell}^{-1/2}\cdot
   (\vC_\ell\overline \vX - \vbeta_\ell),
    \vM(\vC_\ell,\widehat\vSigma)(\vC_\ell\overline \vX - \vbeta_\ell)
\right)_{\ell\in\mathcal T}\right)
    \stackrel{\mathcal{D}}{\to} \vf(\vUpsilon) = (  Q_{\ell})_{\ell \in \mathcal T}.
\end{align*}

The distribution of the components then follows by the representation theorem for quadratic forms (see e.g. \cite{Mathai1992}).
    
\end{proof}

With
$\sqrt{N}(\overline \vX-\vmu)\stackrel{\mathcal{D}}{\to}\vZ\sim\mathcal{N}_{ad}(\vnull_{ad},\vSigma)$ it holds\\\\
$\begin{array}{ll}( Q_{N\ell}, Q_{N\ell_2})
&\stackrel{\mathcal{D}}{\to} \left([\vC_{\ell_1} \vZ]^\top \vM(\vC_{\ell_1},\vSigma)[\vC_{\ell_1} \vZ], [\vC_{\ell_2} \vZ]^\top \vM(\vC_{\ell_2},\vSigma)[\vC_{\ell_2} \vZ]\right)\cdot (v_{\ell_1\ell_1}v_{\ell_2\ell_2})^{-1/2}
\\&=
\left(\vZ^\top\vC_{\ell_1}^\top \vM(\vC_{\ell_1},\vSigma)\vC_{\ell_1} \vZ, \vZ^\top\vC_{\ell_2}^\top \vM(\vC_{\ell_2},\vSigma)\vC_{\ell_2} \vZ\right)\cdot (v_{\ell_1\ell_1}v_{\ell_2\ell_2})^{-1/2}
\end{array}$\\\\

if the local null hypotheses $\mathcal H_0^{(\ell_1)}$ and $\mathcal H_0^{(\ell_2)}$ are true, which fulfills the situation from Theorem 3.2d.4 in \cite{Mathai1992}. Therefore
we know
\begin{equation}\label{eq:CovarianceMatrix}  
\vR_{\ell_1,\ell_2}:=\Cov(Q_{\ell_1},Q_{\ell_2})=\frac{2\tr\left(\vC_{\ell_2}^\top \vM(\vC_{\ell_2},\vSigma)\vC_{\ell_2}\vSigma\vC_{\ell_2}^\top \vM(\vC_{\ell_2},\vSigma)\vC_{\ell_2}\vSigma\right)}{\sqrt{v_{\ell_1\ell_1}\cdot v_{\ell_2\ell_2}}}
\end{equation}

\paragraph{Monte-Carlo-approach}\label{MC-approach}
{For the Monte-Carlo- approach, we proceed as follows}

\begin{enumerate} \item Based on the data, calculate the empirical covariance matrix $\widehat\vSigma$.\vspace{0.1cm}
    \item Generate $\vZ_1^{MC}\sim\mathcal{N}_{ad}(\vnull_{ad},\widehat\vSigma)$.\vspace{0.1cm}
    \item Calculate the Monte-Carlo quadratic form 
    $\vQ^{MC,1} :=\begin{pmatrix}\widehat v_{11}^{-1/2}\cdot (\vC_1\vZ_1^{MC})^\top \vM(\vC_1,\widehat\vSigma)\vC_1 \vZ_1^{MC}
    \\
    \vdots\\
   \widehat v_{LL}^{-1/2}\cdot (\vC_L\vZ_1^{MC})^\top \vM(\vC_L,\widehat\vSigma)\vC_L\vZ_1^{MC}
    \end{pmatrix}$
     with $\widehat{v}_{\ell\ell}:= 2\tr([\vC_\ell^\top \vM(\vC_\ell,\widehat\vSigma)\vC_\ell \widehat\vSigma]^2)$. \vspace{0.1cm}
  
    \item Repeat steps 2. and 3. a large number of times, e.g. B=10.000 times, to obtain $\vQ^{MC,1},...,\vQ^{MC,B}$.\\

\end{enumerate}

{Then \Cref{DistComponentsGeneral} ensures that the random variables $\vQ^{MC,1},...,\vQ^{MC,B}$ are asymptotically distributed as the asymptotic distribution of the test statistic under the null hypothesis, using the consistency of $\widehat{\vSigma}$.
Thereto, it remains to calculate adequate quantiles based on them, which is done by adapting the technique from \cite{munko2024} developed for bootstrap to our Monte-Carlo approach. Denote by $q_{\ell,\beta}^{MC}$ the empirical $(1-\beta$) quantile  for {$|\vQ_\ell|$}, $\ell=1,...,p$, based on the above Monte-Carlo realization. Then we determine an appropriate level $\beta$ to control the family-wise type-I error rate through
\[\widetilde \beta= \max\left(\beta \in \left\lbrace 0,\frac{1}{B},..., \frac{B-1}{B}\right\rbrace\Big\lvert \frac{1}{B} \sum\limits_{b=1}^B \max\limits_{\ell=1,...,L}\left(\ind\left(\vQ_\ell^{MC,b}>q_{\ell,\beta}^{MC}\right)\right)\leq \alpha\right),\]
i.e., the maximum local level, which results in a global level of $\alpha$. The resulting 
method is asymptotically valid by Lemma~S8 in \cite{munko2024}.}

\begin{proof}[Proof of \Cref{PBTheorem1}]
First we define $\overline \vX_{i}^\star:=\frac{1}{n_i} \sum\limits_{k=1}^{n_i}\vX_{ik}^\star$ for $i=1,...,a$ and based on this $\overline \vX^\star=(\overline \vX_{1}^\star,...,\overline \vX_{a}^\star)^\top$ as bootstrap pendant to $\overline \vX_i$ resp. $\overline \vX$. Now we use the conditional Lindeberg-Feller-theorem to show that given the data (which we denote through $|\vX$) it holds $\overline \vX_i^\star\stackrel{\mathcal{D}}{\to} \vY_i\sim \mathcal{N}_a(\vnull_a,\vSigma_i)$.
From the generation of the bootstrap observations we know that given the data, $\vX_{i1}^\star,...,\vX_{in_i}^\star$ are independent with\\\\
$\begin{array}{lll}
1.)&&\sum\limits_{k=1}^{n_i}\E\left(\frac{\sqrt{N}}{n_i}\vX_{ik}^\star|\vX\right)=\sum\limits_{k=1}^{n_i}\frac{\sqrt{N}}{n_i}\E\left(\vX_{ik}^\star|\vX\right)=\vnull_d\\[1.7ex]
2.)&&\sum\limits_{k=1}^{n_i}\Cov\left(\frac{\sqrt{N}}{n_i}\vX_{ik}^\star|\vX\right)=\frac{N}{n_i^2}\sum\limits_{k=1}^{n_i}\widehat \vSigma_i\stackrel{\mathcal P}{\to}\frac{1}{\kappa_i}\vSigma_i\\[1.7ex]
3.)&&%\lim\limits_{N\to \infty} 
\sum\limits_{k=1}^{n_i}\E\left(\lvert\lvert\frac{\sqrt{N}}{n_i}\vX_{ik}^\star\lvert \lvert^2 \cdot\ind\left({\lvert\lvert\frac{\sqrt{N}}{n_i}\vX_{ik}^\star\lvert \lvert> \delta}\right)\Big \lvert \vX\right)\\
&=&%\lim\limits_{N\to \infty} \frac{{N}}
{n_i^2}\sum\limits_{k=1}^{n_i}\E\left(\lvert\lvert\vX_{ik}^\star\lvert \lvert^2 \cdot\ind\left({\lvert\lvert\vX_{ik}^\star\lvert \lvert> \delta \cdot {n_i}/\sqrt{N}}\right)\Big \lvert \vX\right)\\
&\leq& \frac 1 {\kappa_i}%\lim\limits_{N\to \infty} 
\sum\limits_{k=1}^{n_i}\sqrt{\E\left(\lvert\lvert\vX_{ik}^\star\lvert \lvert^4|\vX\right)}\cdot\sqrt{\E\left( \ind\left({\lvert\lvert\vX_{ik}^\star\lvert \lvert> \delta\cdot  {n_i}/\sqrt{N}}\right) \Big \lvert \vX\right)}\stackrel{P}{\rightarrow}0,
\end{array}$\\\\
where Cauchy-Bunjakowski-Schwarz inequality is used in the last line. Together with the fact that, given the data, the observations follow a multivariate normal distribution, the first expectation value is bounded. From (A1) we know $n_i/N\to \kappa_i$ and therefore $ \delta\cdot n_i/\sqrt{N}\to \infty$ so it follows with the conditional Markov inequality
\[P(||\vX_{i1}^\star||>\delta\cdot n_i/\sqrt{N}|\vX)\leq \frac{\sqrt{N}}{\delta\cdot n_i}\cdot \E(||\vX_{i1}^\star||\ |\vX)\leq \frac{\sqrt{N}}{\delta \cdot n_i}\cdot ||(\widehat \vSigma_i)^{1/2}||_{\max}\cdot\E(||\vY||).\] 
with  $\vY\sim\mathcal{N}_d(\vnull_d,\vI_d)$.
Here, we use the max norm defined as  $||\vA||_{\max}=\max\limits_{k_1=1,...,d_1}\left(\max\limits_{k_2=1,...,d_2}\left( |A_{k_1k_2}|\right)\right)$ for a matrix $\vA\in\R^{d_1\times d_2}$.
 Using $\E(||\vY||)<\infty$ and $||(\widehat \vSigma_i)^{1/2}||_{\max}\stackrel{\mathcal P}{\to} ||(\vSigma_i)^{1/2}||_{\max}<\infty$ this together with Slutzky's theorem completes the Lindeberg-Feller conditions. Now, from the independence of groups, it directly follows, that, given the data, the conditional distribution of
$\sqrt{N}\overline\vX^\star$ converges weakly to $\vZ\sim\mathcal{N}_{ad}(\vnull_{ad},\vSigma)$. \\

For the consistency of the covariance estimator, we use again the representation $\vX_{ij}^\star = \widehat{\vSigma}_i^{1/2}\vY_{ij}$ for $\vY_{11},...,\vY_{an_a}\sim\mathcal{N}_d(\vnull_d,\vI_d)$ i.i.d. and independent of the data. Then, we have
\begin{align*}
    \widehat\vSigma_i^\star &= \frac{1}{n_i-1}\sum_{j=1}^{n_i} (\vX_{ij}^\star-\overline{\vX}_{i}^\star)(\vX_{ij}^\star-\overline{\vX}_{i}^\star)^\top
    =\widehat{\vSigma}_i^{1/2} \left( \frac{1}{n_i-1}\sum_{j=1}^{n_i} (\vY_{ij}-\overline{\vY}_{i})(\vY_{ij}-\overline{\vY}_{i})^\top \right) \widehat{\vSigma}_i^{1/2}.
\end{align*}
The term in the middle is the empirical covariance matrix of $\vY_{i1},...,\vY_{in_i}$ which converges in probability to $\vI_d$. Hence, it follows $\widehat\vSigma_i^\star\stackrel{\mathcal{P}}{\to}\vSigma_i$.
Hence, it follows $\widehat\vSigma^\star = N\bigoplus_{i=1}^an_i^{-1}\widehat\vSigma_i^\star\stackrel{\mathcal{P}}{\to}\vSigma$.

Now as continuous function of consistent estimators, ${{2\tr([\vC_\ell^\top \vM(\vC_\ell,\widehat \vSigma^\star)\vC_\ell \widehat \vSigma^\star]^2)}}$ is also a consistent estimator of $v_{\ell\ell}$. Therefore  all necessary properties are shown, and with Slutzky's theorem, the result follows since $\vQ_N^\star$ can be written by using the continuous function $\vf$ and the estimator ${\widehat\vSigma^{\star}}$.\end{proof}

\paragraph{Wild bootstrap} For the wild bootstrap approach let $W_{ik}$, $i=1,...,a,$ be i.i.d. with $\E(W_{ik})=0$, $\Var(W_{ik})=1$ and $\E(W_{ik}^4)<\infty$, while usually distributions are standard normal or Mammen (\cite{Mammen1993}). Then the wild bootstrap observation vectors be given through $\vX_{ik}^\dagger=W_{ik}(\vX_{ik}-\overline{\vX}_i)$ and we define
\[{\overline \vX}^\dagger=\left(\frac{1}{n_1}\sum\limits_{k=1}^{n_1}{\vX_{1k}^{\dagger}}^\top,...,\frac{1}{n_a}\sum\limits_{k=1}^{n_a}{\vX_{ak}^\dagger}^\top\right)^\top.\] Analogues to the parametric bootstrap, based on our wild bootstrap we define  
\[Q_{N\ell}^\dagger={N}\cdot\frac{[\vC_\ell{\overline \vX}^\dagger ]^\top\vM(\vC_\ell,{\widehat \vSigma^{\dagger}})[\vC_\ell{\overline \vX}^\dagger]}{\sqrt{2\tr\left(\left[\vC_\ell^\top \vM(\vC_\ell,{\widehat \vSigma^{\dagger}})\vC_\ell{\widehat \vSigma^{\dagger}}\right]^2\right)}},\quad\ell\in\{1,\dots,L\}.\] 
 using the covariance estimators  $\widehat\vSigma^{\dagger}$ and $\widetilde Q_{Nj}^\dagger$ based on ${\widehat\vSigma^{(j)\dagger}}$.

   \begin{theorem}\label{WBTheorem1}
If Assumptions (A1) and (A2) are fulfilled, it holds: Given the data, the conditional distribution of
\begin{itemize}
\item[(a)]{ $\sqrt{N}\ {\overline \vX}^\dagger$ converges weakly to $ \mathcal{N}_{a}\left(\vnull_{a},\vSigma\right)$ in probability}.
\item[(b)]  {$\widetilde\vQ_N^\dagger=(\widetilde Q_{N1}^\dagger,...,\widetilde Q_{Nd}^\dagger)^\top$ converges weakly to $\widetilde\vQ$ in probability.}
\item[(c)]  {$\vQ_N^\dagger=(Q_{N1}^\dagger,...,Q_{NL}^\dagger)^\top$ converges weakly to $\vQ$ in probability, if also (A3) is fulfilled.}
\end{itemize}
{Moreover, we have $\widehat \vSigma^{\dagger} \stackrel{\mathcal{P}}{\to} \vSigma$ and the unknown covariance matrix can be estimated through this estimator.}
\end{theorem}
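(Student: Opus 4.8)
The plan is to mirror the proof of \Cref{PBTheorem1}, replacing the genuinely Gaussian parametric bootstrap variables by the wild bootstrap vectors $\vX_{ik}^\dagger=W_{ik}(\vX_{ik}-\overline\vX_i)$ and again invoking the conditional Lindeberg--Feller central limit theorem. Conditionally on the data, the vectors $\vX_{i1}^\dagger,\dots,\vX_{in_i}^\dagger$ are independent, and since $\E(W_{ik})=0$ each has conditional mean $\vnull_d$, so the centering condition $\sum_{k=1}^{n_i}\E(\tfrac{\sqrt N}{n_i}\vX_{ik}^\dagger\mid\vX)=\vnull_d$ is immediate. For the conditional covariance, writing $\vd_{ik}=\vX_{ik}-\overline\vX_i$ and using $\Var(W_{ik})=1$ gives $\sum_{k=1}^{n_i}\Cov(\tfrac{\sqrt N}{n_i}\vX_{ik}^\dagger\mid\vX)=\tfrac{N}{n_i^2}\sum_{k=1}^{n_i}\vd_{ik}\vd_{ik}^\top=\tfrac{N(n_i-1)}{n_i^2}\widehat\vSigma_i$, which differs from the parametric case only by the factor $\tfrac{n_i-1}{n_i}\to 1$ and therefore converges in probability to $\kappa_i^{-1}\vSigma_i$ by (A1), (A2) and consistency of $\widehat\vSigma_i$. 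Independence across groups then reproduces the target block structure $\vSigma=\bigoplus_{i=1}^a\kappa_i^{-1}\vSigma_i$.

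The hard part will be the conditional Lindeberg condition, which is more delicate than in the parametric case because each summand now factors into a (possibly heavy-tailed) weight times a data point of random magnitude. For any level $M>0$ I would bound the indicator via $\ind\{\tfrac{\sqrt N}{n_i}|W_{ik}|\,\|\vd_{ik}\|>\delta\}\le\ind\{|W_{ik}|>M\}+\ind\{\tfrac{\sqrt N}{n_i}M\|\vd_{ik}\|>\delta\}$, splitting the Lindeberg sum into two pieces. The first is dominated by $\E(W^2\ind\{|W|>M\})$ times the bounded factor $\tfrac{N(n_i-1)}{n_i^2}\tr(\widehat\vSigma_i)$, and $\E(W^2\ind\{|W|>M\})\to 0$ as $M\to\infty$ since $\E(W^2)=1$. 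The second piece equals the truncated second moment $\tfrac{N}{n_i^2}\sum_k\|\vd_{ik}\|^2\ind\{\|\vd_{ik}\|>\delta n_i/(M\sqrt N)\}$; because the truncation level $\delta n_i/(M\sqrt N)\to\infty$, this vanishes in probability under (A2) by the negligibility of large observations (a standard consequence of finite second moments, namely $\max_{k\le n_i}\|\vd_{ik}\|^2/n_i\stackrel{\mathcal{P}}{\to}0$ together with the law of large numbers for $\tfrac1{n_i}\sum_k\|\vd_{ik}\|^2$). A double-limit argument --- first $N\to\infty$, then $M\to\infty$ --- yields the Lindeberg condition, and the conditional CLT combined with independence of groups gives part (a).

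For the covariance statement I would use the representation $\vX_{ik}^\dagger=W_{ik}\vd_{ik}$ to compute the conditional mean of the bootstrap empirical covariance, which after expanding the centering yields $\E(\widehat\vSigma_i^\dagger\mid\vX)=\tfrac{n_i-1}{n_i}\widehat\vSigma_i\stackrel{\mathcal{P}}{\to}\vSigma_i$. Consistency then follows once the conditional variance of the entries of $\widehat\vSigma_i^\dagger$ is shown to vanish; the governing term is $\tfrac{1}{n_i^2}\sum_k\|\vd_{ik}\|^4\le\tfrac{\max_k\|\vd_{ik}\|^2}{n_i}\cdot\tfrac1{n_i}\sum_k\|\vd_{ik}\|^2$, which tends to $0$ under (A2) (again by negligibility of the maximum) provided the weights have a finite fourth moment, as the standard normal and Mammen weights do. Assembling the blocks gives $\widehat\vSigma^\dagger=N\bigoplus_{i=1}^a n_i^{-1}\widehat\vSigma_i^\dagger\stackrel{\mathcal{P}}{\to}\vSigma$. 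Finally, parts (b) and (c) follow exactly as in \Cref{PBTheorem1} and \Cref{DistComponentsGeneral}: writing $\vQ_N^\dagger$ and $\widetilde\vQ_N^\dagger$ as the continuous map $\vf$ applied to $\sqrt N\,\vC_\ell\overline\vX^\dagger$ and $\vM(\vC_\ell,\widehat\vSigma^\dagger)\vC_\ell\overline\vX^\dagger$, the weak convergence from (a), the consistency $\widehat\vSigma^\dagger\stackrel{\mathcal{P}}{\to}\vSigma$, continuity of $\vM$, Slutsky's theorem and the continuous mapping theorem deliver convergence to $\vQ$ (respectively $\widetilde\vQ$), where (c) additionally requires (A3) to guarantee that the normalizing traces $v_{\ell\ell}$ are positive.
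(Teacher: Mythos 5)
Your proposal is correct and shares the paper's overall skeleton (mirror the parametric bootstrap proof: conditional Lindeberg--Feller CLT for $\sqrt{N}\,\overline{\vX}^\dagger$, consistency of $\widehat\vSigma^\dagger$, then Slutsky and continuous mapping for (b) and (c)), and your conditional mean and covariance computations, including the factor $\tfrac{n_i-1}{n_i}$, coincide exactly with the paper's. Where you genuinely diverge is in the two technical verifications. For the Lindeberg condition, the paper bounds $\E\bigl(\|\vX_{ik}^\dagger\|^2\ind\{\cdot\}\mid\vX\bigr)$ by Cauchy--Schwarz into $\sqrt{\E(\|\vX_{ik}^\dagger\|^2\mid\vX)}\cdot\sqrt{\PP(\|\vX_{ik}^\dagger\|>\delta n_i/\sqrt N\mid\vX)}$ and kills the probability factor with the conditional Markov inequality (this implicitly leans on higher conditional moments of $\vX_{ik}^\dagger$, i.e.\ on moments of the weights beyond the second, which the paper acknowledges only through its cryptic remark that $\vX_{ik}^\dagger$ has ``fourth second moments''); you instead truncate the weights at a level $M$ and the data at a level $\delta n_i/(M\sqrt N)\to\infty$, handle the two pieces by $\E(W^2\ind\{|W|>M\})\to0$ and by negligibility of large observations, and conclude via a double limit --- an argument that needs only $\E(W^2)=1$ and (A2) at this stage. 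For the consistency of $\widehat\vSigma^\dagger$, the paper simply asserts that the remaining steps are ``analogous'' to the parametric bootstrap; but the parametric proof rests on the representation $\vX_{ij}^\star=\widehat\vSigma_i^{1/2}\vY_{ij}$ with Gaussian $\vY_{ij}$, which has no wild bootstrap counterpart, so your explicit route --- computing $\E(\widehat\vSigma_i^\dagger\mid\vX)=\tfrac{n_i-1}{n_i}\widehat\vSigma_i$ (which is exactly right) and showing the conditional variance vanishes via $\tfrac1{n_i^2}\sum_k\|\vd_{ik}\|^4\to0$ under finite fourth moments of the weights (satisfied by the normal and Mammen choices) --- actually fills a step the paper leaves essentially unproved. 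In short: your proof buys weaker moment assumptions in the CLT step and a complete covariance-consistency argument, at the cost of a somewhat longer truncation argument where the paper's Cauchy--Schwarz/Markov one-liner suffices for its stated weight distributions.
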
  
\begin{proof}

For the wild bootstrap approach, we follow the same steps as for the parametric bootstrap.\\
$\begin{array}{lll}
1.)&&\sum\limits_{k=1}^{n_i}\E\left(\frac{\sqrt{N}}{n_i}\vX_{ik}^\dagger|\vX\right)=\sum\limits_{k=1}^{n_i}\frac{\sqrt{N}}{n_i}\E(W_{ik})(\vX_{ik}-\overline{\vX}_i)=\vnull_d\\[1.7ex]
2.)&&\sum\limits_{k=1}^{n_i}\Cov\left(\frac{\sqrt{N}}{n_i}\vX_{ik}^\dagger|\vX\right)\\&=&\frac{N}{n_i^2}\sum\limits_{k=1}^{n_i}
\E((\vX_{ik}^\dagger)(\vX_{ik}^\dagger)^\top|\vX)\\
&=&\frac{N}{n_i^2}\sum\limits_{k=1}^{n_i}\E(W_{ik}^2)\cdot (\vX_{ik}-\overline \vX_{i})(\vX_{ik}-\overline \vX_{i})^\top\\&=&\frac{N(n_i-1)}{n_i^2}  \frac{1}{n_i-1}\sum\limits_{k=1}^{n_i}1\cdot (\vX_{ik}-\overline \vX_{i})(\vX_{ik}-\overline \vX_{i})^\top
\\&=&\frac{N(n_i-1)}{n_i^2}\widehat \Sigma_i\stackrel{\mathcal P}{\to}\frac{1}{\kappa_i}\vSigma_i\\[1.7ex]
\end{array}\\
\begin{array}{lll}3.)&&\lim\limits_{N\to \infty} \sum\limits_{k=1}^{n_i}\E\left(\lvert\lvert\frac{\sqrt{N}}{n_i}\vX_{ik}^\star\lvert \lvert^2 \cdot\ind\left({\lvert\lvert\frac{\sqrt{N}}{n_i}\vX_{ik}^\star\lvert \lvert> \delta}\right)\Big \lvert \vX\right)\\
&=&\lim\limits_{N\to \infty} \frac{{N}}{n_i^2}\sum\limits_{k=1}^{n_i}\E\left(\lvert\lvert\vX_{ik}^\dagger\lvert \lvert^2 \cdot\ind\left({\lvert\lvert\vX_{ik}^\dagger \lvert \lvert> \delta\cdot {n_i}/\sqrt{N}}\right)\Big \lvert \vX\right)\\
&\leq& \frac 1 {\kappa_i}\lim\limits_{N\to \infty} \sum\limits_{k=1}^{n_i}\sqrt{\E\left(\lvert\lvert\vX_{ik}^\dagger\lvert \lvert^4|\vX\right)}\cdot\sqrt{\E\left(\ind\left({\lvert\lvert\vX_{ik}^\dagger\lvert \lvert> \delta\cdot {n_i}/\sqrt{N}}\right)\Big \lvert \vX\right)}\stackrel{\mathcal P}{=}0.
\end{array}$\\\\
Since, given the data, $\vX_{ik}^\dagger$ has finite fourth moments, all remaining steps can be done analogously to the proof of \Cref{PBTheorem1}.
\end{proof}

\begin{Le}\label{LemmaWB}
For $\alpha \in(0,1)$, let $q_{1,B}^{\dagger},...,q_{L,B}^{\dagger}$ be the corresponding quantiles of $\vQ_N^{\dagger}$ according to \cite{munko2024}. Then
\[\varphi_{\vQ}^{\dagger}(N,B):=\ind \left\{\max\limits_{\ell=1,...,L} \left(\frac{ Q_{N\ell}}{q_{\ell,B}^{\dagger}}\right)>1\right\}\]
is an asymptotic correct level $\alpha$ test for the global hypothesis $\mathcal H_0:\vC\vmu=\vbeta$ that controls the family-wise error rate asymptotically in the strong sense.
\end{Le}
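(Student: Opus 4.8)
The plan is to mirror the proof of Lemma~\ref{LemmaBS} line by line, substituting the wild bootstrap convergence of Theorem~\ref{WBTheorem1} for the parametric bootstrap convergence of Theorem~\ref{PBTheorem1}. Theorem~\ref{WBTheorem1}(c) supplies the two ingredients that drive the whole argument: conditionally on the data, $\vQ_N^\dagger \convD \vQ$ in probability, with $\vQ=(Q_1,\dots,Q_L)^\top$ the limit vector from Lemma~\ref{DistComponentsGeneral}, and $\widehat\vSigma^\dagger\stackrel{\mathcal P}{\to}\vSigma$. Since (A3) forces $v_{\ell\ell}>0$ for every $\ell$, each marginal $Q_\ell$ is a nondegenerate, strictly positive weighted sum of independent $\chi_1^2$ variables, so its distribution function $F_\ell$ is continuous and strictly increasing on its support; in particular the joint law of $\vQ$ assigns no mass to the boundary $\{\max_\ell Q_\ell/q_\ell=1\}$.

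First I would establish consistency of the wild bootstrap quantiles. The construction of \cite{munko2023b} maps the joint distribution of $\vQ$ to component quantiles $q_\ell$ calibrated through a common level so that $\E[\ind\{\max_\ell Q_\ell/q_\ell>1\}]=\alpha$. As the empirical distribution of the replicates $\vQ_N^{\dagger,1},\dots,\vQ_N^{\dagger,B}$ converges weakly (in probability) to the continuous law of $\vQ$ when $N\to\infty$ and $B=B_N\to\infty$, continuity and strict monotonicity of the $F_\ell$ yield $q_{\ell,B_N}^\dagger\stackrel{\mathcal P}{\to}q_\ell$ jointly in $\ell$.

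Then I would combine this with $\vQ_N\convD\vQ$ from Lemma~\ref{DistComponentsGeneral}. Applying Slutsky's theorem and the continuous mapping theorem to the pair $\big(\vQ_N,(q_{\ell,B_N}^\dagger)_\ell\big)$, under the global null $\mathcal H_0$ we obtain
\[
\lim_{N\to\infty}\E\!\left[\varphi_{\vQ}^{\dagger}(N,B_N)\right]=P\!\left(\max_{\ell=1,\dots,L}\frac{Q_\ell}{q_\ell}>1\right)=\alpha ,
\]
the boundary having null probability by continuity of $\vQ$. For strong control of the family-wise error rate, I would restrict to the index set $\mathcal T$ of true local hypotheses and use the subvector convergence $(Q_{N\ell})_{\ell\in\mathcal T}\convD(Q_\ell)_{\ell\in\mathcal T}$ from Lemma~\ref{DistComponentsGeneral}; since the quantiles $q_\ell$ are calibrated on the full family and do not depend on $\mathcal T$, the limiting rejection probability over true nulls is $P(\exists\,\ell\in\mathcal T:Q_\ell>q_\ell)\le P(\exists\,\ell:Q_\ell>q_\ell)=\alpha$.

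The main obstacle I expect is the bootstrap consistency of the \cite{munko2023b} quantiles, that is, promoting the conditional weak convergence $\vQ_N^\dagger\convD\vQ$ to the joint quantile convergence $q_{\ell,B_N}^\dagger\stackrel{\mathcal P}{\to}q_\ell$. This hinges on continuity of the calibration map at the limit law and on the double limit in $N$ and $B_N$ being controlled simultaneously; both points are handled exactly as in the proof of Lemma~\ref{LemmaBS}, the only change being that the conditional central limit behaviour of $\sqrt N\,\overline\vX^\dagger$ now comes from the wild bootstrap Lindeberg--Feller verification in Theorem~\ref{WBTheorem1} rather than from its parametric counterpart.
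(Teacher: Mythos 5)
Your proposal is correct and follows essentially the same route as the paper, whose own proof treats \Cref{LemmaMC}, \Cref{LemmaBS}, and \Cref{LemmaWB} in one unified argument: the conditional weak convergence from \Cref{WBTheorem1} plus continuity and strict monotonicity of the marginals of $\vQ$ under (A3) give joint consistency of the calibrated quantiles, $q_{\ell,B_N}^{\dagger}\stackrel{\mathcal{P}}{\to}q_\ell$ with $F(q_1,\dots,q_L)=1-\alpha$ (the paper formalizes exactly your ``main obstacle'' step by invoking Lemmas~S7/S8 of \cite{munko2024multipletestsrestrictedmean}). The asymptotic level statement and the strong family-wise error rate control then follow precisely by your Slutsky/continuous-mapping argument and the subvector convergence over the true-hypothesis index set $\mathcal T$, as in the paper.
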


\paragraph{Proof of the asymptotic correctness and family-wise error rate control of the tests}
In this paragraph, we prove \Cref{LemmaMC}, \Cref{LemmaBS}, and \Cref{LemmaWB}.
Therefore, we aim to apply Lemma~S8 in the supplement of \cite{munko2024} with $\varepsilon_N := 1/B_N$ and $F:\R^L\to[0,1]$ being the distribution function of ${\vQ}$, which is continuous with strictly increasing marginal distribution functions on $[0,\infty)$ if (A3) holds.
For \Cref{LemmaMC}, we choose $F_n$ as empirical distribution function of the Monte-Carlo replicates ${\vQ}^{MC,1},\dots,{\vQ}^{MC,B_N}$.
For the bootstrap methods, we choose $F_N$ as empirical distribution function of the Monte-Carlo replicates ${\vQ}^{\star,1}_N,\dots,{\vQ}^{\star,B_N}_N$ resp. ${\vQ}^{\dagger,1}_N,\dots,{\vQ}^{\dagger,B_N}_N$.
Moreover, $F_{N,\ell},\ell=1,\dots,L,$ are chosen as the marginal distribution functions of $F_N$.
Then, the consistency of the empirical covariance matrix, \Cref{PBTheorem1} and \Cref{WBTheorem1} ensure the condition of Lemma~S7 in \cite{munko2024}, respectively, and, thus, (S10) therein follows.
Furthermore, Remark~1 in \cite{munko2024} yields (S11) and the function $\mathrm{FWER}$ in Lemma~S8 in the supplement of \cite{munko2024} can be shown to be strictly increasing.
Hence, Lemma~S8 in \cite{munko2024} is applicable and provides that the quantiles $q_{\ell,B_N}^{MC}, q_{\ell,B_N}^{\star}, q_{\ell,B_N}^{\dagger}$ all converge in probability to $q_\ell$ fulfilling $F(q_1,\dots,q_L) = 1 - \alpha$. Thus, the asymptotic correctness follows.
The asymptotic control of the family-wise error rate in the strong sense follows from
\begin{align*}
    \E\left[\max\limits_{\ell\in\mathcal T} \ind\left\{\left(\frac{Q_{N\ell}}{q_{\ell,B_N}^{MC}}\right)>1\right\}\right] &\to \E\left[\max\limits_{\ell\in\mathcal T} \ind\left\{\left(\frac{ Q_{\ell}}{q_{\ell}}\right)>1\right\}\right]
    \\& \leq \E\left[\max\limits_{\ell\in\{1,\dots,L\}} \ind\left\{\left(\frac{ Q_{\ell}}{q_{\ell}}\right)>1\right\}\right]
    \\& = 1 - F(q_1,\dots,q_L) = \alpha
\end{align*} for all index subsets $\mathcal T \subset\{1,\dots,L\}$ of true hypotheses. Analogously, the asymptotic control of the family-wise error rate in the strong sense holds for the bootstrap methods.

\section{Further simulations}

Although different quadratic forms can be used for QFMCT, up to now, only the ATS was used. Here, we additionally considered the QFMCT based on the WTS, as comparison and also consider the wild bootstrap. 
For a better overview, we remind you again of the tests considered:
\begin{itemize}
    \item $\varphi$: a classical multiple contrast test with a Tukey-type contrast matrix containing 24 local hypotheses, using equicoordinate quantile as well as parametric bootstrap ($\star$) or wild bootstrap ($\dagger$).
    \item $\varphi_{\vQ}$: the new quadratic form based multiple contrast test using WTS and ATS as quadratic forms and crictial values from a Monte-Carlo approach (MC), parametric bootstrap ($\star$) or wild bootstrap ($\dagger$).
    \item $\psi_{ATS}$: a single ATS for testing the global hypotheses  respectively one for each of the local hypothesis for the local test, which in this case are adjusted with the approach developed in \cite{holm1979}. For each of them a parametric bootstrap ($\star$)  or wild bootstrap ($\dagger$) is used.
\end{itemize}
\subsection{Global}
The results can be seen in \Cref{SupCSA1}-\ref{SupGSA2}. First of all, we want to focus on comparing both bootstrap approaches, parametric and wild. While for the classical ATS the wild bootstrap seems to be slightly better, for all multiple contrast test, the parametric bootstrap performs better. The wild bootstrap is mostly more liberal, while the difference gets smaller for increasing sample sizes. Of course, this often leads to even higher power compared with the other test, which allows the choice of the procedure in accordance with the respective focus. Also, the QFMCT based on the WTS performs well for both resampling techniques. In a direct comparison,
the ATS is slightly better overall regarding the type-I error rate as well as regarding the power.  
Moreover, the ATS has a shorter calculation time and less restrictive conditions (for the ATS condition (A3) is equivalent to $\vC_\ell\vSigma\vC_\ell^\top\neq \vnull_{r_\ell\times r_\ell}$). So, the usage of the WTS instead of the ATS for QFMCT is only reasonable if its other properties are required, like being invariant under scale transformations of the data.

   \begin{table}[htbp]
\begin{center}
\begin{scriptsize}
\begin{tabular}{ |l|r|r|r|r|r|r|r|r|r|r|} \hline
\rowcolor{ashgrey}  & {\hspace{-0.1cm} N\hspace{0.01cm}} & {\hspace{-0.1cm} $\delta=0$\hspace{-0.1cm}}&{\hspace{-0.1cm}  $\delta=0.25$\hspace{-0.1cm}}&{\hspace{-0.1cm}  $\delta=0.5$\hspace{-0.1cm}}&{ \hspace{-0.1cm}$\delta=0.75$\hspace{-0.1cm}}&{\hspace{-0.1cm}  $\delta=1.00$\hspace{-0.1cm}}& {\hspace{-0.1cm} $\delta=1.25$\hspace{-0.1cm}}& {\hspace{-0.1cm} $\delta=1.5$\hspace{-0.1cm}}&{ \hspace{-0.1cm} $\delta=1.75$\hspace{-0.1cm}}& {\hspace{-0.1cm} $\delta=2.00$\hspace{-0.1cm}}\\
\hline  \rowcolor{lightyellow}{$\varphi$} & 30 & 30.38 & 31.37 & 35.15 & 40.75 & 47.90 & 56.84 & 66.03 & 74.12 & 81.89 \\ [0.03cm] 
 \rowcolor{gainsboro} {$\varphi^{\star}$} & 30 & 6.01 & 6.13 & 7.86 & 9.68 & 13.49 & 18.50 & 24.63 & 31.67 & 39.27 \\ [0.03cm] 
  \rowcolor{lightyellow}{$\varphi^{\dagger}$} & 30 & 6.17 & 6.35 & 7.75 & 9.44 & 13.15 & 18.01 & 23.71 & 30.37 & 37.38 \\ [0.03cm] 
 \rowcolor{gainsboro} {$\varphi_{\vQ}^{MC}(ATS)$} & 30 & 15.89 & 16.90 & 19.79 & 25.17 & 32.42 & 41.48 & 50.68 & 61.58 & 71.19 \\ 
  \rowcolor{lightyellow}$\varphi_{\vQ}^\star(ATS)$ & 30 & 6.46 & 6.71 & 8.41 & 10.74 & 14.90 & 21.12 & 27.77 & 35.77 & 44.55 \\ 
 \rowcolor{gainsboro} $\varphi_{\vQ}^\dagger(ATS)$ & 30 & 8.54 & 8.73 & 10.73 & 14.10 & 18.74 & 25.68 & 33.15 & 42.02 & 51.19 \\ 
 \rowcolor{lightyellow} $\varphi_{\vQ}^\star(WTS)$ & 30 & 7.08 & 7.32 & 8.91 & 11.43 & 15.08 & 20.54 & 26.44 & 33.53 & 40.89 \\ 
 \rowcolor{gainsboro} $\varphi_{\vQ}^\dagger(WTS)$ & 30 & 10.31 & 10.52 & 12.37 & 15.15 & 20.00 & 26.01 & 33.18 & 40.57 & 48.46 \\ 
\rowcolor{lightyellow}  $\psi_{ATS}^\star$ & 30 & 2.66 & 2.87 & 4.36 & 6.71 & 11.54 & 17.73 & 26.17 & 36.92 & 49.16 \\ [0.03cm] 
 \rowcolor{gainsboro} $\psi_{ATS}^\dagger$ & 30 & \bf{4.95} & 5.47 & 8.01 & 11.78 & 17.80 & 26.09 & 36.00 & 46.97 & 59.50 \\ [0.05cm] \hline
\rowcolor{lightyellow}   {$\varphi$} & 60 & 14.78 & 17.16 & 23.94 & 33.74 & 47.46 & 63.68 & 77.58 & 88.02 & 94.46 \\ [0.03cm] 
\rowcolor{gainsboro}  {$\varphi^{\star}$} & 60 & 5.66 & 6.25 & 9.78 & 15.67 & 26.24 & 39.01 & 55.24 & 69.99 & 82.14 \\ [0.03cm] 
 \rowcolor{lightyellow} {$\varphi^{\dagger}$} & 60 & \bf{5.27} & 6.08 & 9.38 & 15.23 & 25.22 & 37.81 & 54.22 & 68.76 & 81.63 \\ [0.03cm] 
 \rowcolor{gainsboro} {$\varphi_{\vQ}^{MC}(ATS)$} & 60 & 9.85 & 11.19 & 16.93 & 25.90 & 40.61 & 55.80 & 71.69 & 84.33 & 92.68 \\ 
\rowcolor{lightyellow}  $\varphi_{\vQ}^\star(ATS)$ & 60 & 5.55 & 6.31 & 10.21 & 16.36 & 28.25 & 42.22 & 58.41 & 73.74 & 85.80 \\ 
 \rowcolor{gainsboro} $\varphi_{\vQ}^\dagger(ATS)$ & 60 & 6.02 & 6.88 & 11.03 & 17.54 & 30.01 & 43.44 & 60.09 & 74.69 & 86.35 \\ 
\rowcolor{lightyellow}  $\varphi_{\vQ}^\star(WTS)$ & 60 & 5.91 & 6.38 & 10.43 & 15.80 & 26.57 & 38.66 & 53.77 & 68.13 & 80.99 \\ 
 \rowcolor{gainsboro} $\varphi_{\vQ}^\dagger(WTS)$ & 60 & 5.90 & 6.62 & 10.15 & 16.05 & 26.01 & 38.43 & 53.30 & 67.70 & 80.64 \\ 
\rowcolor{lightyellow}  $\psi_{ATS}^\star$ & 60 & 3.85 & 4.81 & 9.54 & 17.76 & 33.13 & 50.33 & 68.81 & 83.92 & 93.02 \\ [0.05cm] 
 \rowcolor{gainsboro} $\psi_{ATS}^\dagger$ & 60 & 5.51 & 6.67 & 12.65 & 22.13 & 38.38 & 56.13 & 73.52 & 86.86 & 94.67 \\ [0.05cm] \hline
\rowcolor{lightyellow}   {$\varphi$} & 120 & 9.43 & 12.75 & 24.26 & 43.86 & 67.74 & 86.05 & 95.87 & 99.31 & 99.87 \\ [0.03cm] 
 \rowcolor{gainsboro} {$\varphi^{\star}$} & 120 & \bf{5.41} & 7.32 & 15.99 & 32.24 & 55.76 & 77.69 & 91.78 & 98.05 & 99.64 \\ [0.03cm] 
\rowcolor{lightyellow}  {$\varphi^{\dagger}$} & 120 & \bf{5.37} & 7.22 & 15.74 & 32.06 & 55.11 & 77.38 & 91.44 & 97.94 & 99.63 \\ [0.03cm] 
 \rowcolor{gainsboro} {$\varphi_{\vQ}^{MC}(ATS)$} & 120 & 6.95 & 10.39 & 20.73 & 40.35 & 64.67 & 83.84 & 94.90 & 99.00 & 99.87 \\ 
\rowcolor{lightyellow}  $\varphi_{\vQ}^\star(ATS)$ & 120 & \bf{4.95} & 7.90 & 16.32 & 34.06 & 57.67 & 79.04 & 92.55 & 98.50 & 99.73 \\ 
 \rowcolor{gainsboro} $\varphi_{\vQ}^\dagger(ATS)$ & 120 & \bf{4.87} & 8.08 & 16.87 & 34.43 & 58.24 & 79.08 & 92.82 & 98.64 & 99.71 \\ 
\rowcolor{lightyellow}  $\varphi_{\vQ}^\star(WTS)$ & 120 & 5.57 & 7.39 & 15.72 & 31.02 & 52.81 & 75.26 & 89.88 & 97.37 & 99.47 \\ 
 \rowcolor{gainsboro} $\varphi_{\vQ}^\dagger(WTS)$ & 120 & \bf{5.35} & 7.27 & 15.43 & 30.88 & 52.73 & 74.89 & 89.72 & 97.35 & 99.47 \\ 
\rowcolor{lightyellow}  $\psi_{ATS}^\star$ & 120 & 4.35 & 7.77 & 19.28 & 40.93 & 69.42 & 87.94 & 97.37 & 99.57 & 99.95 \\ [0.05cm] 
\rowcolor{gainsboro}  $\psi_{ATS}^\dagger$ & 120 & \bf{5.29} & 9.01 & 21.29 & 44.09 & 71.84 & 89.49 & 97.77 & 99.65 & 99.95 \\ [0.05cm] 
   \hline
\end{tabular}
\end{scriptsize}
\end{center}
\caption{Power of different tests for $\Hypo^A$ under a shift-alternative for 4 unbalanced groups
with  $n_1=n_2= N/3$ and $n_3=n_4= N/6$.
The 4-dimensional error terms are based on the standard normal distribution and covariance matrices $(\vSigma_1)_{j_1 j_2}=(\vSigma_2)_{j_1 j_2}=0.65^{|j_1-j_2|}$ respective  $\vSigma_3=\vSigma_4=\diag(3,4,5,6)+\veins_4\veins_4^\top$.}
\label{SupCSA1}
\end{table}

   \begin{table}[htbp]
\begin{center}
\begin{scriptsize}
\begin{tabular}{ |l|r|r|r|r|r|r|r|r|r|r|} \hline
\rowcolor{ashgrey}  & {\hspace{-0.1cm} N\hspace{0.01cm}} & {\hspace{-0.1cm} $\delta=0$\hspace{-0.1cm}}&{\hspace{-0.1cm}  $\delta=0.25$\hspace{-0.1cm}}&{\hspace{-0.1cm}  $\delta=0.5$\hspace{-0.1cm}}&{ \hspace{-0.1cm}$\delta=0.75$\hspace{-0.1cm}}&{\hspace{-0.1cm}  $\delta=1.00$\hspace{-0.1cm}}& {\hspace{-0.1cm} $\delta=1.25$\hspace{-0.1cm}}& {\hspace{-0.1cm} $\delta=1.5$\hspace{-0.1cm}}&{ \hspace{-0.1cm} $\delta=1.75$\hspace{-0.1cm}}& {\hspace{-0.1cm} $\delta=2.00$\hspace{-0.1cm}}\\
\hline \rowcolor{lightyellow} {$\varphi$} & 30 & 28.14 & 29.26 & 33.44 & 39.88 & 49.15 & 58.19 & 67.34 & 76.52 & 83.91 \\ [0.03cm] 
 \rowcolor{gainsboro} {$\varphi^{\star}$} & 30 & \bf{4.83} & 5.25 & 6.76 & 9.71 & 13.39 & 18.94 & 24.80 & 33.19 & 42.53 \\ [0.03cm] 
 \rowcolor{lightyellow} {$\varphi^{\dagger}$} & 30 & 5.55 & 5.95 & 7.25 & 10.61 & 14.22 & 19.80 & 26.16 & 33.48 & 42.52 \\ [0.03cm] 
 \rowcolor{gainsboro} {$\varphi_{\vQ}^{MC}(ATS)$} & 30 & 14.53 & 15.64 & 19.02 & 24.64 & 33.37 & 42.77 & 52.40 & 63.78 & 73.38 \\ 
\rowcolor{lightyellow}  $\varphi_{\vQ}^\star(ATS)$ & 30 & 5.73 & 5.94 & 7.59 & 10.70 & 15.62 & 21.46 & 28.33 & 38.29 & 48.17 \\ 
 \rowcolor{gainsboro} $\varphi_{\vQ}^\dagger(ATS)$ & 30 & 7.63 & 8.25 & 9.82 & 13.97 & 20.17 & 26.59 & 34.41 & 45.31 & 55.18 \\ 
\rowcolor{lightyellow}  $\varphi_{\vQ}^\star(WTS)$ & 30 & 5.75 & 6.17 & 7.96 & 11.10 & 15.51 & 21.29 & 27.11 & 35.43 & 44.45 \\ 
 \rowcolor{gainsboro} $\varphi_{\vQ}^\dagger(WTS)$ & 30 & 9.19 & 9.80 & 11.59 & 15.60 & 21.00 & 27.24 & 34.91 & 43.97 & 53.27 \\ 
\rowcolor{lightyellow}  $\psi_{ATS}^\star$ & 30 & 2.07 & 2.57 & 4.05 & 7.00 & 11.98 & 18.00 & 26.62 & 38.22 & 50.42 \\ [0.03cm] 
 \rowcolor{gainsboro} $\psi_{ATS}^\dagger$ & 30 & 4.54 & 5.40 & 7.55 & 11.55 & 17.94 & 26.82 & 36.49 & 48.45 & 60.75 \\ [0.05cm] \hline
 \rowcolor{lightyellow}  {$\varphi$} & 60 & 13.78 & 15.92 & 22.92 & 33.88 & 49.88 & 65.06 & 79.42 & 89.69 & 95.48 \\ [0.03cm] 
 \rowcolor{gainsboro} {$\varphi^{\star}$} & 60 & \bf{4.77} & 5.77 & 9.29 & 16.38 & 27.60 & 41.31 & 57.12 & 72.84 & 84.62 \\ [0.03cm] 
\rowcolor{lightyellow}  {$\varphi^{\dagger}$} & 60 & \bf{5.08} & 6.14 & 9.87 & 17.21 & 28.62 & 42.54 & 58.34 & 73.83 & 85.59 \\ [0.03cm] 
 \rowcolor{gainsboro} {$\varphi_{\vQ}^{MC}(ATS)$} & 60 & 8.58 & 10.72 & 16.48 & 26.78 & 41.10 & 56.50 & 72.43 & 85.23 & 93.29 \\ 
\rowcolor{lightyellow}  $\varphi_{\vQ}^\star(ATS)$ & 60 & \bf{4.61} & 6.04 & 9.71 & 16.79 & 28.51 & 42.31 & 59.01 & 74.87 & 86.02 \\ 
 \rowcolor{gainsboro} $\varphi_{\vQ}^\dagger(ATS)$ & 60 & \bf{5.30} & 6.85 & 11.03 & 18.57 & 31.25 & 45.79 & 61.72 & 76.91 & 87.79 \\ 
\rowcolor{lightyellow}  $\varphi_{\vQ}^\star(WTS)$ & 60 & \bf{5.02} & 6.07 & 9.60 & 16.68 & 27.71 & 41.20 & 56.21 & 71.58 & 83.37 \\ 
\rowcolor{gainsboro}  $\varphi_{\vQ}^\dagger(WTS)$ & 60 & \bf{5.35} & 6.37 & 10.36 & 17.56 & 28.27 & 42.51 & 57.01 & 72.59 & 84.42 \\ 
\rowcolor{lightyellow}  $\psi_{ATS}^\star$ & 60 & 3.30 & 4.66 & 8.84 & 18.13 & 32.43 & 49.82 & 67.79 & 82.88 & 92.18 \\ [0.03cm] 
 \rowcolor{gainsboro} $\psi_{ATS}^\dagger$ & 60 & \bf{5.08} & 6.58 & 12.05 & 23.23 & 38.48 & 56.31 & 73.79 & 86.96 & 94.28 \\ [0.05cm] \hline
 \rowcolor{lightyellow}  {$\varphi$} & 120 & 8.90 & 12.67 & 24.45 & 45.92 & 68.71 & 86.98 & 96.21 & 99.30 & 99.89 \\ [0.03cm] 
\rowcolor{gainsboro}  {$\varphi^{\star}$} & 120 & \bf{4.94} & 7.49 & 15.89 & 33.69 & 56.98 & 79.63 & 92.66 & 98.38 & 99.77 \\ [0.03cm] 
 \rowcolor{lightyellow} {$\varphi^{\dagger}$} & 120 & \bf{5.12} & 7.60 & 16.45 & 34.54 & 58.50 & 80.37 & 92.98 & 98.48 & 99.77 \\ [0.03cm] 
 \rowcolor{gainsboro} {$\varphi_{\vQ}^{MC}(ATS)$} & 120 & 7.25 & 10.01 & 21.17 & 41.49 & 64.19 & 84.56 & 95.34 & 98.99 & 99.83 \\ 
\rowcolor{lightyellow}  $\varphi_{\vQ}^\star(ATS)$ & 120 & \bf{5.04} & 7.41 & 16.43 & 34.58 & 57.05 & 80.10 & 92.95 & 98.30 & 99.70 \\ 
 \rowcolor{gainsboro} $\varphi_{\vQ}^\dagger(ATS)$ & 120 & \bf{5.42} & 8.09 & 17.14 & 36.28 & 58.50 & 81.42 & 93.66 & 98.48 & 99.73 \\ 
\rowcolor{lightyellow}  $\varphi_{\vQ}^\star(WTS)$ & 120 & \bf{5.00} & 7.51 & 15.48 & 32.68 & 54.66 & 76.93 & 91.11 & 97.72 & 99.60 \\ 
\rowcolor{gainsboro}  $\varphi_{\vQ}^\dagger(WTS)$ & 120 & \bf{5.18} & 7.53 & 16.15 & 33.43 & 55.24 & 77.81 & 91.42 & 97.78 & 99.58 \\ 
 \rowcolor{lightyellow} $\psi_{ATS}^\star$ & 120 & 4.19 & 7.42 & 18.87 & 41.14 & 66.78 & 87.73 & 96.88 & 99.49 & 99.98 \\ [0.03cm] 
\rowcolor{gainsboro}  $\psi_{ATS}^\dagger$ & 120 & \bf{5.20} & 8.75 & 21.49 & 44.94 & 70.11 & 89.41 & 97.45 & 99.61 & 99.98 \\ [0.05cm] 
   \hline
\end{tabular}
\end{scriptsize}
\end{center}
\caption{Power of different tests for $\Hypo^A$ under a shift-alternative for 4 unbalanced groups
with  $n_1=n_2= N/3$ and $n_3=n_4= N/6$.
The 4-dimensional error terms are based on the $t_9$ distribution and covariance matrices $(\vSigma_1)_{j_1 j_2}=(\vSigma_2)_{j_1 j_2}=0.65^{|j_1-j_2|}$ respective  $\vSigma_3=\vSigma_4=\diag(3,4,5,6)+\veins_4\veins_4^\top$.}
\label{SupCSA2}
\end{table}

  \begin{table}[htbp]
\begin{center}
\begin{scriptsize}
\begin{tabular}{ |l|r|r|r|r|r|r|r|r|r|r|} \hline
\rowcolor{ashgrey}  & {\hspace{-0.1cm} N\hspace{0.01cm}} & {\hspace{-0.1cm} $\delta=0$\hspace{-0.1cm}}&{\hspace{-0.1cm}  $\delta=0.25$\hspace{-0.1cm}}&{\hspace{-0.1cm}  $\delta=0.5$\hspace{-0.1cm}}&{ \hspace{-0.1cm}$\delta=0.75$\hspace{-0.1cm}}&{\hspace{-0.1cm}  $\delta=1.00$\hspace{-0.1cm}}& {\hspace{-0.1cm} $\delta=1.25$\hspace{-0.1cm}}& {\hspace{-0.1cm} $\delta=1.5$\hspace{-0.1cm}}&{ \hspace{-0.1cm} $\delta=1.75$\hspace{-0.1cm}}& {\hspace{-0.1cm} $\delta=2.00$\hspace{-0.1cm}}\\
\hline  \rowcolor{lightyellow}{$\varphi$} & 30 & 30.38 & 31.37 & 35.15 & 40.75 & 47.90 & 56.84 & 66.03 & 74.12 & 81.89 \\ [0.03cm] 
 \rowcolor{gainsboro} {$\varphi^{\star}$} & 30 & 6.01 & 6.13 & 7.86 & 9.68 & 13.49 & 18.50 & 24.63 & 31.67 & 39.27 \\ [0.03cm] 
 \rowcolor{lightyellow} {$\varphi^{\dagger}$} & 30 & 6.17 & 6.35 & 7.75 & 9.44 & 13.15 & 18.01 & 23.71 & 30.37 & 37.38 \\ [0.03cm] 
 \rowcolor{gainsboro} {$\varphi_{\vQ}^{MC}(ATS)$} & 30 & 15.89 & 16.90 & 19.79 & 25.17 & 32.42 & 41.48 & 50.68 & 61.58 & 71.19 \\ 
\rowcolor{lightyellow}  $\varphi_{\vQ}^\star(ATS)$ & 30 & 6.46 & 6.71 & 8.41 & 10.74 & 14.90 & 21.12 & 27.77 & 35.77 & 44.55 \\ 
\rowcolor{gainsboro}  $\varphi_{\vQ}^\dagger(ATS)$ & 30 & 8.54 & 8.73 & 10.73 & 14.10 & 18.74 & 25.68 & 33.15 & 42.02 & 51.19 \\ 
\rowcolor{lightyellow}  $\varphi_{\vQ}^\star(WTS)$ & 30 & 7.08 & 7.32 & 8.91 & 11.43 & 15.08 & 20.54 & 26.44 & 33.53 & 40.89 \\ 
 \rowcolor{gainsboro} $\varphi_{\vQ}^\dagger(WTS)$ & 30 & 10.31 & 10.52 & 12.37 & 15.15 & 20.00 & 26.01 & 33.18 & 40.57 & 48.46 \\ 
 \rowcolor{lightyellow} $\psi_{ATS}^\star$ & 30 & 2.66 & 2.87 & 4.36 & 6.71 & 11.54 & 17.73 & 26.17 & 36.92 & 49.16 \\ [0.03cm] 
 \rowcolor{gainsboro} $\psi_{ATS}^\dagger$ & 30 & \bf{4.95} & 5.47 & 8.01 & 11.78 & 17.80 & 26.09 & 36.00 & 46.97 & 59.50 \\ [0.05cm]  \hline
 \rowcolor{lightyellow}  {$\varphi$} & 60 & 14.78 & 17.16 & 23.94 & 33.74 & 47.46 & 63.68 & 77.58 & 88.02 & 94.46 \\ [0.03cm] 
 \rowcolor{gainsboro} {$\varphi^{\star}$} & 60 & 5.66 & 6.25 & 9.78 & 15.67 & 26.24 & 39.01 & 55.24 & 69.99 & 82.14 \\ [0.03cm] 
 \rowcolor{lightyellow} {$\varphi^{\dagger}$} & 60 & \bf{5.27} & 6.08 & 9.38 & 15.23 & 25.22 & 37.81 & 54.22 & 68.76 & 81.63 \\ [0.03cm] 
 \rowcolor{gainsboro} {$\varphi_{\vQ}^{MC}(ATS)$} & 60 & 9.85 & 11.19 & 16.93 & 25.90 & 40.61 & 55.80 & 71.69 & 84.33 & 92.68 \\ 
\rowcolor{lightyellow}  $\varphi_{\vQ}^\star(ATS)$ & 60 & 5.55 & 6.31 & 10.21 & 16.36 & 28.25 & 42.22 & 58.41 & 73.74 & 85.80 \\ 
 \rowcolor{gainsboro} $\varphi_{\vQ}^\dagger(ATS)$ & 60 & 6.02 & 6.88 & 11.03 & 17.54 & 30.01 & 43.44 & 60.09 & 74.69 & 86.35 \\ 
\rowcolor{lightyellow}  $\varphi_{\vQ}^\star(WTS)$ & 60 & 5.91 & 6.38 & 10.43 & 15.80 & 26.57 & 38.66 & 53.77 & 68.13 & 80.99 \\ 
 \rowcolor{gainsboro} $\varphi_{\vQ}^\dagger(WTS)$ & 60 & 5.90 & 6.62 & 10.15 & 16.05 & 26.01 & 38.43 & 53.30 & 67.70 & 80.64 \\ 
\rowcolor{lightyellow}  $\psi_{ATS}^\star$ & 60 & 3.85 & 4.81 & 9.54 & 17.76 & 33.13 & 50.33 & 68.81 & 83.92 & 93.02 \\ [0.03cm]
\rowcolor{gainsboro}  $\psi_{ATS}^\dagger$ & 60 & 5.51 & 6.67 & 12.65 & 22.13 & 38.38 & 56.13 & 73.52 & 86.86 & 94.67 \\ [0.05cm]\hline 
\rowcolor{lightyellow}  {$\varphi$} & 120 & 9.43 & 12.75 & 24.26 & 43.86 & 67.74 & 86.05 & 95.87 & 99.31 & 99.87 \\ [0.03cm] 
 \rowcolor{gainsboro} {$\varphi^{\star}$} & 120 & \bf{5.41} & 7.32 & 15.99 & 32.24 & 55.76 & 77.69 & 91.78 & 98.05 & 99.64 \\ [0.03cm] 
\rowcolor{lightyellow}  {$\varphi^{\dagger}$} & 120 & \bf{5.37} & 7.22 & 15.74 & 32.06 & 55.11 & 77.38 & 91.44 & 97.94 & 99.63 \\ [0.03cm] 
 \rowcolor{gainsboro}  {$\varphi_{\vQ}^{MC}(ATS)$}  & 120 &6.95 & 10.39 & 20.73 & 40.35 & 64.67 & 83.84 & 94.90 & 99.00 & 99.87 \\ 
\rowcolor{lightyellow} $\varphi_{\vQ}^\star(ATS)$ & 120 & \bf{4.95} & 7.90 & 16.32 & 34.06 & 57.67 & 79.04 & 92.55 & 98.50 & 99.73 \\ 
 \rowcolor{gainsboro} $\varphi_{\vQ}^\dagger(ATS)$ & 120 & \bf{4.87} & 8.08 & 16.87 & 34.43 & 58.24 & 79.08 & 92.82 & 98.64 & 99.71 \\ 
\rowcolor{lightyellow}  $\varphi_{\vQ}^\star(WTS)$ & 120 & 5.57 & 7.39 & 15.72 & 31.02 & 52.81 & 75.26 & 89.88 & 97.37 & 99.47 \\ 
 \rowcolor{gainsboro} $\varphi_{\vQ}^\dagger(WTS)$ & 120 & \bf{5.35} & 7.27 & 15.43 & 30.88 & 52.73 & 74.89 & 89.72 & 97.35 & 99.47 \\ 
\rowcolor{lightyellow}  $\psi_{ATS}^\star$ & 120 & 4.35 & 7.77 & 19.28 & 40.93 & 69.42 & 87.94 & 97.37 & 99.57 & 99.95 \\ [0.03cm] 
 \rowcolor{gainsboro} $\psi_{ATS}^\dagger$ & 120 & \bf{5.29} & 9.01 & 21.29 & 44.09 & 71.84 & 89.49 & 97.77 & 99.65 & 99.95 \\ [0.05cm] 
   \hline
\end{tabular}
\end{scriptsize}
\end{center}
\caption{Power of different tests for $\Hypo^B$ under a shift-alternative for 4 unbalanced groups
with  $n_1=n_2= N/3$ and $n_3=n_4= N/6$.
The 4-dimensional error terms are based on the standard normal distribution and covariance matrices $(\vSigma_1)_{j_1 j_2}=(\vSigma_2)_{j_1 j_2}=0.65^{|j_1-j_2|}$ respective  $\vSigma_3=\vSigma_4=\diag(3,4,5,6)+\veins_4\veins_4^\top$.}
\label{SupGSA1}
\end{table}

   \begin{table}[htbp]
\begin{center}
\begin{scriptsize}
\begin{tabular}{ |l|r|r|r|r|r|r|r|r|r|r|} \hline
\rowcolor{ashgrey}  & {\hspace{-0.1cm} N\hspace{0.01cm}} & {\hspace{-0.1cm} $\delta=0$\hspace{-0.1cm}}&{\hspace{-0.1cm}  $\delta=0.25$\hspace{-0.1cm}}&{\hspace{-0.1cm}  $\delta=0.5$\hspace{-0.1cm}}&{ \hspace{-0.1cm}$\delta=0.75$\hspace{-0.1cm}}&{\hspace{-0.1cm}  $\delta=1.00$\hspace{-0.1cm}}& {\hspace{-0.1cm} $\delta=1.25$\hspace{-0.1cm}}& {\hspace{-0.1cm} $\delta=1.5$\hspace{-0.1cm}}&{ \hspace{-0.1cm} $\delta=1.75$\hspace{-0.1cm}}& {\hspace{-0.1cm} $\delta=2.00$\hspace{-0.1cm}}\\
\hline \rowcolor{lightyellow} {$\varphi$} & 30 & 28.14 & 29.26 & 33.44 & 39.88 & 49.15 & 58.19 & 67.34 & 76.52 & 83.91 \\ [0.03cm] 
 \rowcolor{gainsboro} {$\varphi^{\star}$} & 30 & \bf{4.83} & 5.25 & 6.76 & 9.71 & 13.39 & 18.94 & 24.80 & 33.19 & 42.53 \\ [0.03cm] 
\rowcolor{lightyellow}  {$\varphi^{\dagger}$} & 30 & 5.55 & 5.95 & 7.25 & 10.61 & 14.22 & 19.80 & 26.16 & 33.48 & 42.52 \\ [0.03cm] 
 \rowcolor{gainsboro} {$\varphi_{\vQ}^{MC}(ATS)$} & 30 & 14.53 & 15.64 & 19.02 & 24.64 & 33.37 & 42.77 & 52.40 & 63.78 & 73.38 \\ 
\rowcolor{lightyellow}  $\varphi_{\vQ}^\star(ATS)$ & 30 & 5.73 & 5.94 & 7.59 & 10.70 & 15.62 & 21.46 & 28.33 & 38.29 & 48.17 \\ 
 \rowcolor{gainsboro} $\varphi_{\vQ}^\dagger(ATS)$ & 30 & 7.63 & 8.25 & 9.82 & 13.97 & 20.17 & 26.59 & 34.41 & 45.31 & 55.18 \\ 
 \rowcolor{lightyellow} $\varphi_{\vQ}^\star(WTS)$ & 30 & 5.75 & 6.17 & 7.96 & 11.10 & 15.51 & 21.29 & 27.11 & 35.43 & 44.45 \\ 
 \rowcolor{gainsboro} $\varphi_{\vQ}^\dagger(WTS)$ & 30 & 9.19 & 9.80 & 11.59 & 15.60 & 21.00 & 27.24 & 34.91 & 43.97 & 53.27 \\ 
\rowcolor{lightyellow}  $\psi_{ATS}^\star$ & 30 & 2.07 & 2.57 & 4.05 & 7.00 & 11.98 & 18.00 & 26.62 & 38.22 & 50.42 \\ [0.03cm] 
 \rowcolor{gainsboro} $\psi_{ATS}^\dagger$ & 30 & 4.54 & 5.40 & 7.55 & 11.55 & 17.94 & 26.82 & 36.49 & 48.45 & 60.75 \\ [0.05cm] \hline
 \rowcolor{lightyellow}{$\varphi$} & 60 & 13.78 & 15.92 & 22.92 & 33.88 & 49.88 & 65.06 & 79.42 & 89.69 & 95.48 \\ [0.03cm] 
\rowcolor{gainsboro}  {$\varphi^{\star}$} & 60 & \bf{4.77} & 5.77 & 9.29 & 16.38 & 27.60 & 41.31 & 57.12 & 72.84 & 84.62 \\ [0.03cm] 
\rowcolor{lightyellow}  {$\varphi^{\dagger}$} & 60 & \bf{5.08} & 6.14 & 9.87 & 17.21 & 28.62 & 42.54 & 58.34 & 73.83 & 85.59 \\ [0.03cm] 
\rowcolor{gainsboro}  {$\varphi_{\vQ}^{MC}(ATS)$} & 60 & 8.58 & 10.72 & 16.48 & 26.78 & 41.10 & 56.50 & 72.43 & 85.23 & 93.29 \\ 
\rowcolor{lightyellow}  $\varphi_{\vQ}^\star(ATS)$ & 60 & \bf{4.61} & 6.04 & 9.71 & 16.79 & 28.51 & 42.31 & 59.01 & 74.87 & 86.02 \\ 
\rowcolor{gainsboro}  $\varphi_{\vQ}^\dagger(ATS)$ & 60 & \bf{5.30} & 6.85 & 11.03 & 18.57 & 31.25 & 45.79 & 61.72 & 76.91 & 87.79 \\ 
 \rowcolor{lightyellow} $\varphi_{\vQ}^\star(WTS)$ & 60 & \bf{5.02} & 6.07 & 9.60 & 16.68 & 27.71 & 41.20 & 56.21 & 71.58 & 83.37 \\ 
 \rowcolor{gainsboro} $\varphi_{\vQ}^\dagger(WTS)$ & 60 & \bf{5.35} & 6.37 & 10.36 & 17.56 & 28.27 & 42.51 & 57.01 & 72.59 & 84.42 \\ 
\rowcolor{lightyellow}  $\psi_{ATS}^\star$ & 60 & 3.30 & \bf{4.66} & 8.84 & 18.13 & 32.43 & 49.82 & 67.79 & 82.88 & 92.18 \\ [0.03cm] 
\rowcolor{gainsboro}  $\psi_{ATS}^\dagger$ & 60 & \bf{5.08} & 6.58 & 12.05 & 23.23 & 38.48 & 56.31 & 73.79 & 86.96 & 94.28 \\ [0.05cm] \hline
  \rowcolor{lightyellow}{$\varphi$} & 120 & 8.90 & 12.67 & 24.45 & 45.92 & 68.71 & 86.98 & 96.21 & 99.30 & 99.89 \\ [0.03cm] 
 \rowcolor{gainsboro} {$\varphi^{\star}$} & 120 & \bf{4.94} & 7.49 & 15.89 & 33.69 & 56.98 & 79.63 & 92.66 & 98.38 & 99.77 \\ [0.03cm] 
\rowcolor{lightyellow}  {$\varphi^{\dagger}$} & 120 & \bf{5.12} & 7.60 & 16.45 & 34.54 & 58.50 & 80.37 & 92.98 & 98.48 & 99.77 \\ [0.03cm] 
 \rowcolor{gainsboro} {$\varphi_{\vQ}^{MC}(ATS)$} & 120 & 7.25 & 10.01 & 21.17 & 41.49 & 64.19 & 84.56 & 95.34 & 98.99 & 99.83 \\ 
\rowcolor{lightyellow}  $\varphi_{\vQ}^\star(ATS)$ & 120 & \bf{5.04} & 7.41 & 16.43 & 34.58 & 57.05 & 80.10 & 92.95 & 98.30 & 99.70 \\ 
 \rowcolor{gainsboro} $\varphi_{\vQ}^\dagger(ATS)$ & 120 & \bf{5.42} & 8.09 & 17.14 & 36.28 & 58.50 & 81.42 & 93.66 & 98.48 & 99.73 \\ 
\rowcolor{lightyellow}  $\varphi_{\vQ}^\star(WTS)$ & 120 & \bf{5.00} & 7.51 & 15.48 & 32.68 & 54.66 & 76.93 & 91.11 & 97.72 & 99.60 \\ 
 \rowcolor{gainsboro} $\varphi_{\vQ}^\dagger(WTS)$ & 120 & \bf{5.18} & 7.53 & 16.15 & 33.43 & 55.24 & 77.81 & 91.42 & 97.78 & 99.58 \\ 
\rowcolor{lightyellow}  $\psi_{ATS}^\star$ & 120 & 4.19 & 7.42 & 18.87 & 41.14 & 66.78 & 87.73 & 96.88 & 99.49 & 99.98 \\ [0.03cm] 
  \rowcolor{gainsboro}$\psi_{ATS}^\dagger$ & 120 & \bf{5.20} & 8.75 & 21.49 & 44.94 & 70.11 & 89.41 & 97.45 & 99.61 & 99.98 \\ [0.05cm] 
   \hline
\end{tabular}
\end{scriptsize}
\end{center}
\caption{Power of different tests for $\Hypo^B$ under a shift-alternative for 4 unbalanced groups
with  $n_1=n_2= N/3$ and $n_3=n_4= N/6$.
The 4-dimensional error terms are based on the $t_9$ distribution and covariance matrices $(\vSigma_1)_{j_1 j_2}=(\vSigma_2)_{j_1 j_2}=0.65^{|j_1-j_2|}$ respective  $\vSigma_3=\vSigma_4=\diag(3,4,5,6)+\veins_4\veins_4^\top$.}
\label{SupGSA2}
\end{table}

\subsection{Local}
In \Cref{tab:CSAl1} and \Cref{tab:CSAl2} it can be seen, that the QFMCT with parametric bootstrap regonizes the violated local alternatives a bit worse than the ANOVA-type statistic with the Bonferroni-Holm adjustment. Due to the liberal behaviour, although for the Monte-Carlo approach all values are higher, the QFMCT is only recommendable here for large sample sizes.

  \begin{table}[htbp]
\begin{center}
\begin{scriptsize}
\begin{tabular}{ |l|r|r|r|r|r|r|r|r|r|} \hline
\rowcolor{ashgrey}  & {\hspace{-0.1cm} N\hspace{0.01cm}} &{\hspace{-0.1cm}  $\delta=0.25$\hspace{-0.1cm}}&{\hspace{-0.1cm}  $\delta=0.5$\hspace{-0.1cm}}&{ \hspace{-0.1cm}$\delta=0.75$\hspace{-0.1cm}}&{\hspace{-0.1cm}  $\delta=1.00$\hspace{-0.1cm}}& {\hspace{-0.1cm} $\delta=1.25$\hspace{-0.1cm}}& {\hspace{-0.1cm} $\delta=1.5$\hspace{-0.1cm}}&{ \hspace{-0.1cm} $\delta=1.75$\hspace{-0.1cm}}& {\hspace{-0.1cm} $\delta=2.00$\hspace{-0.1cm}}\\\hline 

 \rowcolor{gainsboro}  {$\varphi_{\vQ}^{MC}(ATS)$}-Any & 30 & 16.90 & 19.79 & 25.17 & 32.42 & 41.48 & 50.68 & 61.58 & 71.19 \\ 
 \rowcolor{lightyellow}  {$\varphi_{\vQ}^\star(ATS)$}-Any & 30 & 6.71 & 8.41 & 10.74 & 14.90 & 21.12 & 27.77 & 35.77 & 44.55 \\ 
 \rowcolor{gainsboro} $\psi_{ATS}^\star(BH)$-Any & 30 & 9.68 & 11.87 & 15.09 & 20.17 & 27.36 & 34.56 & 43.29 & 52.95 \\ 

\rowcolor{lightyellow}   {$\varphi_{\vQ}^{MC}(ATS)$}-Sim & 30 & 0.00 & 0.02 & 0.01 & 0.13 & 0.16 & 0.40 & 0.94 & 2.05 \\ 
  \rowcolor{gainsboro} {$\varphi_{\vQ}^\star(ATS)$}-Sim & 30 & 0.00 & 0.00 & 0.00 & 0.00 & 0.03 & 0.05 & 0.09 & 0.26 \\
  \rowcolor{lightyellow}  $\psi_{ATS}^\star(BH)$-Sim & 30 & 0.00 & 0.02 & 0.01 & 0.04 & 0.12 & 0.25 & 0.50 & 1.18 \\ \hline

  \rowcolor{gainsboro} {$\varphi_{\vQ}^{MC}(ATS)$}-Any & 60 & 11.19 & 16.93 & 25.90 & 40.61 & 55.80 & 71.69 & 84.33 & 92.68 \\ 
\rowcolor{lightyellow}   {$\varphi_{\vQ}^\star(ATS)$}-Any & 60 & 6.31 & 10.21 & 16.36 & 28.25 & 42.22 & 58.41 & 73.74 & 85.80 \\ 
 \rowcolor{gainsboro}  $\psi_{ATS}^\star(BH)$-Any & 60 & 9.97 & 14.44 & 22.10 & 35.21 & 49.93 & 65.05 & 79.26 & 89.84 \\ 

\rowcolor{lightyellow}   {$\varphi_{\vQ}^{MC}(ATS)$}-Sim & 60 & 0.01 & 0.01 & 0.06 & 0.24 & 0.85 & 2.49 & 5.88 & 13.30 \\ 
  \rowcolor{gainsboro} {$\varphi_{\vQ}^\star(ATS)$}-Sim & 60 & 0.00 & 0.00 & 0.01 & 0.07 & 0.30 & 0.83 & 2.32 & 6.41 \\ 
   \rowcolor{lightyellow}  $\psi_{ATS}^\star(BH)$-Sim & 60 & 0.01 & 0.01 & 0.09 & 0.33 & 1.26 & 3.14 & 7.53 & 16.22 \\ \hline

  \rowcolor{gainsboro} {$\varphi_{\vQ}^{MC}(ATS)$}-Any & 120 & 10.39 & 20.73 & 40.35 & 64.67 & 83.84 & 94.90 & 99.00 & 99.87 \\ 
\rowcolor{lightyellow}   {$\varphi_{\vQ}^\star(ATS)$}-Any & 120 & 7.90 & 16.32 & 34.06 & 57.67 & 79.04 & 92.55 & 98.50 & 99.73 \\ 
 \rowcolor{gainsboro}   $\psi_{ATS}^\star(BH)$-Any & 120 & 11.59 & 22.22 & 41.35 & 64.30 & 83.15 & 94.75 & 99.13 & 99.86 \\ 

\rowcolor{lightyellow}   {$\varphi_{\vQ}^{MC}(ATS)$}-Sim & 120 & 0.00 & 0.01 & 0.20 & 1.62 & 6.32 & 17.29 & 36.38 & 58.31 \\ 
 \rowcolor{gainsboro}  {$\varphi_{\vQ}^\star(ATS)$}-Sim & 120 & 0.00 & 0.01 & 0.16 & 0.95 & 4.14 & 13.12 & 29.61 & 50.70 \\ 
  \rowcolor{lightyellow}  $\psi_{ATS}^\star(BH)$-Sim & 120  & 0.01 & 0.11 & 0.56 & 3.12 & 10.95 & 26.86 & 49.97 & 71.81 \\ 
  
 \hline
\end{tabular}
\end{scriptsize}
\end{center}
\caption{Simultaneous power and any power of local tests for $\Hypo^A$ under a shift-alternative for 4 unbalanced groups
with  $n_1=n_2= N/3$ and $n_3=n_4= N/6$.
The 4-dimensional error terms are based on the standard normal distribution and covariance matrices $(\vSigma_1)_{j_1 j_2}=(\vSigma_2)_{j_1 j_2}=0.65^{|j_1-j_2|}$ respective  $\vSigma_3=\vSigma_4=\diag(3,4,5,6)+\veins_4\veins_4^\top$.}
\label{tab:CSAl1}
\end{table}

  \begin{table}[htbp]
\begin{center}
\begin{scriptsize}
\begin{tabular}{ |l|r|r|r|r|r|r|r|r|r|r|} \hline
\rowcolor{ashgrey}  & {\hspace{-0.1cm} N\hspace{0.01cm}} & {\hspace{-0.1cm} $\delta=0$\hspace{-0.1cm}}&{\hspace{-0.1cm}  $\delta=0.25$\hspace{-0.1cm}}&{\hspace{-0.1cm}  $\delta=0.5$\hspace{-0.1cm}}&{ \hspace{-0.1cm}$\delta=0.75$\hspace{-0.1cm}}&{\hspace{-0.1cm}  $\delta=1.00$\hspace{-0.1cm}}& {\hspace{-0.1cm} $\delta=1.25$\hspace{-0.1cm}}& {\hspace{-0.1cm} $\delta=1.5$\hspace{-0.1cm}}&{ \hspace{-0.1cm} $\delta=1.75$\hspace{-0.1cm}}& {\hspace{-0.1cm} $\delta=2.00$\hspace{-0.1cm}}\\\hline 
\rowcolor{lightyellow}$\psi_{ATS}^\star(BH)$-Any & 30 & 8.44 & 8.74 & 10.84 & 14.90 & 20.88 & 27.86 & 35.53 & 45.86 & 56.41 \\ 
 \rowcolor{gainsboro}  {$\varphi_{\vQ}^{MC}(ATS)$}-Any & 30 & 14.53 & 15.64 & 19.02 & 24.64 & 33.37 & 42.77 & 52.40 & 63.78 & 73.38 \\ 
\rowcolor{lightyellow}   {$\varphi_{\vQ}^\star(ATS)$}-Any & 30 & 5.73 & 5.94 & 7.59 & 10.70 & 15.62 & 21.46 & 28.33 & 38.29 & 48.17 \\ 
  \rowcolor{gainsboro}$\psi_{ATS}^\star(BH)$-Sim & 30 & 0.00 & 0.00 & 0.01 & 0.01 & 0.03 & 0.11 & 0.28 & 0.53 & 1.01 \\ 
\rowcolor{lightyellow}   {$\varphi_{\vQ}^{MC}(ATS)$}-Sim & 30 & 0.00 & 0.01 & 0.01 & 0.02 & 0.10 & 0.28 & 0.55 & 1.17 & 2.08 \\ 
 \rowcolor{gainsboro}  {$\varphi_{\vQ}^\star(ATS)$}-Sim & 30 & 0.00 & 0.00 & 0.00 & 0.00 & 0.00 & 0.03 & 0.03 & 0.13 & 0.10 \\ \hline
\rowcolor{lightyellow}  $\psi_{ATS}^\star(BH)$-Any & 60 & 7.56 & 9.34 & 14.06 & 22.64 & 35.38 & 49.84 & 66.13 & 80.36 & 89.92 \\ 
 \rowcolor{gainsboro}  {$\varphi_{\vQ}^{MC}(ATS)$}-Any & 60 & 8.58 & 10.72 & 16.48 & 26.78 & 41.10 & 56.50 & 72.43 & 85.23 & 93.29 \\ 
\rowcolor{lightyellow}   {$\varphi_{\vQ}^\star(ATS)$}-Any & 60 & 4.61 & 6.04 & 9.71 & 16.79 & 28.51 & 42.31 & 59.01 & 74.87 & 86.02 \\ 
\rowcolor{gainsboro}  $\psi_{ATS}^\star(BH)$-Sim & 60 & 0.00 & 0.00 & 0.00 & 0.11 & 0.31 & 1.45 & 3.21 & 8.46 & 16.78 \\ 
\rowcolor{lightyellow}   {$\varphi_{\vQ}^{MC}(ATS)$}-Sim & 60 & 0.00 & 0.00 & 0.00 & 0.07 & 0.23 & 0.96 & 2.22 & 6.56 & 13.62 \\ 
 \rowcolor{gainsboro}  {$\varphi_{\vQ}^\star(ATS)$}-Sim & 60 & 0.00 & 0.00 & 0.00 & 0.01 & 0.01 & 0.32 & 0.73 & 2.76 & 6.39 \\ \hline
\rowcolor{lightyellow}  $\psi_{ATS}^\star(BH)$-Any & 120 & 7.89 & 11.06 & 21.87 & 41.63 & 63.57 & 84.16 & 95.19 & 98.97 & 99.79 \\ 
\rowcolor{gainsboro}   {$\varphi_{\vQ}^{MC}(ATS)$}-Any & 120 & 7.25 & 10.01 & 21.17 & 41.49 & 64.19 & 84.56 & 95.34 & 98.99 & 99.83 \\ 
\rowcolor{lightyellow}   {$\varphi_{\vQ}^\star(ATS)$}-Any & 120 & 5.04 & 7.41 & 16.43 & 34.58 & 57.05 & 80.10 & 92.95 & 98.30 & 99.70 \\ 
\rowcolor{gainsboro}  $\psi_{ATS}^\star(BH)$-Sim & 120 & 0.00 & 0.00 & 0.13 & 0.65 & 3.24 & 11.33 & 27.57 & 49.74 & 70.09 \\ 
\rowcolor{lightyellow}   {$\varphi_{\vQ}^{MC}(ATS)$}-Sim & 120 & 0.00 & 0.00 & 0.03 & 0.30 & 1.59 & 6.31 & 17.75 & 36.28 & 57.16 \\ 
\rowcolor{gainsboro}   {$\varphi_{\vQ}^\star(ATS)$}-Sim & 120 & 0.00 & 0.00 & 0.02 & 0.15 & 0.94 & 4.29 & 12.92 & 29.42 & 49.69 \\ [0.08cm]
 \hline
\end{tabular}
\end{scriptsize}
\end{center}
\caption{Simultaneous power and any power of local tests for $\Hypo^A$ under a shift-alternative for 4 unbalanced groups
with  $n_1=n_2= N/3$ and $n_3=n_4= N/6$.
The 4-dimensional error terms are based on the $t_9$ distribution and covariance matrices $(\vSigma_1)_{j_1 j_2}=(\vSigma_2)_{j_1 j_2}=0.65^{|j_1-j_2|}$ respective  $\vSigma_3=\vSigma_4=\diag(3,4,5,6)+\veins_4\veins_4^\top$.}
\label{tab:CSAl2}
\end{table}

\end{document}